\documentclass[11pt]{article}
\usepackage{geometry}
\usepackage[hidelinks]{hyperref}
\usepackage[dvipsnames]{xcolor}

\usepackage{caption}
\newcommand{\mgvec}{\bm{m}_g}
\newcommand{\msampsize}{m}
\newcommand{\mgs}{m_{gs}}
\newcommand{\mg}{m_{g \cdot }}
\newcommand{\ms}{m_{\cdot s}}

\newcommand{\alphavec}{\bm{\alpha}}
\newcommand{\thetagvec}{\bm{\theta}_g}
\newcommand{\Data}{\bm{\mathcal{D}}}
\newcommand{\gammavec}{\bm{\gamma}}

\newcommand{\popoverall}{N}
\newcommand{\popoverallr}{N_{\cdot r}}
\newcommand{\popoverallg}{N_{g\cdot}}
\newcommand{\propoverallgprime}{N_{g'\cdot}}
\newcommand{\popoverallgs}{N_{gs\cdot}}
\newcommand{\popoverallgr}{N_{g\cdot r}}
\newcommand{\popoverallgsr}{N_{gsr}}

\newcommand{\popoverallstar}{N^*}
\newcommand{\popoverallgstar}{N^*_{g\cdot}}
\newcommand{\popoverallgsstar}{N^*_{gs\cdot}}

\newcommand{\samplednum}{n}
\newcommand{\sampledg}{n_{g\cdot}}
\newcommand{\sampledgr}{n_{g\cdot r}}
\newcommand{\sampledgsr}{n_{gsr}}

\newcommand{\sampledrareg}{n_{g\cdot 1}}
\newcommand{\sampledrare}{n_{\cdot 1}}
\newcommand{\sampledraresg}{n_{gs1}}

\newcommand{\cS}{\mathit{S}}
\newcommand{\cG}{\mathit{G}}

\newcommand{\targetoverall}{\tilde{n}}
\newcommand{\targetg}{\tilde{n}_g}
\newcommand{\targetgi}{\tilde{n}_{G_i}}

\newcommand{\probnormg}{\pi(g)}
\newcommand{\probnormgstar}{\pi^*(g)}
\newcommand{\probnormgi}{\pi(G_i)}
\newcommand{\probnormgistar}{\pi^*(G_i)}

\usepackage{amsmath} 
\usepackage{amssymb}
\usepackage{amsfonts}
\usepackage{bm} 

\usepackage{amsthm}
\newtheorem{prop}{Proposition}[section]

\newtheorem{lemma}[prop]{Lemma}
\newtheorem{proposition}[prop]{Proposition}

\theoremstyle{definition}

\theoremstyle{assumption}

\usepackage{multirow} 
\usepackage{float}

\usepackage{tikz}
\usetikzlibrary{shapes, arrows.meta, positioning, calc} 

\usepackage{subcaption}

\usepackage{physics} 
\usepackage{mathtools} 

\let\Pr\undefined
\DeclareMathOperator{\Pr}{\mathbb{P}}
\newcommand{\indep}{\mathbin{\perp\!\!\!\!\!\:\perp}}

\newcommand{\simiid}{\stackrel{iid}{\sim}}
\newcommand{\simind}{\stackrel{ind}{\sim}}

\newcommand{\ind}{\mathbf{1}}

\DeclareMathOperator{\E}{\mathbb{E}}

\DeclareMathOperator{\Var}{\mathbb{V}}
\DeclareMathOperator{\Cov}{Cov}

\makeatletter
\newcommand{\assumplabel}[2]{%
   \protected@write \@auxout {}{\string\newlabel{#1}{{#2}{\thepage}{#2}{#1}{}}}%
   \hypertarget{#1}{#2}%
}
\makeatother

\usepackage{abstract}
\usepackage{fancyvrb}
\usepackage{listings}
\usepackage{fvextra}
\usepackage{setspace}
\usepackage[normalem]{ulem} 

\usepackage[vlined, ruled]{algorithm2e}

\usepackage[numbers,sort&compress]{natbib}
\setcitestyle{open={(},close={)}} 

\makeatletter
\let\@fnsymbol\@arabic
\makeatother

\title{Improving Minority Population Sampling with BISG Probabilities:\\ Evidence from a Survey of Jewish Americans}

\author{Kyla Chasalow$^{1}$ \and Eitan Hersh$^{2}$ \and Kosuke Imai$^{1,3}$ \and Laura Royden$^{3}$}

\author{
	Kyla Chasalow$^{1}$,
	Eitan Hersh$^{2}$,
	Kosuke Imai$^{\ast1,3}$,
    Laura Royden$^{3}$ 
    \and 
	\small$^{1}$Department of Statistics, Harvard University, 33 Oxford Street
Cambridge, MA 02138, USA.
\and\
	\small$^{2}$Department of Political Science, Tufts University, 4 The Green,
Medford, MA 02155, USA. 
\and\
\small$^{3}$ Department of Government, Harvard University, 1737 Cambridge Street, Cambridge,
MA 02138, USA.
\and\
	\small$^\ast$Corresponding author. Email: imai@harvard.edu\and
}

\begin{document}

\maketitle

\begin{abstract}

Sampling geographically dispersed minority populations poses substantial challenges when individual group membership cannot be directly observed. Although stratified sampling can offer efficiency gains, these gains are typically modest unless the minority population is highly concentrated within a small number of strata. In this paper, we propose using Bayesian Improved Surname Geocoding (BISG) to enhance the efficiency of minority population sampling. BISG generates individual-level probabilities of minority group membership based on names and residential addresses. We incorporate these probabilities into a stratified Poisson probability sampling design. Applying the proposed approach to a national survey of Jewish Americans, we find that our estimates closely align with those from a large-scale Pew Research Center survey of the same population, which relied on a substantially more expensive sampling strategy involving geographic stratification and screening. At a fraction of the cost, our survey reproduces nearly identical patterns observed by Pew, including estimates of religious denominations and participation in specific religious activities.

\end{abstract}
 
\doublespacing

\section{Introduction}

Sampling minority populations---defined by characteristics such as race, ethnicity, or religion---is essential for understanding their attitudes, experiences, and behaviors, which might otherwise be overlooked or misunderstood. While some minority populations are difficult to study because of reluctance to participate, others pose challenges primarily because they are small and lack a readily available sampling frame that directly identifies group membership \citep{LEPKOWSKI1991416, 2014h2s}. As a result, obtaining samples of adequate size and quality through standard probability sampling methods is akin to finding a few needles in a haystack, often requiring large initial samples and extensive screening efforts that incur substantial cost \citep{kalton2009methods}. Without efficient sampling strategies, researchers and organizations seeking to understand or serve these populations face barriers to producing reliable information.

In this paper, we develop an efficient sampling method for settings in which the minority population of interest is present in a sampling frame but cannot be individually identified. Our approach leverages Bayesian Improved Surname Geocoding (BISG), a widely used technique for estimating the probability that an individual belongs to a minority group based on their surname and geographic location \citep[e.g.,][]{elliott2009using, imai:khan:16, Imai_etal_2022, rosenman2023race, mccartan_etal_2023, greengard2025calibrated}. We show that these BISG probabilities can be incorporated into a stratified Poisson probability sampling design, yielding substantial efficiency gains for sampling minority populations.  Importantly, the proposed methodology remains valid even if BISG probabilities are biased.   

We apply our methodology to efficiently sample from the U.S. adult Jewish population, which has been of interest to both researchers \citep[e.g.,][]{waldmartinez01} and Jewish organizations \citep[e.g.,][]{aronson2016all, cohen2016deficient}.  Several challenges make sampling from this population particularly difficult.  First, the Jewish population constitutes only about 2.4\% of American adults \citep{pew_jewish_population_2021}. Second, no government agency identifies households by Jewish ethnicity or religion. Unlike for some racial and ethnic groups (e.g., Black, Hispanic, and Asian populations), the Census Bureau does not publish data on the geographic distribution or surname frequencies of Jewish Americans. This makes it impossible to compute BISG probabilities from Census data alone.  Third, although Jewish Americans are more geographically concentrated in certain states and cities than in others \citep{saxe21}, the population remains relatively small and spatially dispersed even within those areas. As a result, stratified sampling based on geography alone offers only limited gains in efficiently identifying Jewish respondents.

Our approach combines existing estimates of the geographic distribution of the minority group with overall name data from the U.S. voter file and a novel source of Jewish name data from Jewish obituaries. Using these data, which are described in Section~\ref{sec-sampling-Jewish}, we estimate the probability that each individual in the U.S. voter file is Jewish. In Section~\ref{sec-sampling}, we first propose a general sampling methodology that incorporates BISG probabilities into Poisson sampling. Standard approaches to BISG probability estimation may be used here when applicable, but in Section~\ref{sec-est-bisg}, we present a method for estimating the BISG probabilities when geo-located surnames are available as in our obituary data. We emphasize throughout that although cleaner data and more realistic assumptions will yield better BISG estimates and more efficient sampling, using a probability sampling procedure guarantees the unbiasedness of downstream estimation even when BISG probabilities are biased -- or at least, it does so assuming no non-response. In Section~\ref{sec-estimation}, we briefly review how to analyze the final survey and discuss methods for lessening the impact of non-response bias in this setting.

We validate our methodology by comparing our survey estimates with the corresponding estimates based on the 2020 national survey of Jewish Americans conducted by the Pew Research Center \citep{pewjewish2020}, which is widely considered a gold standard \citep{weisberg19}. We find that our approach substantially improves the sampling efficiency of the U.S. Jewish population while producing similar estimates, including those of religious denominations and participation in religious activities. For a fraction of Pew's costs, we obtain a survey in which 57\% of respondents self-identified as Jewish. 

\subsection*{Related literature}\label{sec-related-lit}

Our approach is closely related to recent applied work on predicting sample eligibility to improve sampling efficiency  \citep{harter2016quality,  ridenhour2017propensity, harterpredictive,seeskin2022address,  McPhee2022PredictiveModeling, Jennetti_2023}. However, many of these approaches form binary predictions by thresholding probabilities so that they can be easily used as part of standard stratified sampling.  In contrast, we develop stratified Poisson sampling that directly uses BISG probabilities without arbitrarily thresholding them. 

The work most similar to ours is \cite{elliott2013using}, which samples Hispanic couples in Los Angeles using probabilities proportionate to census-based BISG probabilities. However, because their sampling is limited to a few zip codes, they do not combine this with additional geographic allocation as we do when sampling across the entire country. As recognized in \cite{elliott2013using}, the efficiency gain of their method is limited by their sampling frame already having a high concentration of Hispanics. Our work builds on their results to show that BISG-based sampling can also yield major efficiency gains when sampling a much rarer population from a larger sampling frame. Moreover, we show how to estimate BISG probabilities in a setting where census data are inapplicable and labeled training data containing both minority and non-minority population members are unavailable.

A separate methodological literature focuses on sampling minority populations with screening designs and disproportionate stratification \citep[e.g.,][]{kalton1986sampling, waksberg1997, cervantes2007, kalton2009methods, clark_etal_2009, Kalton_2014, chen2015geographic}. In screening approaches, a large pool of potential respondents is first asked a few questions to determine minority group membership; only those who self-identify as belonging to the target population are then given the full survey. The primary drawback of this strategy is its high cost, as extensive screening is often required to identify a relatively small number of eligible respondents. 

Disproportionate stratification oversamples strata (often geographic) known to contain higher concentrations of minority group members, possibly while also accounting for differences in screening costs across strata. However, this approach typically yields modest efficiency gains unless the minority population is both highly concentrated in a small number of strata {\it and} constitutes a substantial share of those strata. Our approach can incorporate both screening and geographic targeting, but uses more individualized probabilities to achieve greater efficiency gains.

The potential of using surnames for sampling minority populations has long been recognized \citep{lauderdale2000asian, kalton2009methods} and applied in a variety of contexts \citep[e.g.,][]{HueyHuey1990, cervantes2007, kim2013surname, schnell2013new}, including studies of Jewish Americans \citep[e.g.,][]{himmelfarb83, kosmin85, elia24}. However, these applications have typically relied on relatively small lists of highly distinctive surnames to restrict the sampling frame, a strategy that can introduce selection bias. In contrast, we use data derived from obituaries that reference Jewish funeral homes or burial sites, capturing a substantially broader set of both first and last names and thereby mitigating some of the limitations associated with reliance on small lists of highly distinctive surnames.

In its study of Jewish Americans, the Pew Research Center has conducted the most comprehensive surveys to date, notably in 2013 and 2020. Pew's methodology relies on disproportionate stratification. Specifically, Pew identifies geographic areas in which Jewish Americans are likely to be concentrated using Brandeis University’s American Jewish Population Project (AJPP) \citep{saxe21} along with additional data sources. Pew then conducts extensive screening within these areas to identify individuals who self-identify as ethnically, culturally, or religiously Jewish.

As noted above, the primary limitation of this approach is its high cost. In 2020, Pew screened more than 300,000 individuals, initially paying respondents \$2 to complete screening questions and subsequently offering Jewish-identifying respondents incentives ranging from \$10 to \$50 to participate in the main survey. This process led to the final sample of 4{,}718 respondents who identified as Jewish \citep{pewjewish2020}. Although Pew's studies serve as benchmarks that are highly valued by researchers and organizations seeking to understand the Jewish American public, the substantial resources required make these methods prohibitively expensive for other researchers and organizations. Our goal is to develop a more efficient sampling approach.

\section{Materials and Methods}
\subsection{Data Soucres}
\label{sec-sampling-Jewish}

Our proposed methodology estimates BISG probabilities by combining multiple data sources in settings where a single labeled dataset distinguishing minority and non-minority populations is unavailable. Our approach requires three types of data: 1. the name distribution of the minority population, preferably disaggregated by geography; 2. the geographic distribution of the minority population; and 3. the name distribution of the larger population to sample from. Below, we describe how we assembled these data sources for our empirical application.

\subsubsection{Name distribution through obituaries}

For some minority populations, the distribution of names is readily available.  For example, in the United States, the Census Bureau makes the distribution of common surnames available for several racial and ethnic groups, including Whites, Blacks, Hispanics, and Asians.  Other common data sources include voter files \citep{rosenman2023race} and the Loan Application Registers \citep{tzioumis2018demographic}.  For Jewish Americans such data cannot be easily obtained.

We overcome this difficulty by collecting obituary data from Jewish funeral homes. 
Death rituals vary by religious and cultural communities \citep{walter05}, and Jewish Americans have long used distinctive funeral homes and burial grounds to accommodate specific religious practices \citep{amanik17}. While not all Jewish American families use Jewish-specific funeral homes, funeral homes still segment the population into a group likely to be Jewish and a group that contains a mix of Jewish and non-Jewish people. Obituaries are publicly accessible documents that usually include information about names and birth dates. Importantly, for our purposes, the names of funeral homes are typically also listed in publicly accessible obituaries. This allowed us to identify obituaries from Jewish funeral homes by assembling a list of Jewish funeral homes across the United States and finding all obituaries linked to them.  This provides information about which names are rare or common in at least a broad subset of Jewish Americans. 

After manually assembling a list of 109 Jewish funeral homes across the United States (see Appendix~\ref{appendix-data-detail} for details), we contracted with PBI Research Services, a commercial provider that compiles obituary records. We queried the PBI database for obituaries of individuals who died between January~1, 2000 and December~31, 2023 and in which one of the Jewish funeral homes was listed. This yielded 232,585 obituaries. For each record, we extracted the name of the deceased, state of residence (at time of death), birth date (when available), death date, and the full biographical text of the obituary. The geographic label for each surname is useful for distinguishing subgroups within the Jewish American population, such as Persian Jews in California, Russian Jews in New York, and Syrian Jews in New York and New Jersey.

Our obituary data do not capture all deaths among Jewish American adults. According to estimates from Pew, in 2021 there were approximately 7.5 million Jewish Americans, of whom about 5.8 million were adults, representing 2.4\% of the U.S. population \citep{pew_jewish_population_2021}. Assuming a similar population size over the 2000--2023 period and that Jewish people die at roughly the same rate as the overall population (so that 2.4\% of the roughly 3 million annual deaths in the United States \citep{CDC_DeathsMortality_FastStats} are Jewish), a back-of-the-envelope calculation suggests that roughly 1.6 million Jewish Americans died during this time frame. Under these assumptions, our obituary data cover approximately 15\% of all Jewish deaths over this period.

There are several potential sources of incomplete coverage in our obituary data. First, not all obituaries for individuals who used Jewish funeral homes explicitly list the funeral home by name. Second, obituaries are not published for all deaths. Third, PBI’s data collection process may not capture the full universe of published obituaries. 

Finally, not all individuals who identify as Jewish use Jewish funeral homes, and this choice may be correlated with religiosity or the strength of Jewish identity. At the same time, many Jews who are not religious still choose Jewish funeral homes. For example, our obituary collection includes Jewish individuals whom their families describe as secular or engaging in practices such as eating lobster, a non-kosher food.  Moreover, because surnames are relatively stable across generations, individuals who do not themselves use Jewish funeral homes may still be represented in our data through relatives who do. In addition, although rising rates of intermarriage with non-Jewish spouses (including slightly higher rates among Jewish women) have increased the prevalence of surnames that are not historically associated with Jews, those who continue to use Jewish funeral homes remain represented in our data \citep{cooperman2013intermarry}. Indeed, our obituary data include over 50,000 surnames, many of which are not distinctively Jewish. 

The extent to which these and other selection factors bias our estimated BISG probabilities is unknown. In Appendix~\ref{sec-validation}, we present a series of validation checks to assess the plausibility of our data. We show that the state-by-state distribution of obituaries closely mirrors the known geographic distribution of Jewish Americans and that most of the surnames listed on Wikipedia as being of Jewish origin appear in the data and at relatively high frequencies. Our survey results, reported in Section~\ref{sec-empiricalresults}, also indicate that we reached Jewish people of varying denominations and degrees of religious practice. 

Lastly, our obituary data also provides a list of first names of the deceased and their surviving relatives.  Specifically, we used OpenAI's ChatGPT to extract the first names of family members listed in obituaries (see Appendix~\ref{appendix-data-detail} for details).  This yields 2,216,036 first names representing 35,965 unique names. By incorporating the first names of family members, who are typically much younger than the deceased (the mean age at death is 82), we can partially account for generational shifts in first names while also mitigating the possibility that children of Jewish intermarriages bear non-Jewish surnames \citep{Twenge_etal_2010}. Appendix~\ref{sec-validation} provides evidence that the dataset of extracted first names improves recovery of distinctive and current Jewish first names.

\subsubsection{Geographic distribution from Brandeis's AJPP Project}

Obtaining reliable estimates of the geographic distribution of the minority population is also challenging for some groups, including Jewish Americans. For instance, the Census Bureau provides population counts at fine geographic levels only for the major racial and ethnic groups noted above, and does not report comparable statistics for religious populations. Following prior work by Pew, we use estimates from Brandeis University’s American Jewish Population Project (AJPP) to characterize the state-level population size of Jewish Americans \citep{tighe21}.

The AJPP synthesizes data from a large number of existing surveys, in which some respondents self-identify as Jewish. Using respondents' geographic and demographic information, Brandeis produces estimates of the number of Jewish individuals and Jewish adults at both the state and county levels. Because our obituary data are aggregated at the state level, we use the AJPP's state-level estimates in our analysis. 

These data show that the Jewish population in the United States is highly geographically concentrated: approximately half of all Jewish Americans reside in just four states—New York, California, Florida, and New Jersey—while many states have Jewish population shares below 1\%. Even in the state with the highest prevalence, New York, Jewish Americans comprise only about 7\% of the population (see Appendix Figure~\ref{fig-AJPPjewpropestimates}). Thus, despite geographic clustering, Jewish Americans remain a relatively rare population in every state.

\subsubsection{The Voter File as the target distribution}\label{sec-VF}

BISG probabilities also require information about the baseline prevalence of names, ideally by geography. For example, a first name such as Robert or a surname such as Smith being highly prevalent in our obituary data need not mean Robert is particularly indicative of being Jewish. What matters is the relative prevalence.  Even a fairly rare name such as Avi can become indicative of being Jewish if it is less rare in our data than in the general population. 

We use the most recent U.S. voter file from the vendor L2, Inc. as both our sampling frame and the source of this baseline information. L2 is a leading national non-partisan firm and the oldest organization in the United States that supplies voter data and related technology to candidates, political parties, pollsters, and consultants for use in campaigns. L2 combines public state-level lists of registered voters with other commercial data and modeling to give insights on voters. Since the population of voters is not equivalent to the population of all U.S. adults, our survey is best viewed as targeting the population of Jewish American adult registered voters. Nevertheless, our methodology is not specific to voter files and would apply equally to any sampling frame augmented to include non-registered adults.

\subsection{Sampling with BISG probabilities}\label{sec-sampling}

In this section, we introduce the setup and discuss sampling methods. We first explain why standard disproportionate stratified sampling is too inefficient for sampling minority populations such as Jewish Americans. We then propose an alternative Poisson sampling method based on BISG probabilities. We show how to optimally allocate stratum-level target sample sizes using BISG probabilities. Throughout this section, we assume that BISG probabilities are known (we discuss how to estimate them in Section~\ref{sec-est-bisg}). Finally, we present the expected success rates for sampling Jewish Americans based on different sampling methods.

\subsubsection{Setup}

Our goal is to sample from a {\it sampling frame} that contains $N$ individuals, where surname $S_i$, geographic stratum $G_i$, and contact information are available for each individual $i$. In our application, this is the voter file. Let $R_i$ be an unobserved indicator variable where $R_i=1$ means that the individual is in the minority population and $R_i=0$ otherwise. For each stratum $g=1,\ldots,L$, there are $\popoverallgsr$ individuals with minority membership status $R_i = r$ and surname $S_i=s$ in the sampling frame such that $\popoverall = \sum_{g=1}^{L}\sum_{s=1}^{|\cS|} \sum_{r=0}^1 \popoverallgsr$ (see Supplement \ref{sec-notation} for a summary of all notation used in the paper).  The challenge is that the sampling frame does not contain $R_i$ labels.
  
We will assume that information about the geographic distribution of the minority population is available from a separate source such as the AJPP data for Jewish sampling. Throughout, we will treat both $\Pr(G = g\mid R=1)$ and $\Pr(R=1\mid G = g)$ as known population probabilities. Though in practice, these will be estimated, if $G$ is fairly low dimensional as in our application to U.S. states, they are less likely to be the primary source of estimation uncertainty compared to surname distribution estimation. 

For the surname data set, suppose that we have a separate i.i.d. sample of $\msampsize$ individuals from the minority group that includes their surnames and geographic locations $\{S_j,G_j\}_{j=1}^{\msampsize}$. We let $\ms, \mg$, and $\mgs$ be the sample counts by surname $s$, geography $g$, and both, respectively, i.e., $\ms:=\sum_{j=1}^{\msampsize} \ind\{S_j = s\}$, $\mg:=\sum_{j=1}^{\msampsize} \ind\{G_j = g\}$, and $\mgs:=\sum_{j=1}^{\msampsize}\ind\{G_j = g, S_j = s\}$.  In general, this sample is not directly linkable to the sampling frame. For example, in our application, the obituary data contains deceased people who should no longer be in the voter file. 

The survey will contain a subset of the sampling frame of size $\samplednum=\sum_{g=1}^L \sampledg$ where $\sampledg$ respondents are sampled within each stratum $g$. Let $I_i$ denote the sampling indicator. The number of individuals sampled from group $R=1$ is $\sampledrare:=\sum_{i=1}^{\popoverall}I_i\ind\{R_i=1\}$, though the number who actually respond may be lower. Similarly, $\sampledrareg:=\sum_{i=1}^{\popoverall}I_i\ind\{G_i=g, R_i=1\}$ is the number with $R=1$ sampled from stratum $g$ and $\sampledraresg$ is the number  with $R=1$ sampled from stratum $g$ with surname $s$.

The survey contains a set of questions that measures $R_i$ as well as $p$ variables of interest $Y_i = (Y_{i1},...,Y_{ip})$ for each individual. Our goal is to estimate descriptive characteristics such as means or correlations of the distribution $\Pr(Y\mid R=1)$ in the minority population. We may also import $S$, $G$, and other variables from the sampling frame and include them in $Y$. For example, in our survey, we asked some of the same questions as the 2020 Pew survey, and could import variables such as age and turnout from the voter file.

\subsubsection{Sampling Methods}\label{sec-samplemethod}

We next discuss sampling methods. We first introduce the standard disproportionate stratified sampling and then propose our Poisson sampling method based on BISG probabilities.

The standard disproportionate stratified sampling method for rare population sampling chooses target stratum sample sizes $\sampledg$ that sum to a fixed target size $n$ and minimize the approximate variance of the stratified estimator for some outcome $Y$.
Typically, the variance is assumed to be homogeneous within each stratum \citep[e.g.,][]{kalton1986sampling, UN1993}.  The stratified estimator of the outcome mean for the minority population, i.e., $\E[Y \mid R = 1]$, is given by the following weighted average,
\begin{equation}\label{eq-post-strat}
\hat{\mu}_{\text{strat}} = \sum_{g=1}^L \overline{Y}_{g1}\Pr(G = g\mid R=1), \quad \text{where } \ \overline{Y}_{g1} := \frac{1}{\sampledrareg}\sum_{i=1}^{\popoverall}\ind\{G_i=g, I_i=1,R_i=1\}Y_i.
\end{equation} 

Assuming constant variance and screening costs across strata, we obtain the following classic formula for sample size allocation, which shows that it is desirable to sample more individuals from strata with a larger number of minority members, 
\begin{equation}\label{eq-optimal_ng}
       \sampledg \ = \ \frac{n\sqrt{\Pr(R=1\mid G = g)}\popoverallg}{\sum_{g'=1}^L \sqrt{\Pr(R=1\mid G = g')}\propoverallgprime} 
\end{equation}
For completeness, we include the derivation of this well-known result in Appendix~\ref{sec-opt}. See \cite{kalton1986sampling} and \cite{UN1993} for sample size calculations that relax some of these assumptions.

In practice, however, the efficiency gains from disproportionate stratified sampling are typically modest unless the minority population of interest is geographically highly concentrated in relatively small areas. Moreover, incorporating additional information from surnames (e.g., our obituary data) into this sampling framework is not straightforward. Simply replacing $\Pr(R=1 \mid G = g)$ in Equation~\eqref{eq-optimal_ng} with BISG probabilities $\Pr(R=1 \mid S = s, G = g)$ is infeasible, because the high dimensionality of the surname variable leads to strata that are too small to support effective sampling.  

An alternative that overcomes these difficulties is stratified Poisson sampling \citep{hajek1964asymptotic, TilleYves2006Sa}. This involves independently sampling each individual with some probability $\pi_i$ based on the information about both the unit and its stratum. 
The goal of Poisson sampling is to improve upon disproportionate stratified sampling by increasing the expected number of sampled minority population members, $\sampledrare$. In particular, we seek a design under which the probability that a sampled unit from stratum $g$ belongs to the minority population exceeds the baseline stratum proportion, i.e., $\Pr(R=1\mid G=g,I=1) \geq \Pr(R=1 \mid G = g)$.

Suppose that we wish to target an overall sample size of $\targetoverall$ with stratum-specific target sizes $\targetg$ such that $\sum_{g=1}^{L}\targetg=\targetoverall$. Given the stratum-level target sample size, we set the sampling probability of each unit $\pi_i$ proportional to the BISG probability, i.e., to $\Pr(R_i=1\mid S_i,G_i)$, which is estimated using the method discussed in Section~\ref{sec-est-bisg}.  We normalize the individual sampling probability to ensure that the expected sample size $\E[\sampledg]$ from each geography $g$ is $\targetg$, i.e.,
\begin{equation}\label{eq-sampprob}
    \pi_i = \frac{\targetgi\Pr(R_i=1\mid S_i, G_i)}{\probnormgi} \quad \quad \text{ where } \quad  \probnormg = \sum_{i:G_i=g} \Pr(R_i=1\mid S_i,G_i)
\end{equation}

Under this Poisson sampling scheme, we can show that 
\begin{equation}\label{eq-poissonsuccessrate}
    \Pr(R=1\mid I=1, G = g)= \frac{\sum_{s=1}^{|\cS|} \popoverallgs\Pr(R=1\mid S = s, G = g)^2}{\sum_{s=1}^{|\cS|}\popoverallgs\Pr(R=1\mid S = s, G = g)}
\end{equation}
which is larger when $\Pr(R=1\mid S = s, G = g)$ is larger for many surnames within a given stratum (see Proposition \ref{prop-pois-counts} in Appendix~\ref{sec-opt-pois} for a formal statement). Notably, this probability, which is the expected per-stratum success rate, does not depend on the choice of $\targetg$. However, the total number of minority population members ultimately sampled does depend on $\targetg$. In particular, the sampling design should allocate more units to strata with larger minority populations or higher success probabilities.

Next, we discuss how to allocate the target sample size across strata. Like the case of disproportionate allocation above, we minimize the variance of the stratified estimator $\hat{\mu}_{\mathrm{strat}}$. Under some simplifying assumptions also made in the classic stratified sampling calculation, the optimal allocation is given by, 
\begin{equation}\label{eq-Tg_opt_poststrat}
    \frac{\targetg}{\targetoverall} \propto  \frac{\Pr(G=g\mid R=1)}{\sqrt{\Pr(R=1\mid I=1, G = g)}}
\end{equation}
Appendix~\ref{app-sampling} provides a formal statement (Proposition~\ref{prop-Tg-opt}) and derivation. Finally, to prevent $\pi_i$ from exceeding $1$, we require the following restriction for each $g$,
\begin{equation}\label{eq-upper}
    \targetg \leq \frac{\probnormg}{\max_{\{i: G_i = g\}}\Pr(R_i=1\mid G_i=g,S_i)}.
\end{equation}
We can easily satisfy this constraint so long as $\targetg$ is chosen in a reasonable way that reflects the  $\Pr(R_i=1\mid S_i,G_i)$ values and stratum sizes (e.g., $\targetg\leq \probnormg$). 

The expression for $\Pr(R=1\mid G=g,I=1)$ in Equation~\eqref{eq-poissonsuccessrate} indicates that the allocation formula in Equation~\eqref{eq-Tg_opt_poststrat} relies on the estimated BISG probabilities $\Pr(R=1\mid S = s, G = g)$, but because it involves summing many of them, we expect the estimation of $\targetg$ to be stable. Although it may seem counterintuitive that $\Pr(R=1 \mid I=1, G=g)$ appears in the denominator, this formulation prevents strata with large (or small) BISG probabilities from being assigned excessively high (or low) sampling probabilities. See Appendix~\ref{app-sampling} for a detailed discussion.

In practice, this allocation formula under Poisson sampling gives results that are strongly correlated with those obtained under stratified sampling.
Although the former incorporates surnames, they give similar results unless the degree to which surnames are informative about within-strata minority population differs across strata. More formally, these formula give the identical allocation results if $\Pr(R=1\mid I=1, G=g)=c\Pr(R=1\mid G=g)$ for some constant scaling factor $c\in [0,1]$ over all $g$.
In our application, the two are strongly correlated.

The variances of the final per-stratum sample size and overall sample size are given by,
\begin{equation}\label{eq-sampesizevar}
\Var(\sampledg) = \targetg - \sum_{i:G_i=g}\pi_i^2,
\quad\text{and}\quad  \Var(n) =\targetoverall - \sum_{i=1}^{\popoverall}\pi_i^2
\end{equation}
In particular, a high variance of total sample size $n$ can be undesirable, as it creates uncertainty about the statistical precision and costs of the survey.  The above formulas imply that the stratum-specific standard deviation of $\targetg$ can be on the order of $\sqrt{\targetg}$ and the overall standard deviation on the order of $\sqrt{\targetoverall}$. In Supplement \ref{sample-size-variance}, we also derive the slightly more complicated variance formulas of $\sampledrareg$ and $\sampledrare$ under the Poisson sampling. We recommend that researchers calculate these variances beforehand and evaluate whether they are acceptable for their application. If not, see e.g., \cite{Brewer1983} for adjustments to Poisson sampling that ensure a fixed sampling size, albeit at the cost of more complicated sampling probability calculations.

A final consideration in designing the sampling scheme is whether to first filter out from the sampling frame some surnames that are highly unlikely to belong to the minority population. As mentioned in Section~\ref{sec-related-lit}, existing surname-based sampling approaches often restrict to small lists of distinguishing surnames. Under our Poisson sampling scheme, we may use a much more conservative filtering approach to further increase sampling efficiency. We caution that filtering may introduce bias in downstream analysis if it removes minority members who are systematically different from those not removed. In Appendix~\ref{app-filtering-pois}, we discuss how to account for such filtering in the Poisson sampling probabilities and allocation calculations. To allow comparison to our proposed method, we also briefly show in Appendix~\ref{app-filtering-alone} how to adjust the stratified sampling allocation under a best-case scenario where filtering removes no minority population members.

\subsubsection{Sample size calculations for the application}\label{sec-sampsizeapp}

In our application, we filtered out surnames that are not present the obituary data, leaving about 120 million people or 56\% of the voter file in the sampling frame. We reason that because we have a large pool of surnames, we expect a much lower bias due to filtering than common approaches that use small curated lists of very distinctively Jewish surnames. For surnames in the obituary data, we estimated the BISG probabilities $\Pr(R=1\mid S=s,G=g)$ using the methods discussed in Section~\ref{sec-est-bisg} and then calculated the post-filtering versions of the allocation formulas for a target overall sample size of $\targetoverall=50,000$.

\begin{table}[!t]
   \centering
   \renewcommand{\arraystretch}{1.1}
  \begin{tabular}{lrl}
\hline
\hline 
\textbf{Sampling method} & Estimate & (std. dev.)
 \\ 
\hline
    \hline 
    Proposed BISG Poisson sampling  with filtering adjustment & $\ \ \ 58.6\%$& $(0.43\%)\ \ $ \\ 
     Simple random sampling     & $1.9\ \ $ & $(0.06)$ \\
     Stratified sampling\\
    \quad by size of Jewish pop. in each state &  $2.7\ \ $ &  $(0.07)\ \ $ \\
    \quad by size of Jewish pop. in each state with filtering adjustment &  $3.6\ \ $ & $(0.08)\ \ $\\
    \hline
    \hline 
   \end{tabular}
   \caption{Estimated proportion of Jewish respondents under different sampling methods. Point estimates and estimated standard deviations are for the fraction of Jewish people sampled given a target sample size of $50,000$. Filtering adjustment refers to removing surnames that do not appear in the obituary data. See Supplement \ref{sec-tabledetails} for details.} 
   \label{tab-samplingsuccerates}
\end{table}

Table~\ref{tab-samplingsuccerates} provides the estimated proportion of Jewish respondents under different sampling methods.  Under the proposed BISG Poisson sampling with filtering, the estimated success rate exceeds 50\% with a small standard deviation.  In contrast, the stratified sampling by geography using the AJPP data according to Equation~\eqref{eq-optimal_ng} is much less effective even with filtering adjustment, leading to success rates of only a few percentage points. 

 We also conduct a sensitivity analysis described in Supplement \ref{sec-tabledetails} to examine whether the expected success rate of Jewish people sampled would have differed if we had not filtered out the about 94 million people with surnames not observed in the obituary data. We find that if, instead of filtering, we had set these people to have a tiny positive probability of being sampled, this does not make a difference to expected yield. However, the yield can decline if we sample people with unobserved surnames with a too-high probability.

\begin{figure}[!t]
    \centering
    \includegraphics[width=\linewidth]{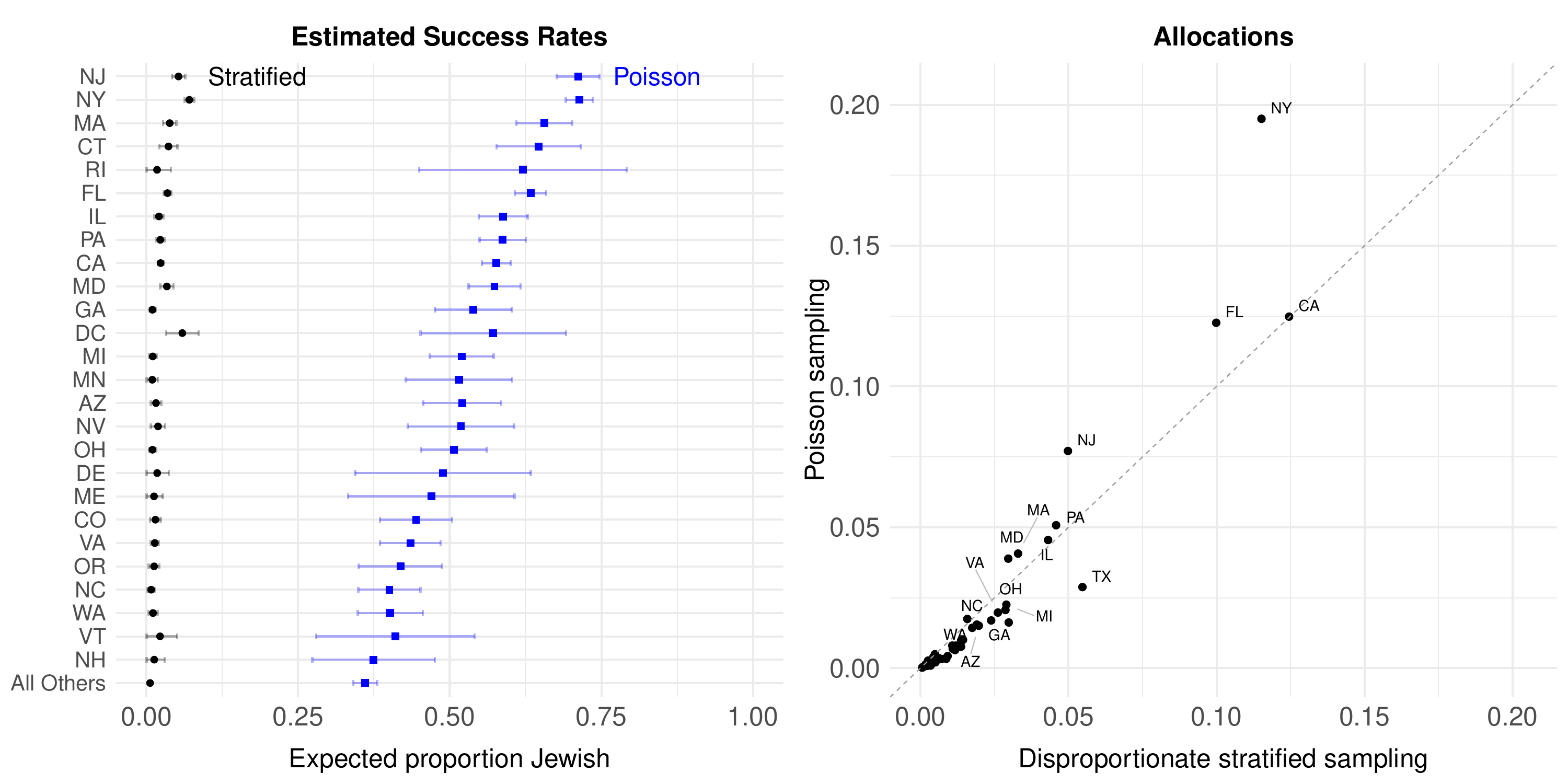}
    \caption{Sampling allocations and success rates by state. Left: estimates of $\sampledrareg/\sampledg$ by state with $\pm 2$SD bars. For Poisson sampling, these use our BISG estimates. See Supplement \ref{sec-tabledetails} for details. Right: a comparison of allocations under equation~\eqref{eq-optimal_ng} and allocations under equation~\eqref{eq-Tg_opt_poststrat}, both adjusting for filtering.}
    \label{fig-yield_and_allocation}
\end{figure}

Figure~\ref{fig-yield_and_allocation} shows the allocation and estimated expected gain patterns by state. In the left plot, we find that for every state, the proposed method yields a much larger gain in the expected proportion of Jewish people sampled. Although the standard errors of the success proportion are greater for the proposed Poisson sampling than for stratified sampling, our method is highly likely to provide a meaningful improvement even for states where sampling the minority population is difficult. The right plot shows that the state-level allocation formulas under Poisson and stratified sampling are similar, with the Poisson sampling favoring New York more. This suggests that gains observed in the left plot are mainly coming from better targeting within each state rather than from improved allocation across states.

Under our Poisson sampling, the standard deviation of the overall sample size $n$ was approximately $223$, while the estimated standard deviation of the total number of minority members sampled was $170$. The state-specific standard deviations for the sample size ranged from 3 (North Dakota) to 98 (New York), while the estimated standard deviation of minority members sampled per state ranged from 1 (North Dakota) to 83 (New York). We decided that this level of variability in sample size was acceptable so that further adjustments to the sampling procedure to reduce this variance were not worth the complications for analysis. Table~\ref{statetab} in Appendix~\ref{app-response-rate} includes the allocations across states, which as expected, favor states like New York, California, Florida, and New Jersey. Appendix~\ref{app-samplingproc-summary} gives a summary of our final sampling probabilities.

\subsection{Estimating BISG Probabilities}\label{sec-est-bisg}

In this section, we present a Bayesian hierarchical model for estimating BISG probabilities when geographically-coded surnames are available as in the Jewish obituary data. We also show how to incorporate the first name data extracted from the obituary texts. We emphasize that, depending on the available information in one's application, different methods may be used to estimate BISG probabilities. These include standard BISG estimation methods that leverage conditional independence assumptions about geography and surname when geo-located surname data are unavailable \citep[e.g.,][]{elliott2009using, imai:khan:16, Imai_etal_2022, rosenman2023race}. The power of probability sampling is that even if estimated BISG probabilities have error, valid downstream analysis is possible based on known sampling probabilities. This provides a protection against estimation error. Inaccuracy may hurt efficiency by lowering the fraction of minority population members sampled, but it does not undermine the validity of downstream analysis so long as sampling probabilities are positive for all members of the minority population.  

\subsubsection{Learning BISG probabilities from geo-located surname frequency data}\label{sec-learning-surname-dist}

Since the joint distribution $\Pr(R,S,G)$ is typically unavailable, BISG probabilities are commonly estimated using Bayes' rule,
\begin{equation}\label{eq-bayes-rule}
   \Pr(R=1\mid S,G) = \frac{\Pr(S\mid G,R=1)\Pr(R=1\mid G)}{\Pr(S\mid G)}
\end{equation}
where $\Pr(R = 1 \mid G)$ is readily available from the external data (typically the Census data but the AJPP data from Brandeis in our case) and the denominator can be directly computed from the sampling frame.  Unfortunately, the conditional probability $\Pr(S\mid G,R=1)$ is difficult to obtain. Researchers often estimate this probability by assuming that surname and geography are independent given group membership ($S\indep G\mid R=1$). In our application, the Jewish obituary data provide an opportunity to learn geography-specific surname distributions. 

We could simply estimate the conditional probability $\Pr(S=s\mid G=g,R=1)$ using the corresponding sample proportion in the obituary data, i.e., $\mgs/\mg$.  However, these proportions are based on small counts and hence are highly variable. To mitigate this high variance problem, we use a hierarchical Bayesian model that shrinks location-specific estimates towards the overall surname distribution in a data-driven way.

Concretely, let $\mgvec := (m_{g1},...,m_{g|\mathcal{S}|})^\top$ be the observed vector of surname counts for geography unit $g$. We fit the following hierarchical Bayesian Multinomial-Dirichlet model.
\begin{align}
\mgvec &\sim \text{Multinomial}(\mg, \thetagvec), &\text{ where  }& \theta_{gs}:=\Pr(S=s\mid G=g,R=1) \text{ for } s=1,...,|\cS|\\
\thetagvec &\overset{iid}{\sim} \text{Dirichlet}(\eta\alphavec) & \text{ where  } & \alphavec \in \mathbb{R}^{|\cS|} \text{ with } \sum_{s=1}^{|\cS|}\alpha_s = 1 \text{ and } \eta > 0\\ 
\alphavec &\sim \text{Dirichlet}(\gammavec)&\text{ where } & \gammavec \in \mathbb{R}_{+}^{|\cS|}\\
\eta &\sim  \pi & \text{ where } & \pi \text{ is a prior distribution}
\end{align}
In this model, $\alphavec$ characterizes the population distribution of surnames and $\gammavec$ is a length $|\mathcal{S}|$ hyperparameter, which, in line with empirical Bayes, we set to $\gamma_s=\ms+1$. For the prior on $\eta$, we used a diffuse gamma distribution with mean $100$ and standard deviation $100$, which gave reasonable results in simulations.

There will usually be surnames in the sampling frame with no training data ($\ms=0$). One can either assign them a small positive prior probability using $\gammavec$ and thereby avoid 0 estimates \citep{Imai_etal_2022} or exclude them 
entirely so that they are assigned $\Pr(R=1\mid S=s,G=g)=0$ and filtered out in the sampling. We filtered them out. As mentioned in Section~\ref{sec-sampling}, we expect our list of surnames is expansive enough to avoid major selection bias in any minority members that this removes from the sampling frame. Fitting the hierarchical model over thousands of additional surnames would also have increased computational burden.

Under the model, the conditional posterior mean of $\theta_{gs}$ given parameters $\gammavec,\alphavec, \eta$ and data $\Data = \{\mgvec : g\in \mathcal{G}\}$ is 
\begin{align}\label{eq-conditpostmean-main}
    \E[\theta_{gs} \mid \alphavec, \eta, \Data, \gammavec] = \frac{\mgs+\eta\alpha_s}{\mg+ \eta} =  (1-\rho_g) \frac{\mgs}{\mg} + \rho_g \alpha_s, 
\end{align}
where $\rho_g = \eta/(\mg + \eta)$ controls the degree of shrinkage towards the overall surname distribution. To estimate the posterior mean of $\theta_{gs}$, we use an MCMC sampler described in  Supplement \ref{sec-sampler} and sample from the following marginal posterior of $\alphavec, \eta$, 
\begin{align}
\pi(\alphavec,\eta \mid \Data, \gammavec)&\propto  \pi(\eta)\Gamma(\eta)^L \left(\prod_{s=1}^{|\mathcal{S}|}\alpha_s^{\gamma_s -1}\right) \left(\prod_{g=1}^{L}\prod_{s=1}^{|\cS|} \prod_{k=1}^{\mgs}(\eta\alpha_s + \mgs - k)\right) \prod_{g=1}^{L} \frac{1}{\Gamma(\mg+ \eta)},
\end{align}
where $\Gamma$ denotes the gamma function. Given $B$ draws of $\alphavec^{(b)},\eta^{(b)}$ from the posterior, we estimate the unconditional posterior mean of $\theta_{gs}$ and the posterior mean of $\rho_g$ by
\begin{align}
   \hat{\theta}_{gs} = \frac{1}{B}\sum_{b=1}^{B}\frac{\mgs + \eta^{(b)}\alpha_s^{(b)}}{\mg+ \eta^{(b)}}, \hspace{1cm} \hat{\rho}_g = \frac{1}{B}\sum_{b=1}^{B}\frac{\eta^{(b)}}{\mg + \eta^{(b)}}.
\end{align}

In our application, we fit this model to the 49,198 surnames and 233,365 observations in the Jewish obituary data. We ran the sampler for 45,000 iterations and discarded the first 15,000 as burn-in before calculating posterior mean estimates for each $\theta_{gs}$ and $\rho_g$. Figure~\ref{fig-rho-postmean} gives the posterior mean of $\rho_g$ for each state. It shows that for states with many observations, the model relies strongly on the per-state distribution, while for states with fewer observations, it relies increasingly on the overall distribution.

\begin{figure}[!t]
    \centering
    \includegraphics[width=.9\linewidth]{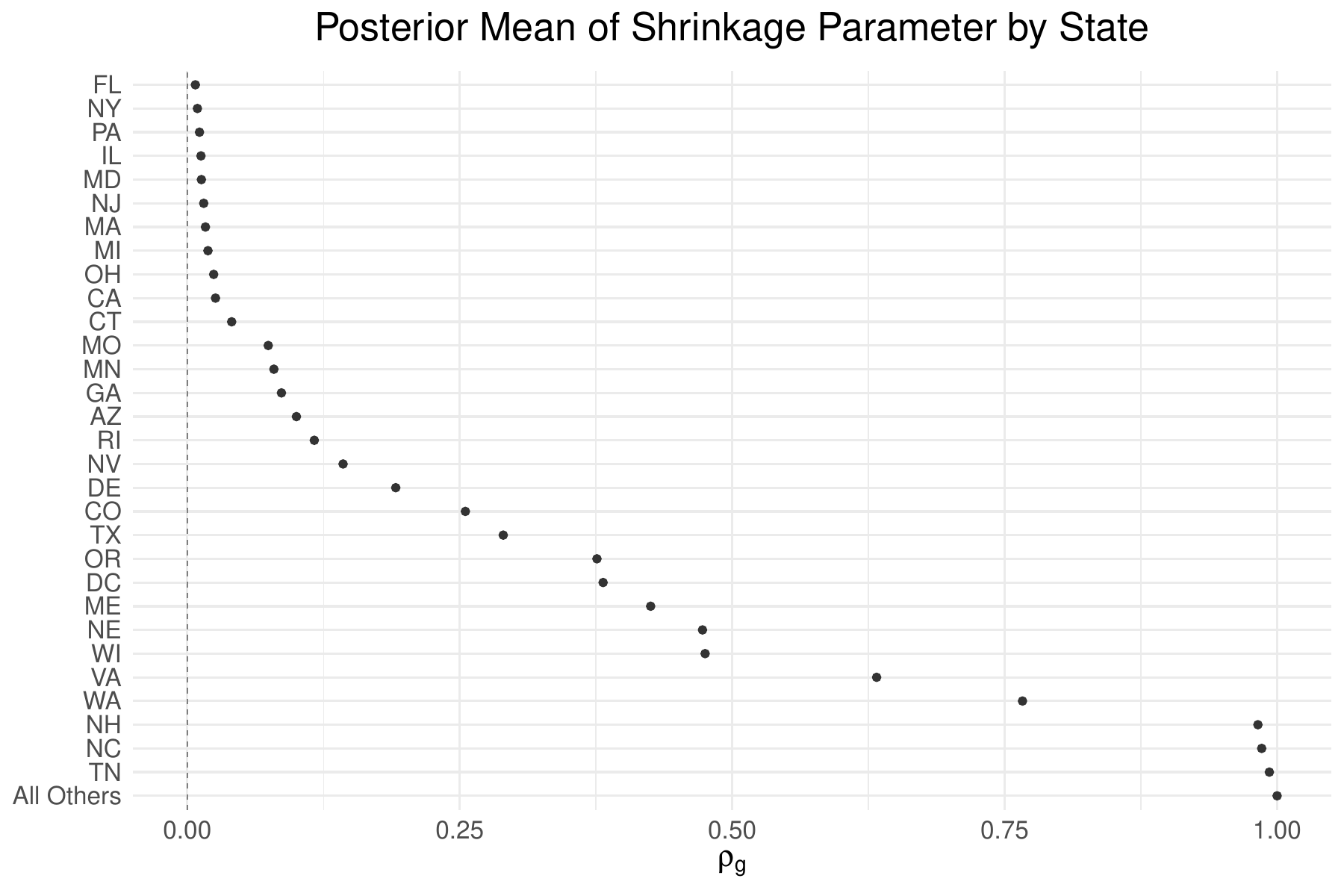}
    \caption{Posterior means of $\rho_g$ for each state. The ``All Others'' category represents states with no observations in the obituary data. For these, the model relies entirely on the overall surname distribution.}
    \label{fig-rho-postmean}
\end{figure}

Appendix Section~\ref{sec-sampler}, we provide further theoretical justification and details on the sampler while in Section~\ref{sec-bounds}, we describe a method for dealing with the issue that estimation errors in estimating different components of Bayes' rule rule from different datasets can sometimes lead to probability estimates larger than $1$. In Appendix~\ref{app-surnamefeatures}, we also show how the model could be extended to leverage underlying surname structures and thereby capture the idea that a rare surname (e.g., Berlowitz), which is similar to other common surnames in the obituary data (e.g., Berkowitz), is more indicative of Jewish identity. Appendix~\ref{sec-sim} provides a simulation study, evaluating the performance of the proposed method under different levels of BISG probability informativeness and other key parameters. Appendix~\ref{sec-surname_dist_results} gives further empirical details on fitting the model. 

\subsubsection{Incorporating first names}\label{sec-first names}

First names can also carry signals for membership in some populations and can be incorporated into BISG probabilities if first name data are available \citep{voicu2018using}. For example, in our application, a first name like the Hebrew-origin name Eitan provides a strong positive signal. 

For the Bayes' rule based approach to learning BISG probabilities, adding first names gives
\begin{equation}\label{eq-bayes-rule2}
   \Pr(R=1 \mid F,S,G) = \frac{\Pr(F\mid S,G,R=1)\Pr(S\mid G,R=1)\Pr(R=1\mid G)}{\Pr(F\mid S,G)\Pr(S\mid G)} 
\end{equation}

\noindent In general, using this for estimation requires simplifying assumptions. A common practice is to assume conditional independence between first name and surname within each racial category, i.e., $F\indep S,G\mid R$.
This leads to the following multiplicative factor that can be used in combination with the standard BISG estimate, \begin{equation}
    \Pr(R=1 \mid F,S,G) = \frac{\Pr(F=f\mid R=1)}{\Pr(F=f)} \Pr(R=1 \mid S,G)
\end{equation}
\noindent Although there likely is some joint relationship between first and last names, the independence may have practical advantages for extrapolating to new first and last name combinations not observed in the obituary data (e.g., living relatives). In our application, we used this formula based on the combined dataset of first names of the deceased and first names extracted from the obituary text. In our validation in Appendix~\ref{sec-validation}, we find that these ratios recover meaningful signals of having a Jewish first name.

\subsection{Estimation and Survey Nonresponse}\label{sec-estimation}

Before presenting our empirical results, we briefly discuss how to estimate quantities of interest from a survey conducted under the proposed sampling methodology.  Importantly, the final survey should include a question that measures the minority group membership $R_i$, along with the variables of interest $Y_i$.
In general, because we have a probability sample, standard design-based estimators are available for estimation using the responses from units who belong to the minority group $R_i=1$ \citep{Lohr1998}. For example, the H\'ajek Inverse Probabilty Weighting (IPW) estimator of the mean in the minority population $\mu := \E[Y\mid R=1]$ is  
\begin{equation}\label{eq-hajek}
 \hat{\mu}_{\text{H}} := \left(\sum_{i=1}^{\popoverall}\frac{Y_iI_iR_i}{\pi_i}\right)\left(\sum_{i=1}^{\popoverall}\frac{I_iR_i}{\pi_i}\right)^{-1}
\end{equation}
where $\pi_i$ is the sampling probability and $I_i$ denotes the indicator for unit $i$ being included in the final sample. If there were no non-response, this would be a consistent estimator of $\mu$.

We now briefly discuss the issue of survey non-response in the minority population setting but also refer readers to the extensive literature on the topic \citep[e.g.,][]{Groves1998,Lohr1998, RubinLittle2002}.
First, when sampling a minority population, one concern is that the screening process, which identifies whether a respondent is a member of the minority group, could induce bias.
In particular, knowing that the survey is specifically for the minority group may make some potential respondents less likely to participate. This problem may unavoidable given that the principle of informed consent generally requires disclosing to participants the purpose of a study \citep{belmont1979}. 

To minimize such screening bias, we designed our survey of Jewish Americans such that the initial disclosure was about the contents of survey questions (e.g., civic, political, and religious attitudes).  After the survey was completed, we informed respondents about the fact that the target population of the survey was Jewish Americans and gave respondents an option to opt out. See Section~\ref{sec-empiricalresults} and Appendix~\ref{sec-designimplementation} for further discussion of the survey methodology and response. 

Popular methods for addressing non-response bias are post-stratification and raking, each of which may be combined with IPW weights $1/\pi_i$. Both rely on some variables $X=(X_1,...,X_d)$ measured in the survey or sampling frame, for which external information about their distribution in the minority population is available. At a minimum, we can use the geographic information $G$. In post-stratification, per-group H\'ajek estimators of $\E[Y\mid R=1, X=x]$ are combined using known base rates $\Pr(X=x\mid R=1)$ \citep[e.g.,][]{holt1979post, Little01091993, Zhang01082000, Smith2018}. However, if $X$ contains multiple variables, cross-tabulating the survey data can result in small or empty cells. Raking, a widely-used alternative, instead iteratively finds weights to simultaneously match the weighted distribution of each $X_j$ in the survey to the target marginal distribution $\Pr(X_j\mid R=1)$ as closely as possible \citep[e.g.,][]{DemingStephan1940, Deville01091993, kalton2003weighting}. 
\cite{mercer2018weighting} have found that it can be advantageous to initialize the raking algorithm weights to the IPW weights as a way to prioritize solutions that still align somewhat with the IPW weighting.

One potential problem in minority population sampling is that target distributions of $X$ may be unavailable. If there are covariates, which are expected to be invariant between the minority population and the larger population (i.e., $X\indep R$), then one could rake to the marginal probabilities of $X$. For example, we might rake to the U.S.-wide or per-state sex distributions if we believe Jewish people are likely to mirror them. If instead there is an $X$ measured in the sampling frame that is conditionally unrelated to $R$ (i.e., $X \indep R \mid S,G$), then it is possible to estimate $\Pr(X\mid R=1)$ from the sampling frame using the following formula

\begin{equation}
   \Pr(X\mid R=1)=  \frac{1}{\Pr(R=1)}\sum_{gs}\Pr(R=1\mid S = s, G = g)\Pr(S = s, G = g, X)
\end{equation}
where the first term in the summation is our BISG estimate and the second term is estimated from our sampling frame. For example, we might use this to obtain an estimate of the age distribution among Jewish Americans. Though this target will have some estimation error, if it is closer to the true distribution than the unadjusted survey distribution, raking to it may be beneficial.

\section{Results}\label{sec-empiricalresults}

In June 2025, we applied the proposed methodology to sample from the L2 voter files. Our target sample size was 50,000, and we ultimately sampled 49,546 voters. The sampled individuals represented 11,613 distinct surnames and 5,589 distinct first names and resided in all 50 states and the District of Columbia. We then sent those sampled potential respondents a postcard solicitation to complete an online survey that asked about Jewish identity and a number of questions related to civic, political, and religious topics. In this section, we focus on evaluating the empirical performance of the proposed sampling methodology and report the results of a full analysis of the survey response elsewhere. 

\subsection{The proportion of Jewish respondents}\label{sec-response}

Of the 49,546 people who were sent a postcard asking them to complete the online survey, 1,765 individuals responded, resulting in a response rate of approximately 3.6\%. This number represents response  after a single postcard solicitation and with no incentive such as cash payment for completing the survey. Pew's 2020 study of Jewish Americans had a much higher response rate (16.6\%) than ours, but they offered incentives (\$10 - \$50) for respondents, and allowed respondents to answer the survey either online or by mail. The mean response rate across states was 4.2\% with a standard deviation of 1.8\%, and we had respondents from every state except North Dakota (see Table~\ref{tab-sampleresponsecounts} in Appendix~\ref{app-response-rate}).

\begin{figure}[!h]
    \centering \includegraphics[width=\linewidth]{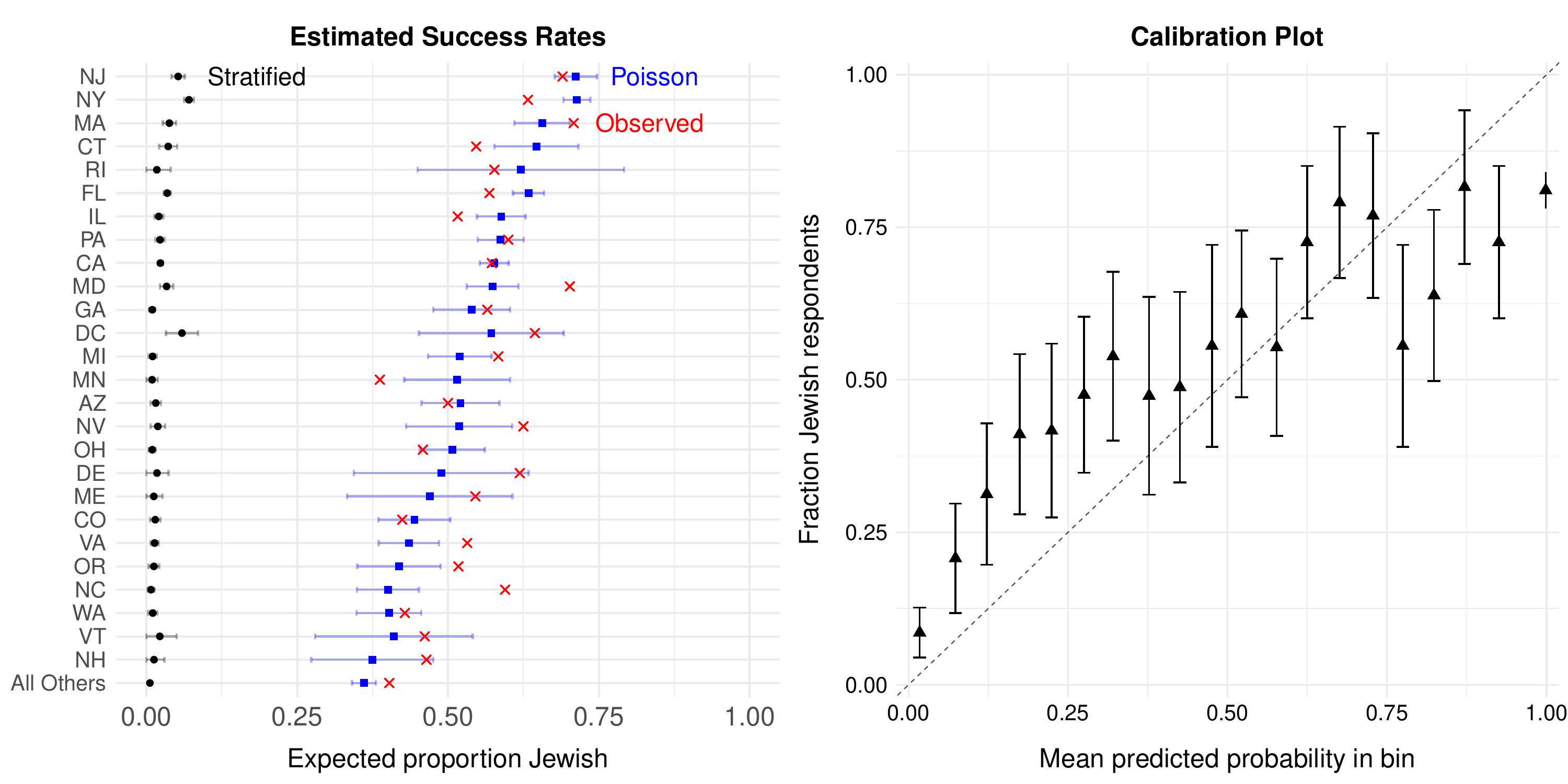}
    \caption{Left: The left panel of Figure \ref{fig-yield_and_allocation}, now with the observed fraction of survey respondents who were Jewish marked by $\times$.  Right: calibration plot for estimated $\Pr(R=1\mid F=f,S=s,G=g)$. Each point is the fraction of Jewish respondents versus the mean estimated probability among respondents with estimated probabilities in that bin (20 equal-width bins). Bars are twice the standard deviation of a binomial draw with the bin average probability.}
    \label{fig-per-state-success}
\end{figure}

Among the respondents, 1,004 or 56.9\% identified themselves as Jewish. This is close to our estimated expected success rate of 58.6\% reported in Table~\ref{tab-samplingsuccerates}, representing a substantial gain in response efficiency over the expected success rate of stratified sampling. Furthermore, the left plot of Figure~\ref{fig-per-state-success} demonstrates that at the state level, the observed fraction of respondents who were Jewish aligns closely with the expected success fractions despite the fact that these observed fractions also reflect unit non-response rather than the sampling of Jewish respondents alone. The right plot of Figure~\ref{fig-per-state-success} indicates that our estimated $\Pr(r=1\mid F=f,S=s,G=g)$ probabilities were reasonably well calibrated for those who responded to the survey.

\begin{table}[t]
\centering
\small 
\renewcommand{\arraystretch}{1.2}
\begin{tabular}{lcccc}
\hline \hline
Variable 
& \shortstack{All sampled \\ people}
& \shortstack{Non-Jewish \\ respondents}
& \shortstack{Jewish \\ respondents}
& \shortstack{Jewish respondents \\ (weighted)} \\
\hline \hline 
Mean Age           & 55.1 & 61.7 & 64.3 & 51.0 \\
Percent Female        & 49.5 & 53.7 & 39.0 & 45.5 \\
Percent Democratic    & 45.8 & 52.3 & 68.1 & 68.3 \\
\hline \hline
\end{tabular}
\begin{flushleft}
\caption{Comparison of sampled, responding, and Jewish groups. The first three columns show unweighted means for all sampled respondents, non-Jewish respondents, and Jewish respondents. 
The last column shows weighted estimates for Jewish respondents only, using inverse-probability weights raked to Pew population benchmarks.
}
\label{tab-descriptive-char-respondents}
\end{flushleft}
\end{table}

Using demographic and political information available in the voter file, Table \ref{tab-descriptive-char-respondents} shows the demographics of sampled individuals, those who responded, and Jewish respondents. Our respondents, and particularly our Jewish respondents, skew older, more male, and more Democratic. The Democratic skew is not surprising given that previous research has found Jewish Americans to lean Democratic \citep{pew_jews_politics_2021}, while the age and sex skew may reflect different tendencies to respond to the survey. Although Jewish respondents were only slightly older on average than non-Jewish respondents, the proportion of female respondents was unexpectedly lower among Jewish respondents, and the reason for this pattern is unclear. In our analysis below, we use weights described in Section~\ref{sec-pew-comparison} to adjust for age and sex related skew. The last column of Table \ref{tab-descriptive-char-respondents} shows the resulting weighted estimates of age, sex, and party for Jewish respondents. 

\vspace{-.1cm}
\subsection{Ethics and disclosure}

Sampling a minority population can raise ethical concerns if that population has a history of being targeted and attacked, as is the case for the Jewish population. Our study was conducted at a time of heightened antisemitism and political tension in the United States, so surveying the Jewish population required care. As noted in Section~\ref{sec-estimation}, the ethical imperative to be transparent about the survey's purpose may be in tension with the issue of non-response bias. To balance this, we initially disclosed to respondents only that the survey would ask about civic, religious, and political attitudes. At the end of the survey, respondents were informed of the full motivation of the study and given an additional opportunity to opt out. Twenty-two individuals not included in our reported response count withdrew their records after reading the post-survey disclosure. 

Our study was designed in consultation with and was approved by the Tufts University Institutional Review Board (STUDY00005675). 
The postcard and full text of the pre- and post-survey disclosures are provided in Appendix~\ref{sec-designimplementation}. To reduce risk of misuse, we have also decided to share the surname probabilities or the obituary data itself only upon request rather than to publish them in an open-access archive.

\subsection{Cost comparison}

The direct cost of the survey was the printing and postage of 49,546 postcards, which cost \$22,833. If we focus just on the sample of Jewish Americans (excluding people who took the survey but did not identify as Jewish), our sample included 1,004 Jewish respondents, and so the direct costs were about \$23 per respondent in the target population. 

The direct costs for Pew were much higher. There, approximately 339,000 households were sent a screening survey with a \$2 incentive. The main survey, which was completed by 4,718 Jewish respondents, had a \$10-20 incentive, but some respondents were incentivized with up to \$50. Conservatively, if we assume that the 4,718 Pew respondents received, on average, a \$12 incentive, then the main survey incentives cost \$56,616, and total incentive costs were approximately \$734,600. Moreover, if Pew spent at least \$1 on postage and printing for all 339,000 individuals (a likely under-estimate because the Pew study used longer paper mailers), then the total survey cost is at least \$1,073,600, or approximately \$228 per completed Jewish interview. This is about ten times more expensive than our survey. 

The Pew Survey was substantially longer, with dozens more questions than our survey, so Pew obtained more information per survey completion. Nevertheless, the biggest cost savings relative to Pew were in the more efficient screening to identify Jewish-identifying respondents -- had our survey been longer, the cost difference would likely have been similar.

\subsection{Comparison to Pew}\label{sec-pew-comparison}

To probe whether our survey also allows accurate inference after weighting adjustments, we compare survey estimates to those from the gold standard 2020 Pew survey. To enable this comparison, our survey included many of the questions that Pew asked in 2020. Between when Pew surveyed Jewish Americans in 2020 and our survey in 2025, the October 7, 2023 attack in Israel and subsequent war led to increases in antisemitism, a large pro-Palestinian protest movement in the United States, and frequent news stories about Jews and Israel. There is some evidence that there was an increase in the salience of Jewish identity and then a reversion in this period of time, which could lead to changes in Jewish identity and behavior \citep{hershlyss25}. We therefore focus our comparison questions from Pew on denomination, social networks, and rituals and practices that we expect to be more stable in the 2020-2025 period. For example, we expect that  whether one owns a ritual object or has friends who are Jewish is likely to fluctuate less with short-term political factors than the self-reported strength of one's religious identity.

To estimate the Jewish mean response on each question on our survey, we apply raking initialized with IPW weights as discussed in Section~\ref{sec-estimation}. We rake to match the weighted marginal demographic distributions reported for the 2020 Pew sample on age, gender, race, party affiliation, and state of residence. These raking targets from Pew are themselves the result of Pew's model-based estimation of the demographic composition of the Jewish adult population, which may not be entirely accurate \citep{pewjewish2020}. However, by aligning our weights to match Pew, we can gauge whether our sampling method has led to any major differences that could be indicative of differences in survey quality. 

\begin{figure}[!t]
  \centering   
  \scalebox{1.8}{\includegraphics[width=0.6\linewidth]{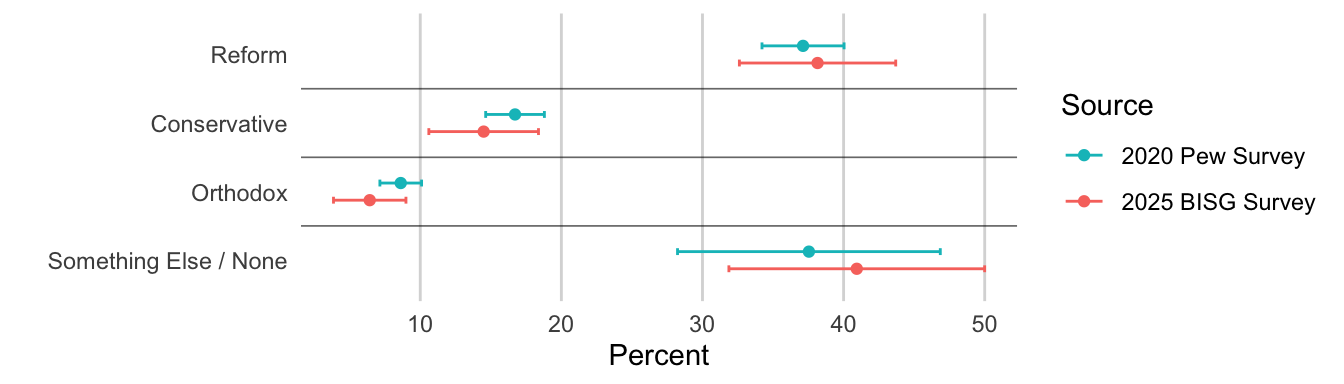}}
  \caption{The distribution of the response to the question: ``Thinking about Jewish religious denominations, do you consider yourself to be \ldots''}
    \label{fig:denom}
\end{figure}

Our first analysis examines Jewish denominational affiliation as a measure of Jewish identity. Figure~\ref{fig:denom} compares the distribution of respondents identifying as Reform, Conservative, or Orthodox Jews, with all remaining denominations as well as responses indicating no affiliation combined into an ``Other/None'' category. The distribution in our survey closely matches that reported by the Pew survey. A design-adjusted $\chi^2$ test \citep{rao1984chi} comparing the two distributions yields a p-value of 0.669, indicating no statistically significant difference.

Similarly, our survey closely matches the Pew survey in the proportion of respondents who identify as Jewish culturally or ethnically but not religiously. These individuals describe themselves as agnostic, atheist, or having no religion when asked about their religious affiliation, while also identifying as Jewish. Pew estimates that 27\% of American Jews fall into this category, compared with 26\% in our survey.

\begin{figure}[H]
\scalebox{1.9}{\includegraphics[width=.50\linewidth]{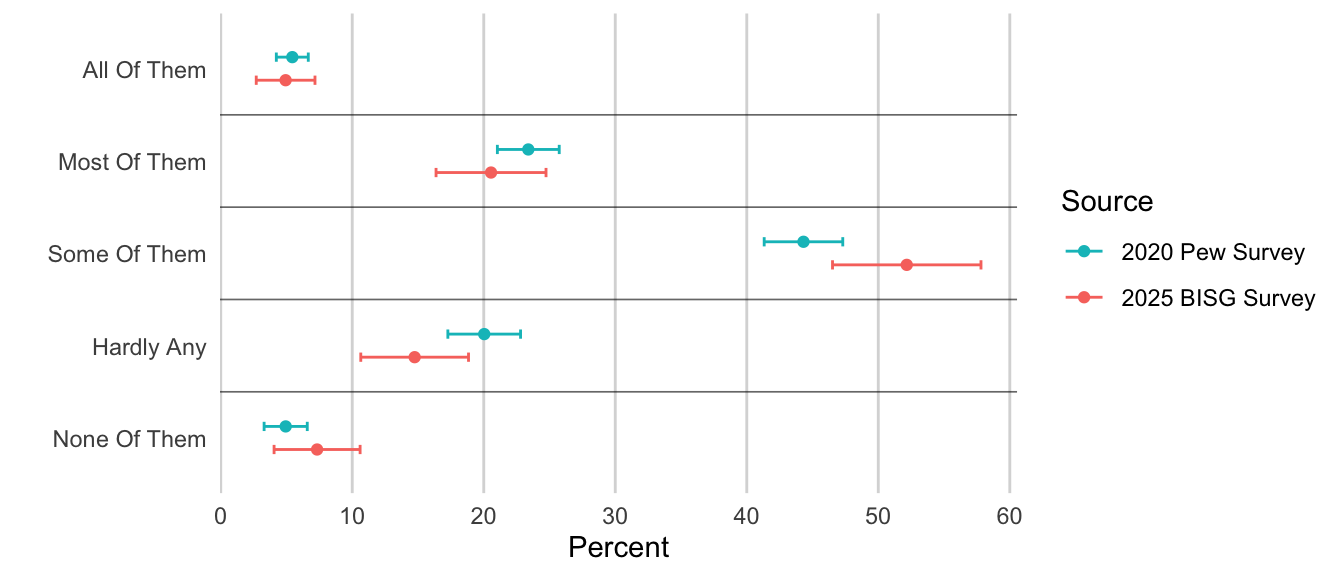}}
 \caption{The distribution of response to the question: ``How many of your close friends are Jewish?"}
 \label{fig:friend}
\end{figure} 

Second, we examine a social network question about how many of one's close friends are Jewish. 
We see no statistically significant variations in response. Pew does have slightly more respondents who say hardly any of their close friends are Jewish and slightly fewer who say some of their close friends are Jewish, but Figure~\ref{fig:friend} shows that the overall ranking and shape of the distribution is still quite similar. The design-adjusted $\chi^2$ test for a difference in distribution yields a p-value of $0.15$. We have no evidence that, when compared to Pew, our survey respondents have different Jewish social networks.

\begin{figure}[!t]
\scalebox{1.4}{\includegraphics[width=.6\linewidth]{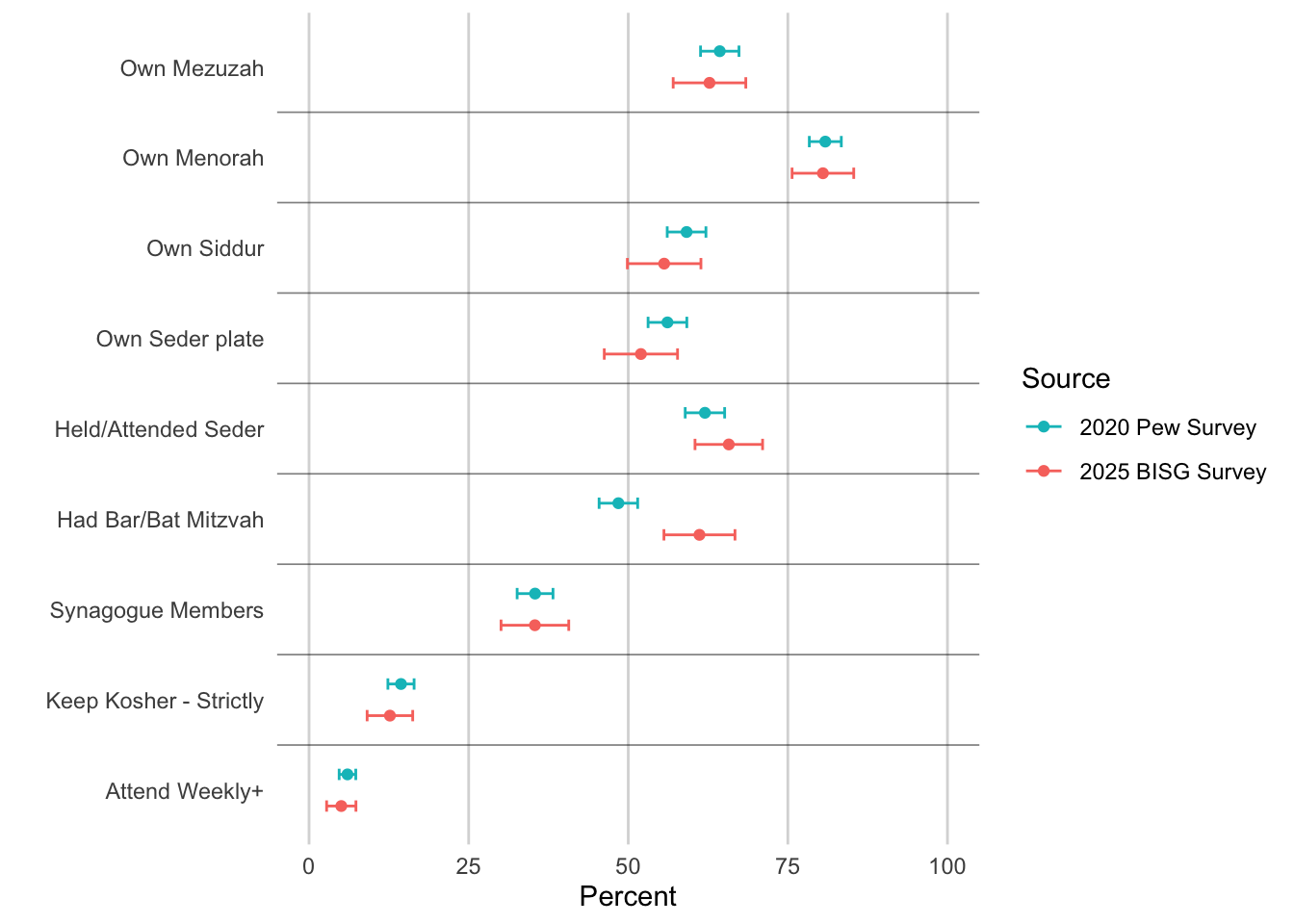}}
\caption{Key Jewish Practice Measures.  These were separate survey questions that are here combined on one graph. Only the Bar/Bat Mitzvah question had a statistically significant difference at the 5\% level.}
    \label{fig:multi}
\end{figure}

Finally, Figure~\ref{fig:multi} shows questions focused on concrete behaviors and practices. These include the possession of ritual items (a mezuzah, a menorah, a siddur, and a seder plate), and participation in Jewish rituals (attending a seder, having a bar/bat mitzvah as a child, attending weekly synagogue services, keeping kosher).  
Our survey aligns closely with Pew on all but having a bar/bat mitzvah. There is no obvious reason why this particular measure would operate differently than the other measures here and it may be a matter of chance. We also examined the counts of how many of nine religious behaviors respondents said applied to them. In the 2020 Pew survey, Jewish respondents participated in, on average, 4.27 of these (95\% CI: 4.12-4.41). In our 2025 BISG survey, the average is indistinguishable from Pew, at 4.30 (95\% CI: 3.99-4.61).

Altogether these results suggest that there is remarkably close alignment between our survey and Pew, albeit when weighted in the same way. In Supplement \ref{app-weights}, we show that our results are stable under a variety of other weighting strategies. Overall, our analysis suggests that the proposed method is an effective way to sample the American Jewish population with a relatively high success rate, at a substantially lower cost than the Pew survey method, and that it yield results similar to those of the Pew survey.

The same appendix also compares the Pew survey, our BISG survey, and Jewish-identifying respondents in the most recent Cooperative Election Study (CES). Although the CES was not designed to be representative of the U.S. Jewish population, it yields estimates similar to those from Pew and our BISG survey for the proportion of Jewish respondents who regularly attend religious services. The CES differs more substantially from both Pew and our BISG survey in denominational affiliation, but its estimate of the proportion of Jewish respondents without a college degree is closer to Pew’s than to our BISG survey’s.

\section{Discussion}

This article aims to solve a practical problem in sampling small populations that are not directly identified in a sampling frame. We proposed a method that leverages BISG probabilities in a stratified Poisson sampling scheme, and showed how to select strata allocations and control the expected sample size. In our case study of sampling American Jewish adults, we leveraged a novel data source -- obituaries --  that contain information about the prevalence of Jewish surnames and first names. We combined these with existing estimates of the geographic distribution of Jewish Americans across U.S. states.

Applying our proposed method, we conducted an original survey that asked about the identity, social networks, and religious behaviors of Jewish Americans. We generated a sample of 1,004 Jewish Americans at a low cost with a single postcard solicitation. The survey estimates are nearly identical to those of Pew, which is widely considered the highest-quality survey of Jewish Americans.  While it is possible that our survey and Pew’s suffer from some shared biases, we did not observe
any obvious signs of under-representation of important groups such as different denominations. 

There are several opportunities to build on this work. For sampling Jewish Americans, one useful extension would be to apply our approach to a
regional survey, such as a single metro area. Regional Jewish communities periodically survey
their populations, and our
approach would likely offer an efficient way to do so. More generally, other minority populations of interest to surveyors may have similar relevant surname frequency data, either from obituaries, census data, or other sources, that allow BISG estimation. Researchers may also develop other new ways of estimating improved BISG probabilities and then use them in the Poisson sampling approach. A particularly challenging case, which we will address in forthcoming work, is how to estimate BISG probabilities when only lists of surnames are available.

Religious, racial, and other identities are often fluid and subject to individual interpretation. In our survey, we relied on self-identified Jewish identity, whether religious or not. Because our sampling frame was based on obituary data, over half of the voter file had a positive probability of selection, including those with less common or less distinctly Jewish names. However, individuals with surnames that rarely appeared in the obituary data had lower sampling probabilities, limiting our ability to study difficult-to-identify subgroups, such as Jews with non-Jewish fathers. Similar challenges will arise when sampling other populations and may require combining our approach with complementary sampling strategies to ensure adequate coverage of all subgroups of interest.

We close this paper with a word of caution. While there is often scholarly and public interest in generating accurate surveys of minority groups, some may worry that identifying themselves as a member of the group could pose a risk. Even generating predictive probabilities can lead to unethical uses.  Our own study was motivated by scholarly interest and the goals of non-profit organizations supporting the Jewish community that wish to better understand their constituents. We also reflected on feedback from respondents who expressed concerns about our survey and improved our disclosure protocols accordingly. Research in this space should be pursued with particular care and sensitivity.

\bigskip
\bibliographystyle{sciencemag} 
\bibliography{references}

\section*{Acknowledgments}

We thank reviewers of the Alexander and Diviya Magaro Peer Pre-Review
Program and Marc Elliott for useful comments. Eitan Hersh thanks David Van Riper and Douglas Friedman for research assistance. Kosuke Imai thanks Bruce Willsie, the president and CEO of L2, Inc. for providing access to their voter file database.

\paragraph{Funding:} The Jim Joseph Foundation supported this study.

\paragraph{Author contributions:}$ $\\ 
\textit{Conceptualization:} K.C., E.H., K.I., L.R. \\
\textit{Methodology:} K.C., E.H., K.I., L.R.
\\
\textit{Data curation:} E.H., L.R.\\ 
\textit{Investigation:} K.C., E.H., K.I., L.R.
\\
\textit{Formal Analysis:} K.C., E.H., K.I., L.R.
\\
\textit{Visualization:} K.C., L.R. \\
\textit{Supervision:} E.H., K.I. \\
\textit{Writing-original draft:} K.C., L.R.\\ 
\textit{Writing-review and editing:} K.C., E.H., K.I., L.R. 

\paragraph{Competing interests:} The authors
declare that they have no competing interests. 

\paragraph{Data and materials availability:} 
The raw data for this project comes from three sources. Geographic-level estimates of Jewish populations are available from \url{https://ajpp.brandeis.edu/}. The voter file data we used is available by subscription from \url{https://l2-data.com/}. The obituary data we used is available by subscription from \url{https://obituary.pbinfo.com/}. 

The statistical code used to combine these three data sources and produce estimates of Jewish identity will be made available via a Harvard Dataverse repository. Our data repository will also include a de-identified file of the sample used to solicit respondents and the responses to the survey. Questionnaires, codebooks, and survey code will be made available there as well.

\paragraph{IRB Approval:} This study design was approved by the Tufts University Institutional Review Board (STUDY00005675).

\paragraph{AI use:} As described in the main text and supplement, ChatGPT was used for extraction of first name information from obituaries. The only other LLM use for this project was (1) in debugging, refinement, and formatting of R code, particularly of plotting code, and (2) for minor fixing of typos and improvements in clarity at the revision stage. The authors wrote the original draft and all major simulation and data analysis code.

\newpage

\section*{Supplementary Materials}

Methods \\
Materials\\

\newpage 
\appendix 
\begin{center}
\section*{\bf Supplementary Materials for Improving Minority Population Sampling with BISG Probabilities: Evidence from a Survey of Jewish Americans}

Kyla Chasalow, Eitan Hersh, Kosuke Imai$^{\ast}$, Laura Royden\\
\small$^\ast$Corresponding author. Email: imai@harvard.edu\\
\end{center}

\subsubsection*{This PDF file includes:}

Methods \\
Materials\\

\newpage 
\section{Methods}\label{app-methods}

\setcounter{equation}{0}
\setcounter{figure}{0}
\setcounter{table}{0}
\renewcommand {\theequation} {A\arabic{equation}}
\renewcommand {\thefigure} {A\arabic{figure}}
\renewcommand {\thetable} {A\arabic{table}}

This supplement presents additional details on methods, including proofs of mathematical claims. Section~\ref{sec-notation} summarizes the notation. Section~\ref{app-sampling} provides derivations related to the Poisson sampling procedure, including details for Table~\ref{tab-samplingsuccerates}. Section~\ref{sec-sampler} gives the details of our Bayesian hierarchical sampler. Lastly, Section~\ref{sec-sim} gives the results of a simulated version of our procedure designed to mimic the structure of the real application. 

\subsection{Summary of notation}\label{sec-notation}

\subsubsection*{Data variables}

\begin{table}[H]
\centering
\renewcommand{\arraystretch}{1.5}  
\begin{tabular}{cp{14cm}}
\hline
\textbf{Notation} & \textbf{Description} \\ 
\hline
$R$ & Binary variable where $R=1$ is the minority population of interest  \\
$S$ & Surname \\
$F$ & First name \\ 
$G$ & Stratum, often a geographic unit, which takes $L$ possible values \\ 
$X$ & Additional variables in survey (possibly coming from sampling frame) that might be used for raking if external information about them is available  \\
$Y$ & A variable only available in the final survey \\ 
$I$ & Sampling indicator denoting if a person was selected into the sample. \\ 
$H=H(s)$ & Surname-based filter indicator. Any unit with a surname $s$ such that $H(s)=0$ is effectively filtered out of the sampling frame by sampling with probability 0. \\
\hline
\end{tabular}
\end{table}

\subsubsection*{Sampling design}

\begin{table}[H]
\centering
\renewcommand{\arraystretch}{1.6}  
\begin{tabular}{cp{8.5cm}c}
\hline
\textbf{Notation} & \textbf{Description} & \textbf{Formula} \\ 
\hline
$\targetoverall, \targetg$ & Fixed target sample sizes overall and by stratum \\  
$\pi_i$ & Sampling probability for unit $i$ & $\frac{\targetgi}{\probnormgi}\Pr(R_i=1\mid S_i,G_i)$\\
$\probnormg$ & Sum of minority population probabilities for units in stratum $g$ & $\sum_{i:G_i=g}\Pr(R_i=1\mid S_i,G_i)$ \\ 
$\pi_i^*$ & Sampling probability for unit $i$ with filtering & $\frac{\targetgi}{\probnormgistar}H(S_i)\Pr(R_i=1\mid S_i,G_i)$\\
$\probnormgstar$ &Same as $\probnormg$ only excluding filtered out units& $\sum_{i:G_i=g}H(S_i)\Pr(R_i=1\mid S_i,G_i=g)$\\
\hline
\end{tabular}
\end{table}

\subsubsection*{Bayesian model parameters}

\begin{table}[H]
\centering
\renewcommand{\arraystretch}{1.5}  
\begin{tabular}{cp{14cm}}
\hline
\textbf{Notation} & \textbf{Description} \\ 
\hline
$\thetagvec$ & Parameter vector representing per-stratum surname distribution $\Pr(S=s\mid G=g,R=1)$ for each $g$  \\
$\alphavec$ & Parameter vector representing overall surname distribution $\Pr(S=s\mid R=1)$  \\
$\gammavec$ & Hyper-parameter vector for overall surname distribution \\
$\eta$ & Positive scalar scaling parameter \\
$\rho_g$ & Scalar parameter calculated from $\eta$ which represents the degree of partial pooling for each $g$  \\
\hline
\end{tabular}
\end{table}

\subsubsection*{Counts}

\begin{table}[H]
\centering
\renewcommand{\arraystretch}{1.6}  
\begin{tabular}{cp{8.5cm}c}
\hline
\textbf{Notation} & \textbf{Description} & \textbf{Formula} \\ 
\hline
$N$ 
  & Number of people in sampling frame  & $\sum_{g=1}^{L}\sum_{s=1}^{|\cS|}\sum_{r=0}^1\popoverallgsr$ \\
$\popoverallg$ 
  & Number of people in stratum $g$ in sampling frame  & $\sum_{s=1}^{|\cS|}\sum_{r=0}^1\popoverallgsr$ \\
$\popoverallgs$ 
  & Number of people in stratum $g$ with surname $s$ in sampling frame & $\sum_{r=0}^1\popoverallgsr$ \\
$\popoverallgr$ & Number of people in stratum $g$ from population $r$ in sampling frame &$\sum_{s=1}^{|\cS|}\popoverallgsr$\\ 
$N_{gsr}$ & Number of people in stratum $g$ from population $r$ with surname $s$ in sampling frame & $\sum_{i=1}^{\popoverall}\ind\{G_i=g,S_i=s,R_i=r\}$ \\ 
\hline
$\popoverallstar,\popoverallgstar,\popoverallgsstar$ & Counts as above but after filtering using $H(s)$ \\ 
\hline 
$n$ 
  & Number of people in final sample & $\sum_{g=1}^L \sum_{s=1}^{|\cS|}\sum_{r=0}^1 \sampledgsr$ \\
$\sampledg$ 
  & Number of people in stratum $g$ in final sample & $\sum_{s=1}^{|\cS|}\sum_{r=0}^1 \sampledgsr$ \\
$\sampledgr$ & Number of people in stratum $g$ from group $r$ in final sample & $\sum_{s=1}^{|\cS|}\sampledgsr$\\
$\sampledgsr$ & Number of people in stratum $g$ from group $r$ with surname $s$ in final sample & $\sum_{i=1}^{\popoverall}\ind\{I_i=1,G_i=g,S_i=s,R_i=r\}$\\
\hline
$\msampsize$ 
    & Number of people in training data, which includes only $R=1$ members & $\sum_{g=1}^{L}\sum_{s=1}^{|\cS|}m_{gs}$\\ 
$\mg$ 
    & Number of people in training data from stratum $g$  & $\sum_{s=1}^{|\cS|}m_{gs}$\\
$\ms$ 
    & Number of people in training data with surname $s$ & $\sum_{g=1}^{L}m_{gs}$ \\
$\mgs$ 
    & Number of people in training data from stratum $g$ with surname $s$ \\
\hline
\end{tabular}
\end{table}

\subsection{Sampling Calculations}\label{app-sampling}
      
\subsubsection{Optimal stratified sampling allocation}\label{sec-opt}

For completeness, we prove Equation~\eqref{eq-optimal_ng}. 
Similar calculations appear in the literature \citep[e.g.,][]{kalton1986sampling,UN1993,waksberg1997, cervantes2007, clark_etal_2009, kalton2009methods, chen2015geographic}. Suppose that in each stratum $g$, we will take a simple random sample of $\sampledg$ observations. The sample stratum mean within the minority population, denoted $\overline{Y}_{g1}$, is an unbiased estimator of $\E[Y\mid R=1,G=g]$. Consider  estimating $\mu = \E[Y\mid R=1]$ using the post-stratified $\hat{\mu}_{\mathrm{strat}}  = \sum_{g=1}^{L}\overline{Y}_{g1}\Pr(G=g\mid R=1)$, where we treat $\Pr(G=g\mid R=1)$ as known. Ignoring finite population corrections, the variance of $\overline{Y}_{g1}$ is given by,
\begin{equation}\label{var-approx}
\Var(\overline{Y}_{g1}) = \Var(Y\mid G=g,R=1) \E\left[\frac{1}{\sampledrareg} \right] \approx \frac{\Var(Y\mid G=g,R=1)}{\E[\sampledrareg]}
\end{equation}
where the approximation follows from a first order Taylor expansion of $\frac{1}{x}$ about $\E[\sampledrareg]$. This implies that, letting $\sigma_{g}^2=\Var(Y\mid G=g,R=1)$ and noting that $\overline{Y}_{g1}\indep \overline{Y}_{g'1}$ and $\E[\sampledrareg] = \sampledg \Pr(R=1\mid G=g)$,	
\begin{equation}\label{var-approx-full}
    \Var(\hat{\mu}_{\mathrm{strat}}) \approx \sum_{g=1}^{L}\frac{\sigma_{g}^2\Pr^2(g\mid R=1)}{\sampledg\Pr(R=1\mid G=g)} =\sum_{g=1}^{L} \frac{\sigma_g^2 \Pr(R=1\mid G=g)\Pr^2(g)}{\sampledg\Pr^2(R=1)}
\end{equation}
The result follows from the following lemma with $A_g = \sigma^2_g\Pr(R=1\mid G=g)\Pr^2(g)/\Pr^2(R=1)$.

\begin{lemma}\label{lem-min}
The minimizer of $\sum_{g=1}^{L}\frac{A_g}{\sampledg}$ subject to $\sum_{g=1}^{L}\sampledg=\samplednum$ is $\sampledg =\frac{ \samplednum\sqrt{A_g}}{\sum_{g=1}^{L}\sqrt{A_g}}$.
\end{lemma}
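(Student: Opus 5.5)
We treat this as a continuous optimization problem over positive reals: minimize $f(\sampledg)=\sum_{g=1}^{L} A_g/\sampledg$ over $\sampledg>0$ subject to $\sum_{g}\sampledg=\samplednum$. We assume $A_g>0$ for every $g$; a stratum with $A_g=0$ contributes nothing and can be dropped. Rather than rely only on Lagrange multipliers, which give a stationary point but require a separate argument for global optimality, the cleanest route is the Cauchy--Schwarz inequality. It yields a lower bound valid for every feasible allocation, and equality is attained exactly at the claimed allocation.

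The key step is to write $\sqrt{A_g} = \bigl(\sqrt{A_g}/\sqrt{\sampledg}\bigr)\cdot \sqrt{\sampledg}$ and apply Cauchy--Schwarz:
\[
\Bigl(\sum_{g=1}^{L}\sqrt{A_g}\Bigr)^2 \le \Bigl(\sum_{g=1}^{L}\frac{A_g}{\sampledg}\Bigr)\Bigl(\sum_{g=1}^{L}\sampledg\Bigr) = \samplednum\sum_{g=1}^{L}\frac{A_g}{\sampledg}.
\]
Hence every feasible allocation satisfies $\sum_g A_g/\sampledg \ge (\sum_g\sqrt{A_g})^2/\samplednum$.

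Next we check attainment. Substituting $\sampledg=\samplednum\sqrt{A_g}/\sum_{g'}\sqrt{A_{g'}}$ gives $A_g/\sampledg=\sqrt{A_g}\sum_{g'}\sqrt{A_{g'}}/\samplednum$, and summing over $g$ gives exactly the lower bound. This allocation is feasible, since it sums to $\samplednum$ and is positive.

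For uniqueness, equality in Cauchy--Schwarz holds iff the vectors $(\sqrt{A_g/\sampledg})_g$ and $(\sqrt{\sampledg})_g$ are proportional. That means $\sampledg\propto\sqrt{A_g}$, and the constraint then fixes the constant. As a cross-check, a Lagrangian computation gives $-A_g/\sampledg^2+\lambda=0$, so again $\sampledg\propto\sqrt{A_g}$. Strict convexity of $1/x$ on $(0,\infty)$ also confirms that the minimizer is unique.

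There is no serious obstacle here. The only points needing care are the domain conventions: the allocations must be positive, with $f\to\infty$ as any $\sampledg\to 0$ so the boundary is excluded, and $\sampledg$ is relaxed to real values, with integer rounding ignored as in the standard derivation.
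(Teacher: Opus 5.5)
Your argument is correct, but its main line differs from the paper's. The paper's proof is exactly the Lagrange computation you mention only as a cross-check. Setting $-A_g/n_{g\cdot}^2+\lambda=0$ gives $n_{g\cdot}=\sqrt{A_g}/\sqrt{\lambda}$, and the constraint then yields $\sqrt{\lambda}=\frac{1}{n}\sum_g\sqrt{A_g}$. That identifies the stationary point but never argues that it is a minimizer, let alone the unique global one. This is left implicit and rests on the convexity of $x\mapsto A_g/x$ on $(0,\infty)$ for $A_g>0$, which is the point you make explicit.

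Your Cauchy--Schwarz route instead gives a lower bound $(\sum_g\sqrt{A_g})^2/n$ that holds for every feasible allocation. You show it is attained and obtain uniqueness from the equality case. Global optimality therefore comes without any second-order or boundary analysis. You also get the explicit optimal value of the approximate variance, which is handy for comparing designs. In addition, you state the conventions that the paper leaves tacit: $A_g>0$, positive real-valued allocations, and no integer rounding.

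What the Lagrangian/KKT route buys in return is mechanical extensibility. It adapts directly to variants such as upper-bound constraints on the stratum targets, which are needed to keep $\pi_i\le 1$. For those, a one-line inequality like yours is not immediately available.
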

\begin{proof}[\textbf{Proof of Lemma \ref{lem-min}}] By Lagrange optimization we need to minimize $f = \sum_{g=1}^{L}\frac{A_g}{\sampledg }+\lambda(\sum_{g=1}^{L}\sampledg - \samplednum)$. Taking the derivative yields equation $\frac{\partial f}{\partial \sampledg }=-\frac{A_g}{\sampledg ^2}+\lambda=0$, which implies $\sampledg =\frac{\sqrt{A_g}}{\sqrt{\lambda}}$. Applying the constraint yields $\samplednum=\sum_{g=1}^{L}\sampledg =\frac{1}{\sqrt{\lambda}}\sum_{g=1}^{L}\sqrt{A_g}$ which implies $\sqrt{\lambda}=\frac{1}{\samplednum}\sum_{g=1}^{L}\sqrt{A_g}$. 
\end{proof}

Since the $\sigma_g^2$ are unknown, we will assume they are homogeneous, in which case the variance-minimizing value of $\sampledg $ is
\begin{equation}\label{eq-optimal_ng_pgversion}
    \sampledg^{\mathrm{strat}}  = n\frac{\sqrt{\Pr(R=1\mid G=g)}\Pr(G=g)}{\sum_{g'} \sqrt{\Pr(R=1\mid G= g')}\Pr(G=g')}
\end{equation}
We assume estimates of $\Pr(G=g),\Pr(R=1\mid G=g)$ are available prior to sampling. In finite sample notation, letting $\Pr(G=g)=\frac{\popoverallg}{\popoverall}$, this is equivalent to Equation~\eqref{eq-optimal_ng}.

\subsubsection{Choosing target sample sizes in Poisson sampling}\label{sec-opt-pois}

Again consider the stratified estimator $\hat{\mu}_{\mathrm{strat}}$ but now under our proposed Poisson sampling scheme. Consider $\Pr(R=1\mid S=s,G=g)$ as fixed and focus on how to choose $\targetg$ that minimizes the same approximate variance from Section~\ref{sec-opt}. Under Poisson sampling, $\hat{\mu}_{\mathrm{strat}}$ is unbiased only if $Y$ satisfies $Y\indep S\mid G,R=1$. However, once we assume $\Var(\overline{Y}_{g1})$ are homogeneous, this has no impact on the variance minimizing calculation. In proposition \ref{prop-pois-counts}, we first derive component quantities that change in the Poisson sampling context relative to stratified sampling.

\begin{proposition}[\textbf{Expected Minority Population Counts in Sample}]\label{prop-pois-counts}
Suppose in a population of $N$ units, each unit is sampled independently with probability $\pi_i=\frac{\targetg}{\probnormgi}\Pr(R_i=1\mid S_i,G_i)$.  Then 
\begin{equation}\label{eq-pR1_gvn_I1_g}
    \Pr(R=1\mid G=g,I=1)= \frac{\sum_{s=1}^{|\cS|} \popoverallgs\Pr(R=1\mid S=s,G=g)^2}{\probnormg}
\end{equation}
\noindent and the expectation of sample minority population count $\sampledrareg$ from statum $g$ is
\begin{align}
\E[\sampledrareg] &= \frac{\targetg}{\probnormg}\sum_{s=1}^{|\cS|}\popoverallgs\Pr(R=1\mid S=s,G=g)^2 = \targetg\Pr(R=1\mid G=g,I=1)
\end{align}
\end{proposition}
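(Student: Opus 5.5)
The plan is to treat the finite population as fixed and the sampling indicators $I_i$ as independent Bernoulli$(\pi_i)$. The BISG probabilities $\Pr(R=1\mid S=s,G=g)$ are interpreted as the within-cell minority proportions $N_{gs1}/\popoverallgs$. Both claims then reduce to linearity of expectation over units grouped by surname cell $(g,s)$. The probability $\Pr(R=1\mid G=g,I=1)$ is read as the expected fraction of sampled stratum-$g$ units who are minority members, i.e.\ the ratio $\E[\sampledrareg]/\E[\sampledg]$, which is the sense in which the paper calls it an ``expected per-stratum success rate.''

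First I would compute $\E[\sampledg]$. Summing $\pi_i$ over $\{i:G_i=g\}$ gives $\frac{\targetg}{\probnormg}\sum_{i:G_i=g}\Pr(R_i=1\mid S_i,G_i)=\targetg$, by the definition of $\probnormg$. This confirms the normalization in Equation~\eqref{eq-sampprob}.

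Next I would compute $\E[\sampledrareg]=\sum_{i:G_i=g}\pi_i\ind\{R_i=1\}$ by grouping units by surname. Within cell $(g,s)$ every unit has the same $\pi_i=\frac{\targetg}{\probnormg}\Pr(R=1\mid S=s,G=g)$, and exactly $N_{gs1}$ of them have $R_i=1$. Hence
\begin{equation*}
\E[\sampledrareg]=\frac{\targetg}{\probnormg}\sum_{s}N_{gs1}\Pr(R=1\mid S=s,G=g).
\end{equation*}
Substituting $N_{gs1}=\popoverallgs\Pr(R=1\mid S=s,G=g)$ yields the squared-probability expression in the statement.

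The ratio $\E[\sampledrareg]/\E[\sampledg]$ is therefore $\sum_s\popoverallgs\Pr(R=1\mid S=s,G=g)^2/\probnormg$, which is Equation~\eqref{eq-pR1_gvn_I1_g}. The identity $\E[\sampledrareg]=\targetg\Pr(R=1\mid G=g,I=1)$ follows immediately. I would also record that $\probnormg=\sum_s\popoverallgs\Pr(R=1\mid S=s,G=g)$, to match Equation~\eqref{eq-poissonsuccessrate}.

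The main obstacle is interpretive rather than computational. The BISG probabilities must be identified with the realized cell proportions, or alternatively one treats $R_i$ as random with $\Pr(R_i=1\mid S_i,G_i)$ independent of $I_i$ given $(S_i,G_i)$, and takes expectations over both. I would state this assumption explicitly and note that the derivation is identical under either reading.
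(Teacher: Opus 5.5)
Your proposal is correct and is essentially the paper's argument. The paper works with the joint law of a uniformly drawn unit and uses $R\indep I\mid S,G$ to factor $\Pr(R=1,I=1,S=s,G=g)$, then sums over $s$ and divides by $\Pr(G=g,I=1)=\targetg/\popoverall$; this is exactly your ratio $\E[\sampledrareg]/\E[\sampledg]$ with numerator and denominator each divided by $\popoverall$, and your stated interpretive assumption matches the conditional independence the paper invokes.
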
 

\begin{proof}[\textbf{Proof of Proposition \ref{prop-pois-counts}}]
By design, $R\indep I \mid S,G$ since the sampling only depends on $S,G$. Using this and the Law of Total Probability, 
\small 
\begin{align}
\Pr(R=1,I=1,S=s,G=g) &= \Pr(S=s,G=g)\Pr(I=1\mid s,G=g)\Pr(R=1\mid I=1,S=s,G=g)\\
        &=\Pr(S=s,G=g)\Pr(I=1\mid s,G=g)\Pr(R=1\mid S=s,G=g)\\
         &= \frac{\popoverallgs}{\popoverall}\frac{\targetg\Pr(R=1\mid S=s,G=g)^2}{\probnormg}\\ 
    \Pr(R=1, G=g,I=1) &=\sum_{s=1}^{|\cS|}\Pr(R=1,I=1,S=s,G=g) \\
    &= \frac{\targetg}{N\probnormg}\sum_{s=1}^{|\cS|} \popoverallgs\Pr(R=1\mid S=s,G=g)^2 \\
\Pr(G=g,I=1)&=\sum_{s=1}^{|\cS|}\Pr(S=s, G=g)\Pr(I=1\mid s,G=g) \\
&= \frac{\targetg}{N\probnormg}\sum_{s=1}^{|\cS|}\popoverallgs\Pr(R=1\mid S=s,G=g) \\
&= \frac{\targetg}{\popoverall} \\
\Pr(S=s\mid I=1,G=g) &=\frac{\popoverall}{\targetg}\Pr(S=s,G=g)\frac{\targetg\Pr(R=1\mid S=s,G=g)}{\probnormg} \\
 & = \frac{\popoverall}{\targetg}\frac{\popoverallgs}{\popoverall}\frac{\targetg\Pr(R=1\mid S=s,G=g)}{\probnormg}= \frac{\popoverallgs\Pr(R=1\mid S=s,G=g)}{\probnormg}
\end{align}
\normalsize

Taking the ratio of the 2nd and 3rd equations above results in Equation~\eqref{eq-pR1_gvn_I1_g}. We then have the following expectations,
\begin{align}
\E[\sampledraresg] &= \E\left[\sum_{i=1}^{\popoverall}1_{(R_i=1,S_i=s,G_i=g)}I_i\right] =N\Pr(S=s,G=g)\Pr(I=1\mid s,G=g)\Pr(R=1\mid I=1, s,G=g)\\
&=\popoverall\frac{\popoverallgs}{\popoverall}\frac{\targetg}{\probnormg}\Pr(R=1\mid S=s,G=g)^2 =\targetg\frac{\popoverallgs}{\probnormg}\Pr(R=1\mid S=s,G=g)^2 \\
\E[\sampledrareg]&=\sum_{s=1}^{|\cS|}\E[\sampledraresg] = \frac{\targetg}{\probnormg}\sum_{s=1}^{|\cS|}\popoverallgs\Pr(R=1\mid S=s,G=g)^2
\end{align}
\end{proof}

\noindent Next, we derive the main result. 

\begin{proposition}[\textbf{Poisson sampling optimal allocation}]\label{prop-Tg-opt}
Suppose that in a population of $N$ units, each unit is sampled independently with probability $\pi_i=\frac{\targetg}{\probnormgi}\Pr(R_i=1\mid S_i,G_i)$. Suppose the variance of $\hat{\mu}_{\mathrm{strat}}$ calculated on the resulting sample is approximated using the first order Taylor approximation $\E\left[\frac{1}{\sampledrareg}\right]\approx \frac{1}{\E[\sampledrareg]}$. If the variances $\Var(\overline{Y}_{g1})$ are homogeneous across $g$, then this approximate variance of $\hat{\mu}_{\mathrm{strat}}$ is minimized by
\begin{equation}\label{eq-Tg_opt_poststrat-app}
    \frac{\targetg^{\mathrm{pois}}}{\targetoverall} \propto  \frac{\Pr(G=g\mid R=1)}{\sqrt{\Pr(R=1\mid G=g,I=1)}}
\end{equation}
\end{proposition}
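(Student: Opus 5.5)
The plan is to mirror the stratified-sampling calculation of Section~\ref{sec-opt}, replacing the expected minority count per stratum with its Poisson-sampling analogue from Proposition~\ref{prop-pois-counts}, and then to invoke Lemma~\ref{lem-min}.

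\textbf{Approximate variance.} Since units are sampled independently, the within-stratum means $\overline{Y}_{g1}$ are independent across $g$. Hence
\[
\Var(\hat{\mu}_{\mathrm{strat}}) = \sum_{g=1}^{L}\Pr(G=g\mid R=1)^2\,\Var(\overline{Y}_{g1}).
\]
Conditional on $\sampledrareg$, $\Var(\overline{Y}_{g1})$ equals $\sigma_g^2/\sampledrareg$, where $\sigma_g^2$ is the within-stratum minority variance. I will take $\sigma_g^2$ to be common, $\sigma_g^2 \equiv \sigma^2$; this is the homogeneity assumption, and the text notes it removes any dependence of the optimization on $Y\indep S\mid G,R=1$. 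Applying the first order Taylor approximation $\E[1/\sampledrareg]\approx 1/\E[\sampledrareg]$ and Proposition~\ref{prop-pois-counts}, which gives $\E[\sampledrareg]=\targetg\Pr(R=1\mid G=g,I=1)$, yields
\[
\Var(\hat{\mu}_{\mathrm{strat}}) \approx \sum_{g=1}^{L}\frac{\sigma^2\Pr(G=g\mid R=1)^2}{\targetg\,\Pr(R=1\mid G=g,I=1)}.
\]

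\textbf{Key observation.} The quantity $\Pr(R=1\mid G=g,I=1)$ in \eqref{eq-pR1_gvn_I1_g} does not depend on $\targetg$: the factor $\targetg$ cancels between numerator and denominator. The objective therefore has the form $\sum_g A_g/\targetg$ with
\[
A_g=\frac{\sigma^2\Pr(G=g\mid R=1)^2}{\Pr(R=1\mid G=g,I=1)},
\]
which is free of the decision variables.

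\textbf{Optimization.} Apply Lemma~\ref{lem-min} with constraint $\sum_g\targetg=\targetoverall$. The optimal allocation is $\targetg\propto\sqrt{A_g}$, and the constant $\sigma$ drops out upon normalization. This gives
\[
\frac{\targetg^{\mathrm{pois}}}{\targetoverall}\propto\frac{\Pr(G=g\mid R=1)}{\sqrt{\Pr(R=1\mid G=g,I=1)}},
\]
which is \eqref{eq-Tg_opt_poststrat-app}. Convexity of $A_g/\targetg$ for positive $\targetg$ ensures that the Lagrange stationary point is indeed the minimizer.

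\textbf{Main obstacle.} The only delicate point is justifying that $\Pr(R=1\mid G=g,I=1)$ can be treated as a constant in the optimization. This follows from the cancellation of $\targetg$ in Proposition~\ref{prop-pois-counts}. I would also remark that the box constraint \eqref{eq-upper} is assumed not to bind, since otherwise the unconstrained Lagrangian solution need not be feasible.
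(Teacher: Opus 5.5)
Your proposal is correct and follows essentially the same route as the paper: approximate $\Var(\hat{\mu}_{\mathrm{strat}})$ via the Taylor step, substitute $\E[\sampledrareg]=\targetg\Pr(R=1\mid G=g,I=1)$ from Proposition~\ref{prop-pois-counts}, and apply Lemma~\ref{lem-min}. Make explicit, as the paper does, that the per-unit variance assumed homogeneous is $\Var(Y\mid G=g,I=1,R=1)$ rather than $\Var(Y\mid G=g,R=1)$, because Poisson sampling tilts the sampled minority units toward high-probability surnames; your added remarks on the $\targetg$-independence of $\Pr(R=1\mid G=g,I=1)$, on convexity, and on the non-binding constraint \eqref{eq-upper} are sound refinements.
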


\begin{proof}[\textbf{Proof of Proposition \ref{prop-Tg-opt}}]
Consider the estimator $\hat{\E}[Y\mid R=1] = \sum_{g=1}^{L}\overline{Y}_{g1}\Pr(G=g\mid R=1)$. 
This estimator is now only unbiased if $Y$ satisfies $Y\indep S\mid G,R=1$. This means that 
$\Var(\overline{Y}_{g1})=\Var(Y\mid G=g,I=1,R=1)$ is not necessarily the same as $\sigma_g^2=\Var(Y\mid G=g,R=1)$, but otherwise, it does not affect the variance calculation and we obtain the same form of the variance in equations~\eqref{var-approx} and~\eqref{var-approx-full} above, only with a new expression for $\E[\sampledrareg]$ which is derived in Proposition \ref{prop-pois-counts}. 
\begin{equation}\label{eq-var-pois}
    \Var(\hat{\mu}_{\text{strat}}) \approx \sum_{g=1}^{L}\frac{\Var(\overline{Y}_{g1})\Pr^2(g\mid R=1)}{\targetg\Pr(R=1\mid I=1,G=g)}
\end{equation}
\noindent We then apply Lemma \ref{lem-min} to obtain that the optimal $\targetg$ is, assuming homogeneity of the $\Var(\overline{Y}_{g1})$,

\begin{equation}\label{eq-pois-intermediary}
\targetg^{\mathrm{pois}} =\targetoverall \frac{\Pr(G=g\mid R=1)}{\sqrt{\Pr(R=1\mid G=g,I=1)}}\left(\sum_{g'}\frac{\Pr(G=g'\mid R=1)}{\sqrt{\Pr(R=1\mid G=g',I=1)}}\right)^{-1}
\end{equation}
\end{proof} 

\noindent Next, we relate the new Poisson allocation formula to the stratified sampling one.

\begin{proposition}[\textbf{Connection to stratified sampling allocation}]\label{lem-connection} 
If $\Pr(R=1\mid G=g,I=1)=c\Pr(R=1\mid G=g)$ for some $c\in [0,1]$ and all $g$, then $\targetg^{\mathrm{pois}}=\targetg^{\mathrm{strat}}$ where ${\targetg}^{\text{strat}}$ is the allocation under Equation~\eqref{eq-optimal_ng}.
\end{proposition}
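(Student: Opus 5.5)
Substitute the hypothesis into the explicit form of the Poisson allocation in Equation~\eqref{eq-pois-intermediary} from the proof of Proposition~\ref{prop-Tg-opt}, and then rewrite the result in the form of Equation~\eqref{eq-optimal_ng}. Under $\Pr(R=1\mid G=g,I=1)=c\Pr(R=1\mid G=g)$, each weight in Equation~\eqref{eq-pois-intermediary} becomes
\[
\frac{\Pr(G=g\mid R=1)}{\sqrt{c}\,\sqrt{\Pr(R=1\mid G=g)}} .
\]
The factor $1/\sqrt{c}$ does not depend on $g$. It therefore appears in both the numerator and the normalizing sum, and it cancels. This step requires $c>0$. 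I would note that $c=0$ is degenerate: then $\Pr(R=1\mid G=g,I=1)=0$ for all $g$, so the Poisson allocation is undefined. The statement should be read with $c\in(0,1]$.

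Next, convert $\Pr(G=g\mid R=1)$ using Bayes' rule:
\[
\Pr(G=g\mid R=1)=\frac{\Pr(R=1\mid G=g)\Pr(G=g)}{\Pr(R=1)} .
\]
The constant $1/\Pr(R=1)$ again cancels between numerator and denominator. Each weight then becomes
\[
\frac{\Pr(R=1\mid G=g)\Pr(G=g)}{\sqrt{\Pr(R=1\mid G=g)}}=\sqrt{\Pr(R=1\mid G=g)}\,\Pr(G=g).
\]
Hence $\targetg^{\mathrm{pois}}$ is exactly the expression in Equation~\eqref{eq-optimal_ng_pgversion}, with $\targetoverall$ in place of $n$. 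Setting $\Pr(G=g)=\popoverallg/\popoverall$ and cancelling $\popoverall$ gives Equation~\eqref{eq-optimal_ng}, so the allocations agree for a common total $\targetoverall=n$.

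The only point needing care is a side condition. Strata with $\Pr(R=1\mid G=g)=0$ would make the weights $0/0$, so I would assume $\Pr(R=1\mid G=g)>0$ for every $g$. This also ensures that $\Pr(R=1\mid G=g,I=1)$ is well defined, since $\probnormg>0$. Beyond that, the argument is pure algebraic cancellation and I expect no real obstacle.
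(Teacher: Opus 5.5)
Your proof is correct and follows the paper's argument exactly. You substitute the hypothesis into Equation~\eqref{eq-pois-intermediary}, cancel the $g$-independent factor $1/\sqrt{c}$, and use Bayes' rule to rewrite $\Pr(G=g\mid R=1)/\sqrt{\Pr(R=1\mid G=g)}$ as a constant multiple of $\sqrt{\Pr(R=1\mid G=g)}\Pr(G=g)$.

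The side conditions you add ($c>0$, $\Pr(R=1\mid G=g)>0$, and a common total $\targetoverall=n$) are left implicit in the paper, and they are the right ones. You need not restrict to $c\le 1$, though. The cancellation works for any $c>0$, and applying Cauchy--Schwarz to Equation~\eqref{eq-pR1_gvn_I1_g} with $\Pr(R=1\mid G=g)=\probnormg/\popoverallg$ in fact forces $c\ge 1$.
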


\begin{proof}[\textbf{Proof of Proposition \ref{lem-connection}}]
If $\Pr(R=1\mid G=g,I=1)=c\Pr(R=1\mid G=g)$ for some $c\in [0,1]$ for all $g$, we can instead plug this into equation~\eqref{eq-pois-intermediary} in the proof of Proposition \ref{prop-Tg-opt} to get
\begin{equation}\label{Topt-appendix}
\targetg^{\mathrm{pois}}\propto  \frac{\Pr(G=g\mid R=1)}{\sqrt{c\Pr(R=1\mid G=g)}} \propto  \sqrt{\Pr(R=1\mid G=g)}\Pr(G=g)
\end{equation}
which is the stratified sampling allocation from before.
\end{proof}

Finally, we remark on why the optimal allocation formula \eqref{eq-Tg_opt_poststrat-app} makes sense and is more similar to the disproportionate allocation formula than it may initially seem. The numerator of this expression clearly up-weights strata representing a larger proportion of the Jewish population. However, from equation~\eqref{eq-pR1_gvn_I1_g}, strata with many large $\Pr(R=1\mid S=s,G=g)$ will have larger denominator $\Pr(R=1\mid I=1,G=g)$, which brings $\targetg^{\mathrm{pois}}$ down. 
The Poisson allocation formula therefore represents a trade-off. If a stratum forms a large proportion of the $R=1$ population (i.e., high $\Pr(G=g\mid R=1)$), we want to sample from it more, but if it also has many large $\Pr(R=1\mid S=s,G=g)$ (call this Scenario 1), this pushes the allocation down because the probability needs to be spread out  across more individuals. Conversely, if a stratum has a small $\Pr(G=g\mid R=1)$ with only a few high $\Pr(R=1\mid S=s,G=g)$ and many small ones (call this Scenario 2), the stratum allocation will be up-weighted a bit by the denominator term.  For individuals in this kind of stratum who \textit{do} have a high $\Pr(R=1\mid S=s,G=g)$ probability, we really want to sample them!  Overall, this logic shows that like $\sampledg^{\mathrm{strat}}$, the optimal Poisson allocation is related to $\Pr(R=1\mid G=g)$. Scenario 1 could happen if $\Pr(R=1\mid G=g)$ is large while Scenario 2 could happen if $\Pr(R=1\mid G=g)$ is small.

\subsubsection{Accounting for filtering in Poisson target calculations}\label{app-filtering-pois}

Next, we discuss how to account for filtering in the sampling probability and allocation calculations. Define {\it filtering function} $H(s)$ which equals $1$ if surname $s$ is kept and $0$ if a surname is to be filtered out. We define filter-updated sampling probabilities
\begin{equation}
\pi_i^* := \frac{\targetgi}{\probnormgistar}H(S_i)\Pr(R_i=1\mid S_i,G_i) \quad\quad\quad \probnormgstar := \sum_{i:G_i=g}H(S_i)\Pr(R_i=1\mid S_i, G_i)
\end{equation}
\noindent where $H_i=H(S_i)$ indicates whether unit $i$ is kept. Filtering alters sampling frame counts, replacing $\popoverallg$ by a post-filtering count $\popoverallgstar = \sum_{i=1}^{N}\ind\{G_i=g,H_i=1\}$, and other key quantities.

\begin{proposition}[\textbf{Poisson sampling optimal allocation with filtering}]\label{prop-filtering}
Under a Poisson sampling scheme with filtering function $H(s)$ and sampling probabilities $\pi_i^*$:
\begin{align}
    \Pr(G=g\mid R=1,H=1)&= \frac{\probnormgstar}{\sum_g\probnormgstar}
    \\
   \Pr(R=1\mid I=1, G=g,H=1) &=\frac{\sum_{s:H(s)=1}\popoverallgsstar\Pr(R=1\mid S=s,G=g)^2}{\sum_{s:H(s)=1}\popoverallgsstar \Pr(R=1\mid S=s,G=g)}
\end{align}    
\end{proposition}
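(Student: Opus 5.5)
The plan is to mirror the proof of Proposition~\ref{prop-pois-counts}. The only change is that the sampling frame is replaced by its filtered version $\{i: H_i=1\}$. As there, we treat the frame as a finite population: probabilities are empirical proportions over units, and the BISG quantity $\Pr(R=1\mid S=s,G=g)$ is identified with the within-cell minority share, so that $\popoverallgs\Pr(R=1\mid S=s,G=g)=N_{gs1}$.

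\textbf{First identity.} Since $H=H(S)$ is a deterministic function of the surname, conditioning on $H=1$ only restricts attention to surnames with $H(s)=1$. Moreover, for those surnames $\popoverallgsstar=\popoverallgs$. Then
\begin{align}
\Pr(G=g, R=1, H=1) &= \frac{1}{N}\sum_{s:H(s)=1}\popoverallgs\Pr(R=1\mid S=s,G=g) \\
&= \frac{1}{N}\sum_{i:G_i=g}H(S_i)\Pr(R_i=1\mid S_i,G_i) = \frac{\probnormgstar}{N}.
\end{align}
Summing over $g$ gives $\Pr(R=1,H=1)=\sum_{g'}\pi^*(g')/N$, and taking the ratio yields the first display.

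\textbf{Second identity.} Here I repeat the chain of equalities in the proof of Proposition~\ref{prop-pois-counts} with $\pi_i^*$ in place of $\pi_i$. By design $I\indep R\mid S,G$, and $\Pr(I=1\mid S=s,G=g)=\targetg H(s)\Pr(R=1\mid S=s,G=g)/\probnormgstar$. Hence
\begin{align}
\Pr(R=1,I=1,S=s,G=g,H=1) = \frac{\popoverallgsstar}{N}\,\frac{\targetg}{\probnormgstar}\Pr(R=1\mid S=s,G=g)^2 ,
\end{align}
for $H(s)=1$, and this probability is zero otherwise. Summing over $s$ with $H(s)=1$ gives $\Pr(R=1,I=1,G=g,H=1)$. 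Similarly, $\Pr(I=1,G=g,H=1)$ equals the same sum with the square replaced by the first power, which collapses to $\targetg/N$. Note that $I=1$ already implies $H=1$, so conditioning on $H=1$ adds nothing once $I=1$. Taking the ratio, the factor $\targetg/(N\probnormgstar)$ cancels and the stated expression follows.

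\textbf{Main obstacle.} The only delicate point is interpretive rather than computational. The statement must be read with $\Pr(R=1\mid S=s,G=g)$ being the true within-cell (frame) probability, exactly as in Proposition~\ref{prop-pois-counts}. If it were instead the estimated BISG value, the second identity would not be a theorem and would hold only as a design calculation. The first identity in particular relies on $N_{gs1}=\popoverallgs\Pr(R=1\mid S=s,G=g)$, so I would state this identification explicitly at the outset. Beyond that, one only needs $\probnormgstar>0$, so that $\pi^*_i$ is well defined, and the usual cap \eqref{eq-upper}, with $\probnormgstar$ in place of $\probnormg$, so that $\pi_i^*\le 1$.
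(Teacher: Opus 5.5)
Your proof is correct and is essentially the paper's argument: the paper likewise uses that $H$ is a deterministic function of $S$ (so $R\indep H\mid S,G$), computes $\Pr(G=g,R=1\mid H=1)=\pi^*(g)/N^*$ and normalizes over $g$, then sums the two $I=1$ joint probabilities over surnames with $H(s)=1$ and takes their ratio. The only difference is cosmetic. The paper conditions on $H=1$ and divides by $N^*$, while you intersect with $\{H=1\}$ and divide by $N$; this factor cancels in both ratios.
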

These formulas can be used in the allocation formula for $\targetg^{\mathrm{pois}}$ to obtain an allocation that accounts for filtering. This results in
\begin{equation}\label{eq-pois-filterupdate}
\frac{\targetg^{\mathrm{pois}}}{\targetoverall} \propto  \frac{\Pr(G=g\mid H=1, R=1)}{\sqrt{\Pr(R=1\mid I=1,H=1,G=g)}} \propto \frac{\probnormgstar^{3/2}}{\sqrt{\sum_{s:H(s)=1}\popoverallgs \Pr(R=1\mid S=s,G=g)^2}}
\end{equation}

\begin{proof}[\textbf{Proof of Proposition \ref{prop-filtering}}] 
Because the filter variable $H$ is a deterministic function of $S$, we must have $R\indep H\mid S,G$ and $R\indep H\mid S,G,I$. We then have:
\begin{align}
    \Pr(G=g, R=1\mid H=1) &= \sum_{s=1}^{|\cS|}\Pr(S=s,G=g,R=1\mid H=1) \nonumber \\
    &=\sum_{s=1}^{|\cS|}\Pr(S=s,G=g\mid H=1)\Pr(R=1\mid S=s,G=g,H=1)\nonumber\\
    &= \sum_{s=1}^{|\cS|}\frac{\popoverallgsstar}{\popoverallstar}\Pr(R=1\mid S=s,G=g) \nonumber \\
    &= \frac{\probnormgstar}{\popoverallstar}\label{eq-important-filtering}  
\end{align}
Normalizing this over $g$ results in the expression for $\Pr(G=g\mid R=1,H=1)$. Similarly, we can get the probability conditional on $I=1$ and $H=1$ using:
\small
    \begin{align}
    &\Pr(R=1, I=1,G=g\mid  H=1) = \sum_{s=1}^{|\cS|}\Pr(S=s,G=g,R=1,I=1\mid H=1)\\ 
        &=\sum_{s=1}^{|\cS|} \Pr(S=s,G=g\mid H=1)\Pr(I=1\mid S=s,G=g,H=1)\Pr(R=1\mid S=s,G=g,I=1,H=1)\\
        &=\frac{\targetg}{\probnormgstar\popoverallstar} \sum_{s:H(s)=1}\popoverallgsstar\Pr(R=1\mid S=s,G=g)^2
    \end{align}
and
    \begin{align}
    \Pr(I=1,G=g\mid  H=1) 
        &=\frac{\targetg}{\probnormgstar\popoverallstar} \sum_{s:H(s)=1}\popoverallgsstar\Pr(R=1\mid S=s,G=g)\\
    \Pr(R=1\mid I=1,G=g,H=1)&=\frac{\sum_{s:H(s)=1}\popoverallgsstar\Pr(R=1\mid S=s,G=g)^2}{\sum_{s:H(s)=1}\popoverallgsstar \Pr(R=1\mid S=s,G=g)}
    \end{align}
\end{proof}
\normalsize

\subsubsection{Accounting for filtering in stratified sampling}\label{app-filtering-alone}

By normalizing Equation~\eqref{eq-important-filtering} in the proof above using $\Pr(G=g\mid H=1)=\frac{\popoverallgstar}{\popoverallstar}$, we obtain
\begin{equation}\label{eq-est-filter-impact}
    \Pr(R=1\mid G=g,H=1) = \frac{\probnormgstar}{\popoverallgstar}
\end{equation}
However, without access to $\Pr(R=1\mid S=s,G=g)$, we cannot calculate this post-filtering formula for use in a post-filtering version of the disproportionate allocation formula for $\sampledg^{\mathrm{strat}}$ from equation~\eqref{eq-optimal_ng_pgversion}. A simpler filtering adjustment that uses only the updated $\popoverallgstar$ is to calculate the disproportionate allocation formula (Equation~\eqref{eq-optimal_ng}) using updated
\begin{equation}\label{filter-best-case}
\Pr^{\mathrm{best.case}}(R=1\mid G=g,H=1)=\frac{\popoverallg \Pr(R=1\mid G=g)}{\popoverallgstar}
\end{equation}

The numerator of this formula is exactly correct only in the best-case scenario where filtering removes no minority population members, but if this is mostly true under conservative filtering, it may be an acceptable approximation that up-weights strata where filtering indicates we have been able to remove more people with surnames not indicative of belonging to the minority population.

\subsubsection{Sample size expectations and variances under the Poisson sampling}\label{sample-size-variance}

For all calculations in this section, we implicitly condition on (treat as fixed) the observed stratum memberships and surnames in the sampling frame.
By construction, $\E[\sampledg]=\targetg$ and $\E[\samplednum]=\targetoverall$. The variance expression for $\sampledg$ in main text Equation~\eqref{eq-sampesizevar} follows simply from the variance of a Bernoulli $I_i$ and the fact that the Poisson sampling is independent across units:
\begin{align}
    \Var(\sampledg) &= \Var\left(\sum_{i:G_i=g}I_i\right)= \sum_{i:G_i=g}\Var(I_i) = \sum_{i:G_i=g}\pi_i(1-\pi_i) \\
    &= \targetg - \sum_{i:G_i=g}\pi_i^2  \\
    &= \targetg - \frac{\targetg^2}{\probnormg^2}\sum_{i:G_i=g}\Pr(R_i=1\mid S_i,G_i=g)^2 \label{eq-linetoflag}
\end{align}
where the second to last expression follows from the fact that $\sum_{i:G_i=g}\pi_i = \targetg$ by construction. The independence in the Poisson sampling also means the overall variance of $n$ is the sum of the variances for each $g$:
\begin{equation}
  \Var(n) = \sum_{g=1}^{L} \targetg - \sum_{g=1}^{L}\sum_{i:G_i=g}\pi_i^2 =  \targetoverall- \sum_{i=1}^{\popoverall}\pi_i^2 
\end{equation}

\medskip 

\noindent For the rare population counts, we have
   \begin{align}
\E[\sampledrare ] &= \sum_{i=1}^{\popoverall} \E[I_i R_i]= \sum_{i=1}^{\popoverall}
    P(I_i = 1 \mid S_i,G_i)\, P(R_i = 1 \mid S_i,G_i) \\
&= \sum_{i=1}^{\popoverall} \pi_i P(R_i = 1 \mid S_i,G_i) \\
&= \sum_{g=1}^{L}\sum_{s=1}^{|\cS|} \popoverallgs
    \left(\frac{\targetg \Pr(R=1\mid S=s,G=g)}{\probnormg}\right) \Pr(R=1\mid S=s,G=g) \\
&= \sum_{g=1}^{L} \frac{\targetg}{\probnormg} \sum_{s=1}^{|\cS|} \popoverallgs \Pr(R=1\mid S=s,G=g)^{2}.
\end{align}
We also calculate the variance of the number of minority members sampled in each state and overall.
\begin{align}
    \Var(\sampledrareg)    &= \sum_{i:G_i=g}\Var(I_iR_i) \\
    &= \sum_{i:G_i=g}\pi_i\Pr(R_i=1\mid S_i,G_i)(1-\pi_i\Pr(R_i=1\mid S_i,G_i)) \\
        &= \frac{\targetg}{\probnormg}\sum_{s=1}^{|\cS|} \popoverallgs\Pr(R=1\mid S=s,G=g)^2 - \frac{\targetg^2}{\probnormg^2}\sum_{s=1}^{|\cS|} \popoverallgs\Pr(R=1\mid S=s,G=g)^4 \\
    \Var(\sampledrare) 
        &= \sum_{g=1}^{L}\left(\frac{\targetg}{\probnormg}\sum_{s=1}^{|\cS|} \popoverallgs\Pr(R=1\mid S=s,G=g)^2 - \frac{\targetg^2}{\probnormg^2}\sum_{s=1}^{|\cS|} \popoverallgs\Pr(R=1\mid S=s,G=g)^4 \right)
\end{align}

\noindent \textbf{Note for sensitivity calculations}: Above, we use the theoretical form of $\pi_i$, which is a function of $\Pr(R_i=1\mid S_i,G_i)$. When, in practice, $\pi_i$'s are calculated from $\hat{\Pr}(R=1\mid S=s,G=g)$, the $\Pr(R=1\mid S=s,G=g)^2$ 
and $\Pr(R=1\mid S=s,G=g)^4$ appearing in the means and variances of $\sampledrareg$ and $\sampledrare$ become
\begin{equation}\label{eq-varexp2}
    \hat{\Pr}(R=1\mid S=s,G=g)\Pr(R=1\mid S=s,G=g)
\end{equation}
and
\begin{equation}\label{eq-varexp2}
    \hat{\Pr}(R=1\mid S=s,G=g)^2\Pr(R=1\mid S=s,G=g)^2
\end{equation}
respectively where the right term concerns the true rate of $R_i=1$ units among $S_i=s,G_i=g$ and the left term is used in constructing the true sampling probability used. In calculations for the mean and variance of $\sampledg$ and $\samplednum$, all $\Pr(R_i=1\mid S_i,G_i=g)$ become $\hat{\Pr}(R_i=1\mid S_i,G_i=g)$. This is relevant below in our discussion of sensitivity calculations that allow these to differ.

\subsubsection{Details of Table \ref{tab-samplingsuccerates} Calculations}\label{sec-tabledetails} 

Table \ref{tab-samplingsuccerates}  illustrates the limited expected success of some existing sampling strategies and compares it to the estimated expected success of our proposed method. Each row is an estimate of $\E[\sampledrare]/\E[\samplednum]$. For stratified sampling, $\samplednum$ is fixed to $50,000$ and the Poisson sampling satisfies $\E[\samplednum]=50,000$ by design. The expected number of Jewish people sampled was calculated in each case as described below. We then estimate the fraction sampled that will be Jewish by dividing this by $\E[n]$.

\begin{enumerate}

   \item Row 1 is based on taking a simple random sample of size $\samplednum$ form the voter file of size $\popoverall$. The expected rare population members sampled is then $\popoverallr/\popoverall$, which we estimate using
   \begin{equation}
\E\left[\frac{1}{\samplednum}\sum_{i=1}^{\popoverall}I_iR_i\right]
=
\frac{1}{\samplednum}\sum_{g=1}^{L} \sum_{i:G_i=g}\E[I_iR_i]
=
\frac{1}{\samplednum}\sum_{g=1}^{L} \sum_{i:G_i=g} R_i \frac{\samplednum}{\popoverall}
=
\frac{1}{N}\sum_{g=1}^{L} \popoverallg \Pr(R=1\mid G=g)
\end{equation}
The resulting value of 1.9\% is slightly lower than the reported 2.4\% of U.S. adults who identify as Jewish in the 2021 Pew research report \citep{pew_jewish_population_2021}. This could come from differences and error in the AJPP and Pew estimates or from the fact that the U.S. adult population and the voter file are not exactly the same populations. Using standard SRS results given in \cite{Lohr1998}, the variance of the fraction of rare population units sampled is
\begin{equation}\label{eq-srs-var}
\Var\left(\frac{\sampledrare}{\samplednum}\right) = \Var\left(\frac{1}{\samplednum}\sum_{i=1}^{\popoverall}I_iR_i\right)= \frac{1}{\samplednum}\frac{\popoverallr}{\popoverall}\left(1-\frac{\popoverallr}{\popoverall}\right)\left(\frac{\popoverall-n}{\popoverall-1}\right)
   \end{equation}
   \item Row 2 is calculated as $\sum_{g=1}^{L} \sampledg\Pr(R=1\mid G=g)$ where $\sampledg$ is allocated according to the optimal disproportionate allocation formula given in Section~\ref{sec-samplemethod}, equation~\eqref{eq-optimal_ng} and where $\Pr(R=1\mid G=g)$ is estimated from AJPP data. The variance is

   \begin{equation}
        \Var\left(\frac{\sampledrare}{\samplednum}\right) =  \frac{1}{\samplednum^2}\sum_{g=1}^{L}\sampledg^2 \Var\left(\frac{\sampledrareg}{\sampledg}\right) 
   \end{equation}
   where each  $\Var\left(\frac{\sampledrareg}{\sampledg}\right)$ is as in equation~\eqref{eq-srs-var} with $\popoverallgr, \popoverallg,\sampledg$ in place of $\popoverallr,\popoverall,\samplednum$. 

    \item Row 3 is calculated as $\sum_{g=1}^{L} \sampledg^*\Pr(R=1\mid G=g,H=1)$ with the updated $\Pr(R=1\mid G=g,H=1)$ from equation~\eqref{eq-est-filter-impact} in Section~\ref{app-filtering-alone}. However, in order to create a realistic filtering-only allocation method that does not depend on having estimated $\Pr(R=1\mid S=s,G=g)$, we calculate the  $\targetg^*$ using equation~\eqref{eq-optimal_ng} with $\Pr_{\text{best.case}}(R=1\mid G=g,H=1)$ and $\popoverallgstar$ as described in Section~\ref{app-filtering-alone}. The variance formula has the same form as in row 2. 

    \item Row 4 uses a plug-in estimator  $\E[\sampledrare]/\E[\sampledrare]$ under Poisson sampling as calculated in Section~\ref{sample-size-variance}. This is a first-order Taylor approximation of the expectation $\E[\frac{\sampledrare}{\samplednum}]$. It also relies on our estimated surname, first name, and geography based probabilities of being Jewish. Here we used the $\targetg^{\mathrm{pois}}$ allocation with filtering adjustment described in Section~\ref{app-filtering-pois}. 
For the variance of the sampling fraction, we approximate the variance under a first-order Taylor approximation that will usually be conservative because we drop a $-\Cov(\samplednum, \sampledrare)$ term that will usually be negative.
\begin{equation}
 \Var\left(\frac{\sampledrare}{\samplednum}\right) \approx   \frac{\Var(\sampledrare)}{\targetoverall^2} + \frac{\E[\sampledrare]^2}{\targetoverall^4}\Var(\samplednum)
\end{equation}
The formulas for $\Var(\samplednum)$ and $\Var(\sampledrare)$ are in Section~\ref{sample-size-variance}. 

\paragraph{Sensitivity calculations.}

We also conducted a sensitivity analysis to examine the impact of setting the estimated BISG probabilities for unobserved surnames to a small positive value $\epsilon$ (used in calculating sampling probabilities $\pi_i$) while the true BISG probabilities for these surnames are denoted by $\delta$. In the formulas for $\E[\sampledrare]$ and $\Var(\sampledrare)$, terms like $\Pr(R=1\mid S=s,G=g)^2$ then become $\epsilon\delta$. First, we set both $\epsilon = \delta =  10^{-7}$, which is of the order of magnitude of the average minimum estimated probability across states. This assigns a small positive probability to  $\approx$ 94 million additional people and assumes their 
true probability of being Jewish is also tiny. The result is no meaningful difference in the expected yield compared to filtering (58.0\% instead of 58.6\%).
Second, we set the estimated BISG probabilities for these unobserved surnames to $\epsilon=0.01$ while assuming the true BISG probabilities are still $\delta=10^{-7}$. We then estimate that the expected proportion of people sampled who are Jewish would decrease to 45\%. Overall, if one assigns a positive probability to unobserved surnames, it is mainly important not to set this probability too high.

\end{enumerate}

\subsection{Bayesian Hierarchical Model and Sampler}\label{sec-sampler}

We discuss the details of fitting the model from Section~\ref{sec-learning-surname-dist} for learning $\Pr(S\mid G,R=1)$ from the Jewish obituary data. As a reminder, the model is
\begin{align}
\mgvec &\sim \text{Multinomial}(\mg, \thetagvec), &\text{ where  }& \theta_{gs}:=\Pr(S=s\mid G=g,R=1) \text{ for } s=1,...,|\cS|\\
\thetagvec &\overset{iid}{\sim} \text{Dirichlet}(\eta\alphavec) & \text{ where  } & \alphavec \in \mathbb{R}^{|\cS|} \text{ with } \sum_{s=1}^{|\cS|}\alpha_s = 1 \text{ and } \eta > 0\\ 
\alphavec &\sim \text{Dirichlet}(\gammavec) &\text{ where } & \gammavec \in \mathbb{R}_{+}^{|\cS|}\\
\eta &\sim  \pi & \text{ where } & \pi \text{ is a prior distribution},
\end{align}
where $\mgvec = (m_{g1},...,m_{g|\mathcal{S}|})$ with $\mg = \sum_{s\in \mathcal{S}}\mgs$ and we set the elements of hyperparameter $\gammavec$ to $\gammavec_s = \ms+1$ (to avoid Dirichlet parameters $\leq 1$). We exclude any surnames in the voter file that do not occur in the obituary data. For the prior on $\eta$, we used a diffuse gamma distribution $\eta\sim \text{Gamma}\left(\text{shape}=1,\ \text{rate} = \frac{1}{100}\right)$, which worked reasonably well in simulations.

\subsubsection{Posterior calculations}

Let $\Data=\{\mgvec\}_{g\in \mathcal{G}}$ represent the observed data. Using $\sum_{s=1}^{|\cS|}\alpha_s=1$, the full posterior is
\begin{equation}\label{eq-full-posterior}
\pi(\bm{\theta},\alphavec,\eta\mid \Data, \gammavec) \propto  \pi(\eta) \left(\prod_{s=1}^{|\mathcal{S}|}\alpha_s^{\gammavec_s-1}\right)\left(\frac{\Gamma(\eta)}{\prod_{s}\Gamma(\eta\alpha_s)}\right)^{L}\left( \prod_{g=1}^{L}\prod_{s=1}^{|\mathcal{S}|}\theta_{gs}^{\mgs+\eta\alpha_s-1}\right)
\end{equation}
and hence the conditional posterior of $\theta$ given $\alphavec, \eta$ is a Dirichlet with updated parameters $\mgs+\eta\alpha_s$ and conditional and unconditional posterior means
\begin{align}
\E[\theta_{gs} \mid \alphavec,\eta,\Data,\gammavec]&=\frac{\mgs+\eta\alpha_s}{\mg+ \eta}  = (1-\rho_g) \frac{\mgs}{\mg} + \rho_g \alpha_s, \hspace{.5cm} \rho_g=\frac{\eta}{\mg+\eta}, \label{eq-condpostmean}\\ 
\E[\theta_{gs} |\Data,\gammavec]&=\E\left[\frac{\mgs+\eta\alpha_s}{\mg+ \eta}\right] \label{eq-postmean}
\end{align}
which can be estimated using averages over posterior samples. The marginal posterior of $\alphavec, \eta$ is,
\begin{align}
\pi(\alphavec,\eta\mid \Data, \gammavec) &\propto \pi(\eta) \left(\prod_{s=1}^{|\mathcal{S}|}\alpha_s^{\gammavec_s-1}\right) \left(\frac{\Gamma(\eta)}{\prod_{s}\Gamma(\eta\alpha_s)}\right)^{L} \int \left(  \prod_{g=1}^{L}\prod_{s=1}^{|\mathcal{S}|}\theta_{gs}^{\mgs+\eta\alpha_s-1}\right)d\bm{\theta}\\ 
&\propto  \pi(\eta) \left(\prod_{s=1}^{|\mathcal{S}|}\alpha_s^{\gammavec_s -1}\right) \left(\frac{\Gamma(\eta)}{\prod_{s}\Gamma(\eta\alpha_s)}\right)^{L} \prod_{g=1}^{L} \frac{\prod_{s=1}^{|\mathcal{S}|}\Gamma(\mgs+\eta\alpha_s)}{\Gamma(\sum_{s=1}^{|\cS|}\mgs+\eta\alpha_s)}\\
&\propto  \pi(\eta)\Gamma(\eta)^L \left(\prod_{s=1}^{|\mathcal{S}|}\alpha_s^{\gammavec_s -1}\right) \left(\prod_{g}\prod_{s}\frac{\Gamma(\mgs+\eta\alpha_s)}{\Gamma(\eta\alpha_s)}\right) \prod_{g=1}^{L} \frac{1}{\Gamma(\mg+ \eta)}\\
&\propto  \pi(\eta)\Gamma(\eta)^L \left(\prod_{s=1}^{|\mathcal{S}|}\alpha_s^{\gammavec_s -1}\right) \left(\prod_{g}\prod_{s} \prod_{k=1}^{\mgs}(\eta\alpha_s + \mgs - k)\right) \prod_{g=1}^{L} \frac{1}{\Gamma(\mg+ \eta)}, \label{eq-finalmargpost}
\end{align}
where by convention, we let a product over $k=1,...,0$ be $1$ (this occurs where $\mgs=0$).  Finally, we derive the conditional posterior of $\alpha_i,\alpha_j$ given the data, $\eta$, and other $\alphavec$ elements (denoted $\alphavec_{-ij}$). First, note that the conditional marginal prior for $\alpha_i,\alpha_j$ is a scaled Dirichlet.
\begin{equation}
    \pi(\alpha_i,\alpha_j\mid \alphavec_{-ij},\gammavec) =  \text{Dirichlet}(\gammavec_i,\gammavec_j)*(1-\sum_{l\notin \{i,j\}}\alpha_l)
\end{equation}
which follows from the following lemma with $I=\{i,j\}$.

\begin{lemma}\label{lem-scale-dir}
Let $\alphavec \in \mathbb{R}^k$ be a Dirichlet($\gammavec$) random variable and let $I\subset\{1,...,k\}$. Let $S=\sum_{i\notin I}\alpha_i$. Then $\alpha_{I}\mid \alpha_{-I}\sim (1-S)\mathrm{Dirichlet}(\gammavec_{I})$ where $\sim$ denotes equality in distribution.
\end{lemma}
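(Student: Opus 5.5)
The plan is to use the gamma representation of the Dirichlet distribution. Write $\alpha_j = Z_j/\sum_{l=1}^k Z_l$ with $Z_1,\dots,Z_k$ independent, $Z_j\sim\mathrm{Gamma}(\gamma_j,1)$. Put $T_I=\sum_{i\in I}Z_i$ and $T_{-I}=\sum_{i\notin I}Z_i$. The standard fact I would invoke is that if independent gammas are normalized by their sum, the normalized vector is independent of the sum. Applied to the block $I$, this says $U:=(Z_i/T_I)_{i\in I}\sim\mathrm{Dirichlet}(\gammavec_I)$ and $U$ is independent of $T_I$. Since the $Z$'s in $I$ are independent of those outside $I$, $U$ is also independent of $(T_I,(Z_l)_{l\notin I})$.

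Next I would rewrite the coordinates. For $i\in I$,
\begin{equation*}
\alpha_i=\frac{Z_i}{T_I}\cdot\frac{T_I}{T_I+T_{-I}}=U_i\,(1-S),
\end{equation*}
because $S=\sum_{l\notin I}\alpha_l=T_{-I}/(T_I+T_{-I})$. The vector $\alpha_{-I}=(Z_l/(T_I+T_{-I}))_{l\notin I}$ is a measurable function of $(T_I,(Z_l)_{l\notin I})$, and hence it is independent of $U$. So $\alpha_I=(1-S)\,U$, where $U\sim\mathrm{Dirichlet}(\gammavec_I)$ is independent of $\alpha_{-I}$ and $1-S$ is a function of $\alpha_{-I}$. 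Conditioning on $\alpha_{-I}$ fixes the scalar $1-S$ and leaves the law of $U$ unchanged, which gives $\alpha_I\mid\alpha_{-I}\sim(1-S)\,\mathrm{Dirichlet}(\gammavec_I)$.

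The main obstacle is justifying the independence of $U$ from $T_I$. If I cannot cite it, I will prove it by the change of variables $(Z_i)_{i\in I}\mapsto(T_I,U_{I\setminus\{i_0\}})$ for a fixed index $i_0\in I$. The Jacobian is $T_I^{|I|-1}$, and the joint density factors as $T_I^{\sum_{i\in I}\gamma_i-1}e^{-T_I}\prod_{i\in I}U_i^{\gamma_i-1}$, which is a product of a function of $T_I$ and a function of $U$.

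As a cross-check I would also run a direct density argument. The Dirichlet density restricted to $\alpha_I$ with $\alpha_{-I}$ fixed is proportional to $\prod_{i\in I}\alpha_i^{\gamma_i-1}$ on the scaled simplex $\{\sum_{i\in I}\alpha_i=1-S\}$. Substituting $\alpha_i=(1-S)u_i$ turns this into a density proportional to $\prod_{i\in I}u_i^{\gamma_i-1}$ on the standard simplex, which is the kernel of $\mathrm{Dirichlet}(\gammavec_I)$. The only care needed there is the degenerate dimension, which I would handle by dropping one coordinate.
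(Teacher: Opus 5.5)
Your proof is correct and follows essentially the same route as the paper: the gamma representation, the identity $\alpha_I=(1-S)\,U$ with $U=(Z_i/T_I)_{i\in I}\sim\mathrm{Dirichlet}(\gammavec_I)$, and the Dirichlet--Gamma independence combined with independence of the two blocks of gammas. Your version is slightly sharper. You show that $U$ is independent of $(T_I,(Z_l)_{l\notin I})$, and hence of the whole vector $\alpha_{-I}$. The paper only records independence from $(T_I,T_{-I})$, i.e.\ from $S$, which strictly yields the conditional law given $S$ rather than given $\alpha_{-I}$ as the lemma states.
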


\begin{proof}[\textbf{Proof of Lemma \ref{lem-scale-dir}}]
Let $G_i \simind \mathrm{Gamma}(\gammavec_i)$ and let $T=\sum_{i=1}^{k}G_i$ and $\alpha_i=G_i/T$. Then $\alphavec \sim \text{Dirichlet}(\gammavec)$ \citep[Ch 7]{BlitzsteinMorris2021}. We also have $S=\frac{\sum_{i\notin I}G_i}{T}$ and $1-S =\frac{\sum_{i\in I}G_i}{T}$ and hence
\begin{equation}\label{eq-dir-prf}
 \alpha^*_i :=   \frac{\alpha_i}{1-S} = \frac{G_i/T}{\sum_{l\in I}G_l/T} = \frac{G_i}{\sum_{l\in I}G_l} \text{ for any }i\in I \ \ \Longrightarrow \ \ \alpha_I^*\sim  \text{Dirichlet}(\gammavec_I)
\end{equation}
If $\alpha^*_I \indep (1-S)$, then we can multiply both sides of \eqref{eq-dir-prf} by $(1-S)$ to obtain the result \citep[Ch.3] {BlitzsteinMorris2021}. To show this, note that by a standard Dirichlet-Gamma independence result, $\alpha^*_I \indep \sum_{l\in I}G_l$ and because both are entirely independent of $\sum_{l\notin I}G_l$, we have  $\alpha_I^*\indep (\sum_{l\in I}G_l,\sum_{l\notin I}G_l)$. Since $S$ and $1-S$ are functions of the right side of this independence, the necessary result holds. See also \cite{lin2016dirichlet} for a similar proof.
\end{proof}
\noindent Using the scaled Dirichlet result and prior separability of $(\alphavec, \eta)$, 
\begin{align}
& \pi(\alpha_i,\alpha_j\mid \alphavec_{-ij},\eta,\Data,\gammavec)\\
&\propto \pi(\alpha_i,\alpha_j\mid \alphavec_{-ij},\eta,\gammavec)\pi(\Data\mid \alphavec, \eta,\gammavec)\\
&\propto \left(\text{Dirichlet}(\gammavec_i,\gammavec_j)(1-\sum_{l\notin \{i,j\}}\alpha_l)\right)\Gamma(\eta)  \left(\prod_{g}\prod_{s} \prod_{k=1}^{\mgs}(\eta\alpha_s + \mgs - k)\right) \prod_{g=1}^{L} \frac{1}{\Gamma(\mg+ \eta)}\\
&\propto \alpha_i^{\gammavec_i-1}\alpha_j^{\gammavec_j-1}\Gamma(\eta) \left(\prod_{g}\prod_{s} \prod_{k=1}^{\mgs}(\eta\alpha_s + \mgs - k)\right) \prod_{g=1}^{L} \frac{1}{\Gamma(\mg+ \eta)}.
\end{align}

\subsubsection{Posterior sampling}

We focus on sampling from the marginal posterior for $(\alphavec, \eta)$, after which the posterior mean of $\theta_{gs}$ can be estimated by averaging using equation~\eqref{eq-postmean}. Sampling for the higher-dimensional $\theta_{gs}$ in addition to the already high dimensional $\alpha$ is slow, and it has been shown that collapsing to a marginal can be more efficient in terms of the covariance structure of the resulting Markov chain \citep{Liu_Wong_Kong_1994,liu1994collapsed}. 

We use a Metropolis-within-Gibbs sampler  \citep[Ch. 10.3]{RobertChristian1999MCSM} using the conditional posterior for $\alpha_i,\alpha_j$ derived above. This algorithm cycles through the conditional distributions for pairs of $\alpha_i,\alpha_j$ as one would in a Gibbs sampler but uses a Metropolis-Hastings update at each step rather than sampling from the conditional distribution directly. To achieve irreducibility of the Markov chain, the $(i,j)$ pairings cycled over are randomly selected at each step. 
Specifically, let $L=(l_1,...,l_d)$ be an ordered partition of $\{1,...,|\mathcal{S}|\}$ into $d$ pairs of indices of the form $l =(i,j)$ (if $|\cS|$ is even). If $|\cS|$ is odd, let one index appear in two pairs. At each step, we randomly pick a new set of $\alpha_i,\alpha_j$ pairings. We only add to the final chain after each full cycle over pairs and the $\eta$ update. Algorithm~\ref{alg-MCMC} gives the procedure. 

\begin{algorithm}[h!]
\caption{Metropolis-within-Gibbs Sampler}\label{alg-MCMC}
\DontPrintSemicolon
\SetKwProg{For}{for}{ do}{end}
\SetKw{KwInit}{Initialize}
\SetKw{KwAnd}{and}
\SetKw{KwSet}{set}

\KwInit $\alpha^{(0)}, \eta^{(0)}$, $L^{(0)}$; \KwSet\ $\gammavec$ with $\gammavec_{s}=\ms+1$; \KwSet\ $B$\;

\For{$t \gets 1$ \KwTo $B$}{
  $(\alphavec, \eta) \gets (\alpha^{(t-1)}, \eta^{(t-1)})$; $L \gets L^{(t-1)}$\;

  \For{$r \gets 1$ \KwTo $d$}{
    $i \gets l_{r}[1]$, $j \gets l_{r}[2]$ \tcp*[r]{$\alpha$-proposal}
    Sample $(\alpha'_i, \alpha'_j) \sim q(\cdot \mid \alpha_i, \alpha_j, \alphavec_{-ij}, \eta)$ and compute
    \[
      p_r = \min\!\left(1,\;
      \frac{\pi(\alpha'_i, \alpha'_j \mid \alphavec_{-ij}, \eta,\Data,\gammavec)\,
            q(\alpha_i, \alpha_j \mid \alpha'_i, \alpha'_j, \alphavec_{-ij}, \eta)}
           {\pi(\alpha_i, \alpha_j \mid \alphavec_{-ij}, \eta,\Data,\gammavec)\,
            q(\alpha'_i, \alpha'_j \mid \alpha_i, \alpha_j, \alphavec_{-ij}, \eta)}
      \right)
    \]\;
    with probability $p_r$, set $\alpha_i \gets \alpha'_i$, $\alpha_j \gets \alpha'_j$\;
  }
  
  Sample $\eta' \sim q(\cdot \mid \eta, \alpha)$  and compute \tcp*[r]{$\eta$-proposal}
  \[
    p_\eta = \min\left(1,
    \frac{\pi(\alphavec, \eta' \mid \Data,\gammavec)\,q(\eta \mid \eta', \alpha)}
         {\pi(\alphavec, \eta \mid \Data,\gammavec)\,q(\eta' \mid \eta, \alpha)}
    \right)
  \]\;
  with probability $p_\eta$, set $\eta \gets \eta'$\;

  $(\alpha^{(t)}, \eta^{(t)}) \gets (\alphavec, \eta)$ \tcp*[r]{Add to chain}
  
  Randomly permute $\{1,\dots,|\mathcal{S}|\}$ and form $L^{(t)}$ as adjacent index pairs \tcp*[r]{$L$-update}
}
\end{algorithm}

\medskip 
The posterior ratios in $p_r$ and $p_\eta$ in the algorithm are:
\begin{align}
\frac{\pi(\alpha'_i,\alpha_j'\mid \alphavec_{-ij},\eta,\Data, \gammavec)}{\pi(\alpha_i,\alpha_j\mid \alphavec_{-ij},\eta,\Data, \gammavec)} &=  \left(\frac{\alpha'_i}{\alpha_i}\right)^{\gammavec_{i}-1}\left(\frac{\alpha'_j}{\alpha_j}\right)^{\gammavec_{j}-1}   \frac{ \left(\prod_{g}\prod_{s\in\{i,j\}} \prod_{k=1}^{\mgs}(\eta\alpha_s'+\mgs-k)\right) }{ \left(\prod_{g}\prod_{s\in \{i,j\}} \prod_{k=1}^{\mgs}(\eta\alpha_s + \mgs - k)\right)}\\ 
\frac{\pi(\alphavec, \eta'\mid \Data,\gammavec)}{\pi(\alphavec,\eta\mid \Data,\gammavec)} &= \frac{\pi(\eta')\Gamma(\eta')^L}{\pi(\eta)\Gamma(\eta)^L} * \frac{\left(\prod_{g}\prod_{s} \prod_{k=1}^{\mgs}( \eta'\alpha_s+\mgs-k)\right) \prod_{g=1}^{L} \frac{1}{\Gamma(\mg+ \eta')}}{\left(\prod_{g}\prod_{s} \prod_{k=1}^{\mgs}(\eta\alpha_s + \mgs - k)\right) \prod_{g=1}^{L} \frac{1}{\Gamma(\mg+ \eta)}}\label{eq-peta-accept}
\end{align}

\subsubsection{Proposal distribution for $\alpha$}

\begin{figure}[t]
\centering    \includegraphics[width=.5\linewidth]{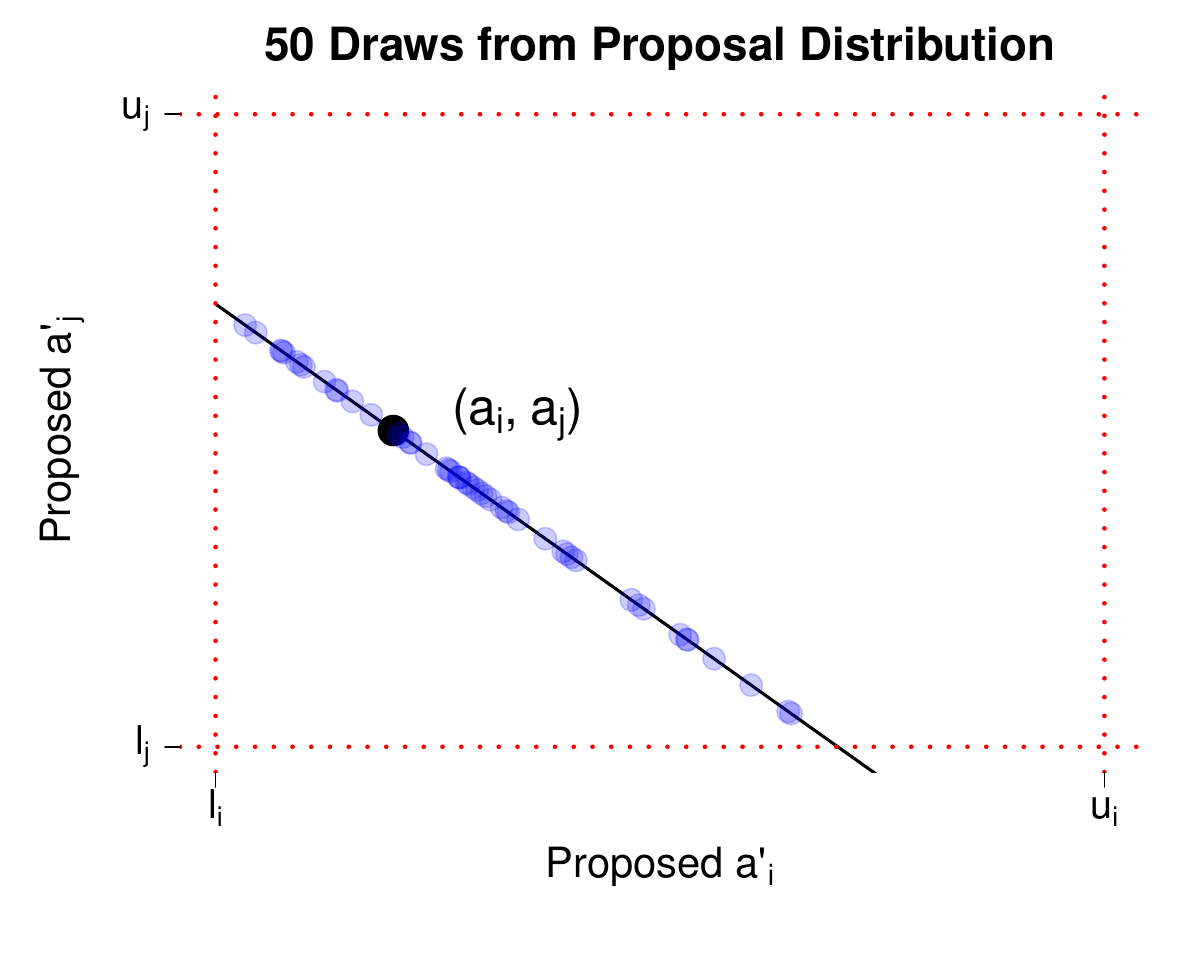}
    \caption{Illustration of $\alpha_i',\alpha_j'$ proposal distribution for $\sigma = \frac{1}{2}(\alpha_i+\alpha_j)$.}
    \label{fig-alpha-proposal-distribution}
\end{figure}

The proposal distribution for $\alpha_i',\alpha_j'$ given $\alpha_i,\alpha_j,\alphavec_{-ij}$ must propose values such that $\alpha_i'+\alpha_j'=\alpha_i+\alpha_j$ and avoid proposing $\alpha_j'=0$ or $\alpha_i'=0$ where the Dirichlet density is 0 or infinite (proposing very close to $0$ can also lead to bad behavior). To achieve this, we use geometry. Suppose we are given bounds $\alpha_i,\alpha_i'\in[l_i,u_i],\alpha_j,\alpha_j'\in[l_j,u_j]$ and constraint  $\alpha_i+\alpha_j = \alpha_i' + \alpha_j'$. With no further information, the bounds are $[0,1],[0,1]$. Then the possible values of $\alpha_i',\alpha_j'$ lie on a line segment within a box. If the distance from the left point of the line segment to $(\alpha_i,\alpha_j)$ is $L_1$ and the distance from $(\alpha_i,\alpha_j)$ to the right end point is $L_2$, then we can sample $\epsilon$ from a truncated normal $N(0,\sigma^2)$ distribution on $[-L_1, L_2]$ and then add $\epsilon$ to $(\alpha_i,\alpha_j)$ along the direction of the line to obtain a new point. Figure \ref{fig-alpha-proposal-distribution} shows $50$ independent draws from this proposal distribution for a given $(\alpha_i,\alpha_j)$ starting point. 
Calculating the proposal probabilities requires a bit of algebra. Let $\|\bullet\|_2$ be the Euclidean norm. Note that the line of possible  $(\alpha_i',\alpha_j')$ values has the same slope as the line from $(\alpha_i,\alpha_j)$ to $(l_i,\alpha_i+\alpha_j-l_i)$ 
and therefore has slope and intercept,
\begin{equation}
    \beta_1 =\frac{\alpha_j-\alpha_i-\alpha_j+l_i}{\alpha_i-l_i}=\frac{l_i-\alpha_i}{\alpha_i-l_i}=-1 \hspace{1cm}\beta_0 = \alpha_j + \alpha_i
\end{equation}
Hence, the intersection points with limits (red lines in Figure \ref{fig-alpha-proposal-distribution}) are:
\begin{align}
    &\text{Intersect } y=u_j : (\alpha_i+\alpha_j-u_j,u_j) && \text{Intersect } x=l_i :(l_i,\alpha_i+\alpha_j-l_i) \\ 
    &\text{Intersect } y=l_j : (\alpha_i+\alpha_j-l_j,l_j) && \text{Intersect } x=u_i :(u_i,\alpha_i+\alpha_j-u_i)
\end{align}

Some of these points do not actually lie in the box. The left intersection point with the box is either the point intersecting $x=l_i$ or $y=u_j$ and the right intersection point with the box is either the intersection with $y=l_j$ or $x=u_i$. This can be calculated as:
\begin{align}
 p_{left}  &= (\max(\alpha_i+\alpha_j-u_j,l_i), \min(\alpha_i+\alpha_j-l_i,u_j))\\
 p_{right}  &= (\min(\alpha_i+\alpha_j-l_j,u_i), \max(\alpha_i+\alpha_j-u_i,l_j)).
\end{align}
Using these, we have
\begin{align}
    L_1 &= \left\lVert
\begin{pmatrix} \alpha_i \\ \alpha_j \end{pmatrix}
- p_{left}
\right\rVert_{2} \hspace{1cm} L_2 = \left\lVert
\begin{pmatrix} \alpha_i \\ \alpha_j \end{pmatrix}
- p_{right}
\right\rVert_{2} \hspace{1cm} \epsilon =\text{sign}(\alpha_i'-\alpha_i)\left\lVert
\begin{pmatrix} \alpha_i \\ \alpha_j \end{pmatrix}
-
\begin{pmatrix} \alpha_i' \\ \alpha_j' \end{pmatrix}
\right\rVert_{2},
\end{align} 
\noindent Finally, 
\begin{align}
q(\alpha'_i,\alpha_j'\mid \alpha_i,\alpha_j)&=\text{TruncNorm}(\epsilon;0,\sigma^2,-L_1,L_2).
\end{align}
It is straightforward to swap the roles of $\alpha_i',\alpha_j'$ and $\alpha_i,\alpha_j$ in these calculations.

Choosing $\sigma$ requires navigating the usual MCMC trade-off between achieving a high acceptance rate and achieving good mixing. In practice, we found using $\sigma = \frac{1}{2}(\alpha_i+\alpha_j)$ gave a reasonably large amount of movement and an acceptance rate of on average about 1/3 of the $(\alpha_i,\alpha_j)$ pair proposals per iteration (meaning that in most cases, the overall $\bm{\alpha}$ vector changed in about a third of its entries). Further fine-tuning could be done to optimize this for a given application.

\subsubsection{Prior and proposal distribution for eta}

Since $\eta$ is a positive scaling parameter, we use a proposal that naturally proposes jumps between different orders of magnitude by drawing $\eta' = \eta\exp(X)$ with $ X \sim N(0,\sigma^2)$. If $\sigma$ is too large, this can quickly explode in magnitude with many rejections. Based on simulations, $\sigma=1$ works fairly well. Calculating the proposal probability is straightforward with:
\begin{equation}\label{eq-eta-proposal}
    q(\eta'\mid \eta)= \frac{1}{\sigma \eta'}\phi\left(\frac{1}{\sigma}\log\left(\frac{\eta'}{\eta}\right)\right)
\end{equation}
where $\phi$ is the density of the standard Gaussian. Note that in the acceptance ratio, by symmetry, most of this cancels leaving
\begin{equation}
    \frac{q(\eta\mid \eta')}{q(\eta'\mid \eta)} = \frac{\eta'}{\eta}
\end{equation}

Next, suppose that we know a bound $\eta\in [l,u]$. A simple way to enforce this is to set $\eta'' = \max(\min(\eta', u), l)$. Calculating the probability of proposing $\eta''$ given $\eta$ is slightly more complicated. We have:
\begin{equation}
q(\eta''\mid \eta) =
\begin{cases}
    1-\Phi(\frac{1}{\sigma}\log(u/\eta)) & \text{if $\eta'' = u$}  \\
    \Phi(\frac{1}{\sigma}\log(l/\eta)) & \text{if $\eta'' = l$}  \\
   \frac{1}{\sigma \eta'} \phi(\frac{1}{\sigma}\log(\eta''/\eta)) & \text{otherwise} 
\end{cases}
\end{equation}

\subsubsection{Incorporating bounds}\label{sec-bounds}

When estimates of $\Pr(R=1\mid G)$, $\Pr(S\mid G,R=1)$ and $\Pr(S\mid G)$ required for Bayes' rule in equation~\eqref{eq-bayes-rule} come from different data sources, they may be incoherent with each other and give estimated probabilities outside of $[0,1]$. To address this, we can calculate bounds on the estimate of $\Pr(S\mid G,R=1)$ using the estimates of $\Pr(R=1\mid G)$ and $\Pr(S\mid G)$, which are generally more reliable. By the Law of Total Probability, 
\begin{equation}
\Pr(S\mid G,R=1) = \frac{\Pr(S\mid G)-\Pr(S\mid G,R=0)\Pr(R=0\mid G)}{\Pr(R=1\mid G)}
\end{equation}
Setting $\Pr(S\mid G,R=0)=0$ and $\Pr(S\mid G,R=0)=1$ leads to the following bounds 
\begin{equation}\label{eq-bounds}
\Pr(S\mid G,R=1) \in  \left[\max\left(\frac{\Pr(S\mid G)-\Pr(R=0\mid G)}{\Pr(R=1\mid G)},0\right) ,  \min\left(\frac{\Pr(S\mid G)}{P(R=1\mid G)}, 1\right)\right]
\end{equation}

In our application, estimating these bounds always resulted in the lower bounds of zero, but the upper bounds were much less than one.  A simple post-hoc option to enforce these bounds is to, after fitting the MCMC model, set any estimates that are outside them to their nearest bound. Although the estimates of $\Pr(S\mid G,R=1)$ no longer sum to $1$, this is not a problem if only using them to obtain the Poisson sampling probabilities.

A more principled approach is to incorporate the bounds into Bayesian estimation by enforcing them in the priors and proposal distributions. We designed the proposals above with such bounds in mind for this reason. Suppose $\theta_{gs}\in [l_{gs},u_{gs}]$. Then equation~\eqref{eq-condpostmean} for the posterior mean also has this bound
\begin{equation}
\E[\theta_{gs}\mid \alphavec, \eta,\Data,\gammavec] = \frac{\mgs+\eta\alpha_s}{\mg+ \eta}\in[l_{gs},u_{gs}]
\end{equation}
and solving this implies a bound on $\alpha_s$ given $\eta$, 
\begin{align}
\alpha_s \in \left[\frac{l_{gs}(\mg+ \eta)-\mgs}{\eta}, \frac{u_{gs}(\mg+ \eta)-\mgs}{\eta}\right]
\end{align}
\noindent which we can combine with $\min$ and $\max$ for an overall bound.
\begin{align}
\alpha_s \in \left[\max_{g}\left(\frac{l_{gs}(\mg+ \eta)-\mgs}{\eta}\right), \min_{g}(\frac{u_{gs}(\mg+ \eta)-\mgs}{\eta})\right].
\end{align}
\noindent We can also calculate bounds on $\eta$ given $\alpha_s$. For example, if $\alpha_s-l_{gs}\geq 0$ and $\alpha_s-u_{gs}\geq 0$, then 
\begin{align}
\eta \in \left[\frac{l_{gs}\mg - \mgs}{\alpha_s - l_{gs}},\frac{u_{gs}\mg - \mgs}{\alpha_s - u_{gs}}\right].
\end{align}
\noindent an overall bound can be found by taking a maximum of the lower bounds and a minimum of the upper bounds.

 In principle, these bounds on $\alpha$ and $\eta$ could be incorporated into a sequential update strategy, with the bounds on $\alpha_s$ being calculated from the latest $\eta$ and vice versa and each imposed via the proposal distributions as described above. However, in practice, because of the $\max$ and $\min$, estimation errors can accumulate to distort these bounds and did so in our application. These estimation errors arise from the fact that each $l_{gs}$ and $u_{gs}$ must be estimated.  For this reason, in our application, we chose the more ad-hoc solution of imposing bounds on $\theta_{gs}$ directly after estimation. We leave exploration of whether the bounds above can be made useful to future work.

 \subsubsection{Extension: Incorporating surname features}\label{app-surnamefeatures}

Surname data are skewed, with a small number of highly common surnames and a long tail of rare surnames. A limitation of using only the per-surname counts as we did in our main model is that it does not leverage patterns within surnames that are indicative of group membership. For example, while the surname Berkowitz appears hundreds of times in our data, Berlowitz is rare. Our multinomial-Dirichlet model does not use this kind of surname similarity. Below, we briefly show how to extend the Bayesian hierarchical model to use surname features or embeddings that can capture such common structures \citep{DasanaikeImai2026}. 

Let $V=V(s)$ represent a vector of characteristics of each surname, where the frequency $\Pr(S\mid R=1) = \frac{\ms}{\msampsize}$ could be included as one of the features. Then we might consider the model 
\begin{align}
\mgvec &\sim \text{Multinomial}(\mg, \thetagvec), &\text{ where  }& \theta_{gs}:=\Pr(S=s\mid G=g,R=1) \text{ for } s=1,...,|\cS|\\
\thetagvec &\overset{iid}{\sim} \text{Dirichlet}(\eta\alphavec) & \text{ where  } & \alphavec \in \mathbb{R}^{|\cS|} \text{ with } \sum_{s=1}^{|\cS|}\alpha_s = 1 \text{ and } \eta > 0\\ 
\alpha_s &\propto \exp(V(s)^\top \beta) & \text{ where } & \sum_{s=1}^{|\cS|}\alpha_s = 1, \ \ \beta\in \mathbb{R}^{\text{dim}(V)}\\
\eta & \sim \pi_\eta, \beta \sim \pi_\beta & \text{ where } & \pi_\eta,\pi_\beta \text{ are prior distributions.}
\end{align}

Under this model, the conditional posterior mean of $\theta_{gs}$ given $\beta,\eta$ and data $\Data = \{\mgvec : g\in \cG\}$ and known function $V(s)$ has the same form as equation~\eqref{eq-condpostmean},
\begin{equation}
    \E[\theta_{gs} \mid \beta,\eta,\Data] = (1-\rho_g)\frac{\mgs}{\mg} + \rho_g  \left(\frac{\exp(V(s)^\top \beta)}{\sum_{s'}\exp(V(s')^\top \beta)}\right),\hspace{.5cm} \rho_g = \frac{\eta}{\eta+\mg}
\end{equation}
where $\rho_g$ again controls a trade-off between location-specific data and a surname-only-based model while $\beta$ represents information-sharing across surnames in which surnames with certain characteristics consistently have higher or lower probabilities. The marginal posterior of $\beta$ and $\eta$ also has the same form as before. Letting $\tau = \sum_{s=1}^{|\cS|} \exp(V(s)^\top\beta)$, it is
\begin{align}
\pi(\beta,\eta|\Data) \propto \pi_\eta(\eta)\pi_\beta(\beta)\Gamma(\eta)^L \left(\prod_{g=1}^{L}\prod_{s=1}^{|\cS|}\prod_{k=1}^{\mgs}\left(\mgs - k + \frac{\eta}{\tau}\exp(V(s)^\top \beta) \right)\right)\prod_{g=1}^{|\mathcal{G}|}\left(\frac{1}{\Gamma\left(\mg + \eta \right)}\right)
\end{align}
Posterior mean estimates for $\theta_{gs}$ from this model can be used in Bayes' rule (equation~\eqref{eq-bayes-rule}) as before.

Our model is not the only approach to learning from name features or embeddings. Indeed, there is an extensive literature on name-based classification that provides a variety of possible models and features (e.g., \cite{ambekar2009name, torvik2016ethnea, ye2017nationality, ye2019secret, Jain2022Importance}). However, most of this work assumes access to labeled training data. The purpose of the model in this section is to learn $\Pr(R=1\mid S,G)$ when only a dataset on the minority population of interest (like the obituary data) and an unlabeled dataset with both minority and non-minority population members (here, the sampling frame) are available.

\subsection{Simulations}\label{sec-sim}

We provide a simulated version of both the sampling and the surname distribution estimation. Both are designed to mimic the structure of our Jewish sampling application, and all code to run them, including the code we used to fit the hierarchical model on our actual data, is included in a public GitHub repository available at \url{https://github.com/kchaz/BISG_sampling_public}.

Our simulations of the sampling process confirm that simply taking random samples of size $\targetg$ in each state yields a low success rate even when the $\targetg$ are allocated using known $\Pr(R=1\mid G=g)$ probabilities and that sampling using the true $\Pr(R=1\mid S=s,G=g)$ can improve the success rate. Unsurprisingly, the degree of improvement depends greatly on the informativeness of the surname probabilities, reflected in how close the $\Pr(R=1\mid S=s,G=g)$ are to $0$ and $1$. 
Our simulations of estimating $\Pr(S=s\mid G=g,R=1)$ indicate that, in line with theory on Bayesian shrinkage, the partial pooling estimation reduces total variation distance of the estimated probability distribution relative to the true one, as compared to using only the raw proportions $\frac{\mgs}{\mg}$ without shrinkage.

\subsubsection{Simulating Jewish obituary data}

\begin{figure}[t]
    \centering
    \includegraphics[width=\linewidth]{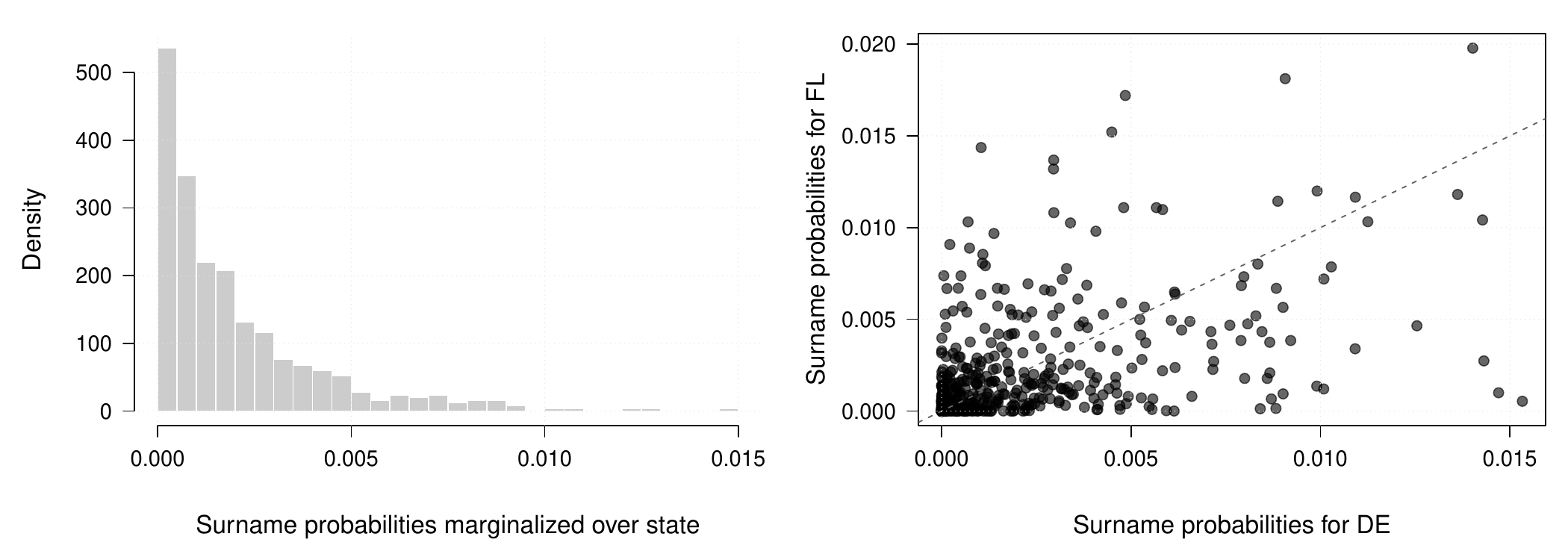}
    \caption{Illustration of simulated probabilities for 500 surnames. \textbf{Left}: distribution of $\Pr(S=s\mid R=1)$ overall, calculated using AJPP estimates in $\Pr(S=s\mid R=1)=\sum_{g=1}^{L}\theta_{gs}\Pr(G=g\mid R=1)$. \textbf{Right}: correlation between $\theta_{gs}$ and $\theta_{g',s}$ for two example states.}
    \label{fig-sim-prob}
\end{figure}

In our simulated set-up, we generate $|\cS|$ random strings of 6 characters as ``surnames.'' Then, for each state $g$, we generate a true distribution over surnames by  
\begin{enumerate}
    \item Drawing $\gammavec_{s} \simiid \text{exponential}(1)$ for $s=1,...,|\cS|$ and letting $\gammavec=(\gammavec_1,...,\gammavec_{|\cS|})$
    \item Drawing $\bm{\theta}_{g} \sim \text{Dirichlet}(\gammavec)$ where $\theta_{gs}=\Pr(S=s\mid G=g,R=1)$
\end{enumerate}
\noindent The first step creates skew so that some surnames are more prevalent than others, and the second step ensures variation in their prevalence by state (see Figure \ref{fig-sim-prob}).  We then generate a simulated dataset of $\msampsize$ rows from the minority population by allocating $\msampsize$ among the states using a multinomial draw with AJPP probability estimates $\Pr(G=g\mid R=1)$. For each state $g$ with $\mg$ observations, we draw state-surname counts $(\mgs)_{s\in \cS} \sim \text{multinomial}(\mg,\thetagvec)$. This creates a dataframe of $(S_i,G_i,R_i=1)$ with a distribution over states similar to the real U.S. Jewish adult population.

\subsubsection{Simulation of sampling methods}

\begin{figure}[t]
    \centering
    \includegraphics[width=0.6\linewidth]{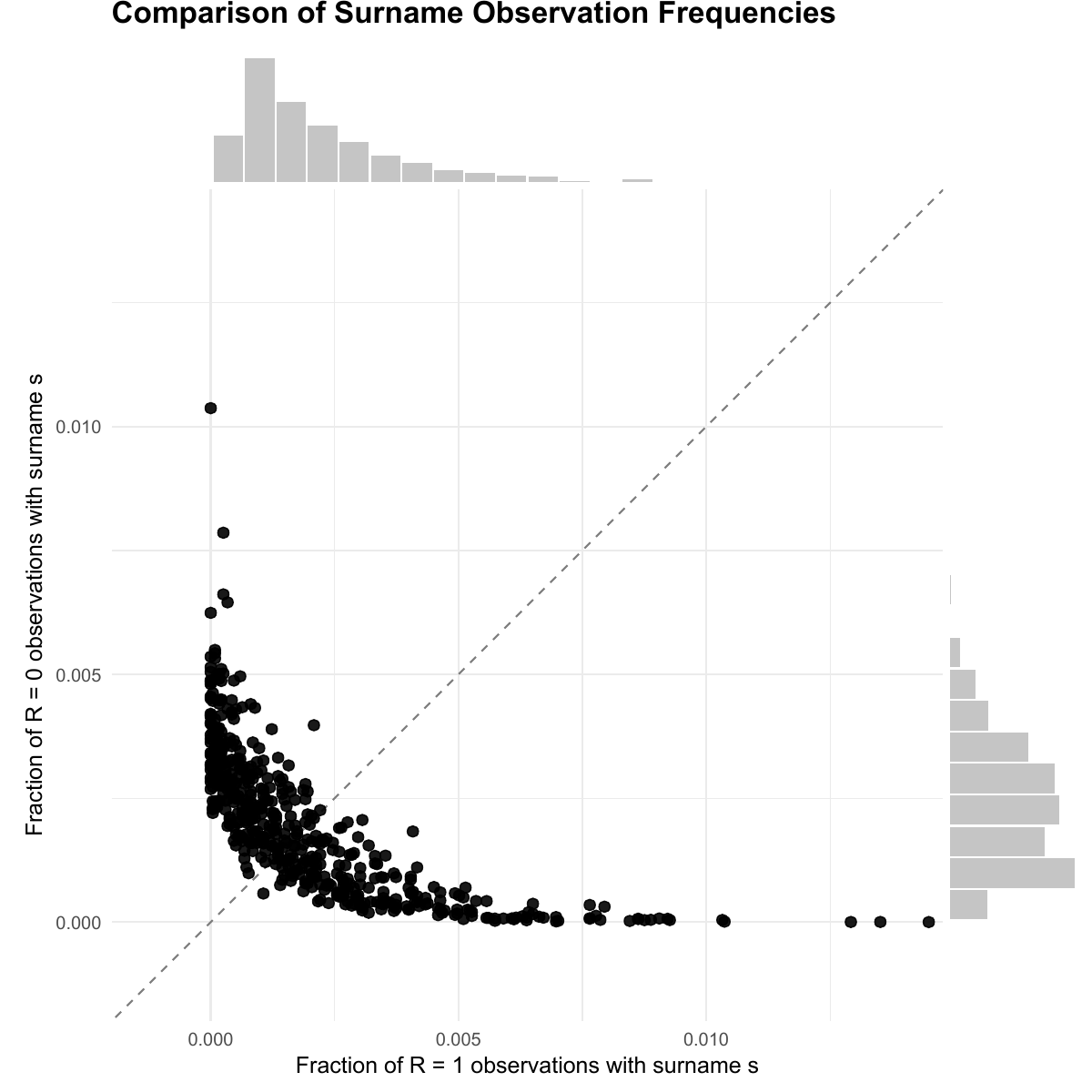}
    \caption{Illustration of marginal distributions of and relationship between $\Pr(S=s\mid R=1)$ and $\Pr(S=s\mid R=0)$ for 500 simulated surnames and $\beta=1000$.}
    \label{fig-simprob2}
\end{figure}

To simulate a sampling frame, we also simulate data for non-Jewish people with a different distribution over the same set of surnames from the obituary data. We take each state's $\thetagvec$ and transform it by
drawing $\epsilon_{gs}\simiid N(0,1)$ and then defining $\nu_{gs} =\frac{\exp(-\beta\theta_{gs}+\epsilon_{gs})}{\sum_{s'=1}^{|\cS|} \exp(-\beta\theta_{gs'}+\epsilon_{gs'})}$ as a new distribution over surnames. This transformation ensures that surnames with higher probability in the $R=1$ group have lower probability in the $R=0$ group and vice versa while also creating some smoothing so that the $R=0$ group is less peaked at a few particular surnames (it is a more mixed group of people). Figure \ref{fig-simprob2} shows an example comparison of the resulting $\Pr(S=s\mid R=1)$ and $\Pr(S=s\mid R=0)$ probabilities for $\beta=1000$. The $\beta$ parameter plays a critical role in controlling how much $\Pr(S=s\mid G=g,R=1)$ and $\Pr(S=s\mid G=g,R=0)$ diverge and hence how close to 0 or 1 the probabilities $\Pr(R=1\mid S=s,G=g)$ are. 
Figure \ref{fig-R1probsR1vsR0} shows the distribution of $\Pr(R=1\mid S=s,G=g)$ probabilities in the $R=0$ and $R=1$ groups for four sampling frames generated under four different magnitudes of $\beta$. We see that as $\beta$ increases, we get closer to a situation in which surnames are strongly informative about group membership. 

\begin{figure}[t]
    \centering
    \includegraphics[width=\linewidth]{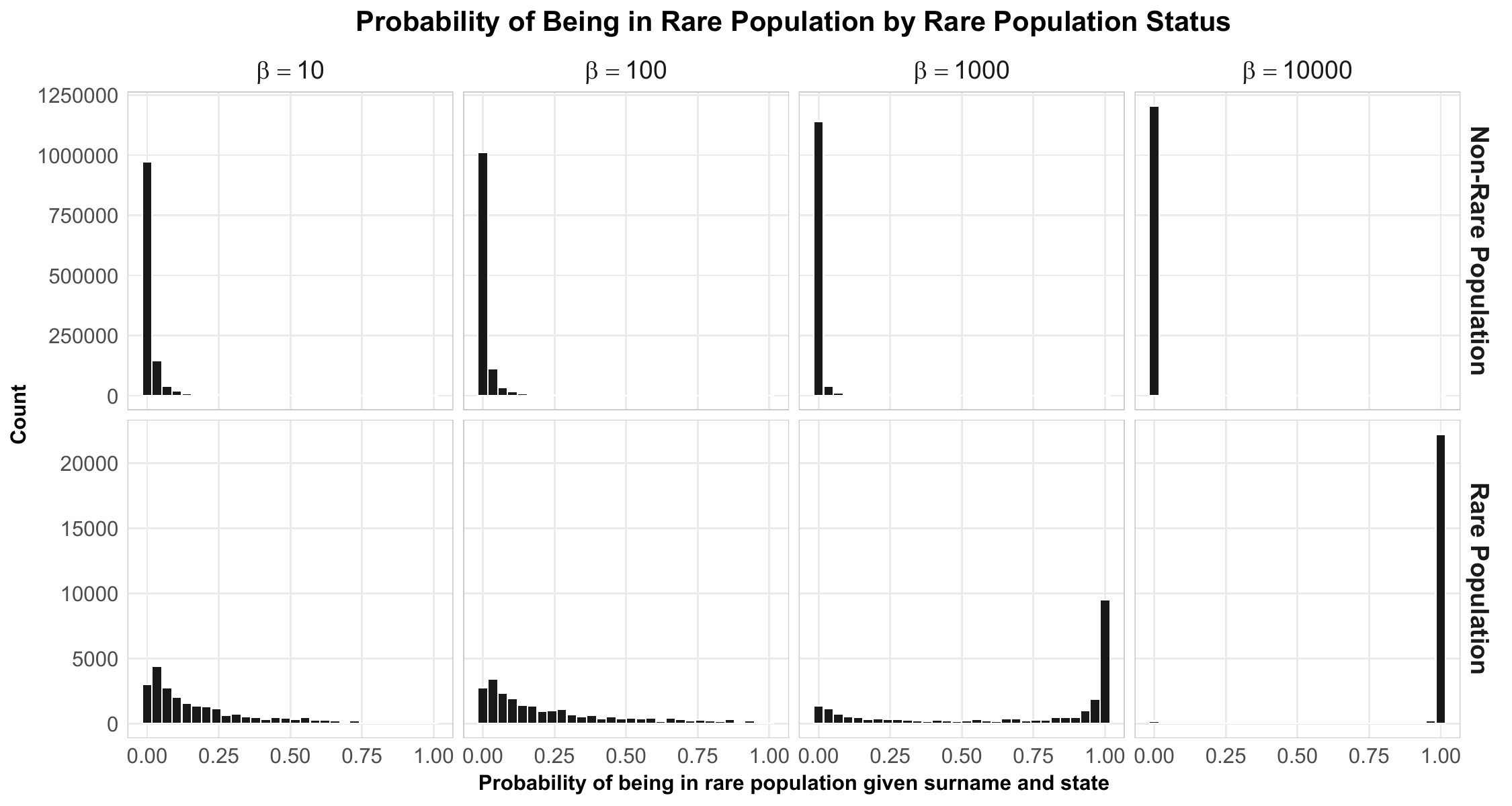}
    \caption{Distribution of $\Pr(R=1\mid S=s,G=g)$ for $R=0$ units in sampling frame (top) and $R=1$ units in sampling frame (bottom). Note: these plots are contingent on the number of possible surnames (here 500), which affects the magnitude of the probabilities $\theta_{gs}$ involved and hence what each $\beta$ value represents.}
    \label{fig-R1probsR1vsR0}
\end{figure}

To create a sampling frame with true minority population labels, we:
\begin{enumerate}
    \item Generate $N_{\cdot 1}$ members of $R=1$ population using the generating process above
    \item Generate $N_{\cdot 0}=N_{\cdot 1} \frac{0.98}{0.02}$ members of $R=0$ population (so the minority population fraction is 0.02) using the generating process above only with $\theta_{gs}$ replaced by $\nu_{gs}$ and $\Pr(G=g\mid R=1)$ replaced by $\Pr(G=g\mid R=0)$ also calculated from the AJPP estimates.
\end{enumerate}

In the real application, we have about 50,000 unique surnames with 120 million observations. We therefore set our simulated sampling frame to have 1.2 million units and 500 unique surnames. We set $N_{\cdot 1}=24,000$ and $N_{\cdot 0}=1,176,000$. We set our target to a sample of $\targetoverall=1,000$ observations. For all methods, we allocate state-specific targets $\targetg$ using the Poisson allocation equation \eqref{eq-Tg_opt_poststrat} from the main text with the true simulated $\Pr(R=1\mid S=s,G=g)$ and the actual $\Pr(G=g\mid R=1)$ from the AJPP data. We compare:
\begin{enumerate}
    \item \textbf{Random sample baseline}: within each state, take a random sample of size $\targetg$

 \item \textbf{Surname-state targeted probabilities}: Poisson sampling as described in the main text. We do this for sampling frames generated with $\beta\in \{10,100,1000,10000\}$.
\end{enumerate}

We repeatedly generate the sampling frame under the data generating process, and for each frame, apply each sampling method. The results are in Figure \ref{fig-mainsimresult}. The success rate of sampling improves as the $\Pr(R=1\mid S=s,G=g)$ become more informative. Already for $\beta=10$, sampling with $\Pr(R=1\mid S=s,G=g)$ is better than random sampling, but even for quite informative $\Pr(R=1\mid S=s,G=g)$ in the case of $\beta=1000$, the success rate in sampling minority population members is only around .65. 

\begin{figure}[t]
    \centering
    \includegraphics[width=.9\linewidth]{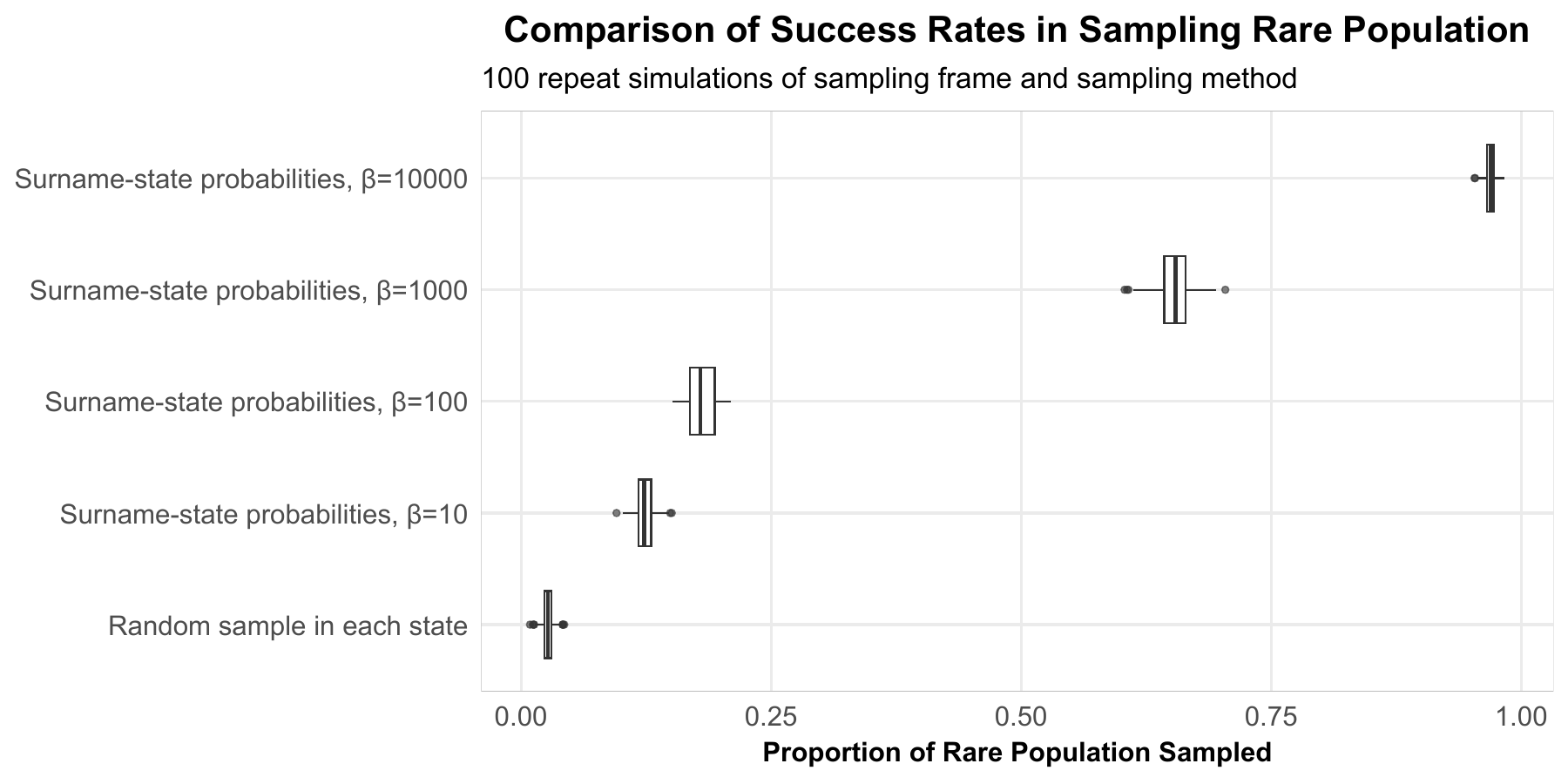}
    \caption{Comparing success rate in repeated sampling for Poisson sampling methods with different $\beta$ and for simple random sample per state. $\beta$ controls how informative surnames are about minority population membership (higher is more informative).}
    \label{fig-mainsimresult}
\end{figure}

\subsubsection{Estimating surname distributions}

We applied the MCMC sampler described in Section~\ref{sec-sampler} to sample from the Bayesian hierarchical model described in Section~\ref{sec-learning-surname-dist} for the simulated obituary data while varying the number of surnames ($|\cS|\in \{100, 1000, 10000\}$), the number of observations ($\msampsize\in \{50000, 100000\}$), and method for initializing $\alpha$ (uniform over the possible surnames $\alpha_s^{(0)}=\frac{1}{|\cS|}$ or using a smoothed version of the MLE $\alpha_s^{(0)}=\frac{\ms + 1}{m + |\cS|}$). The most similar to our actual case in terms of surname-to-sample size-ratio is the 10,000 surnames with 50,000 observations case, though the fact that these are still spread over the 51 states in our simulation makes our simulated case still more data-scarce than our actual case with over 200,000 observations spread over 51 states. 

For the 100, 1000, and 10000 surname cases, we draw a fixed set of true $\Pr(S=s\mid G=g,R=1)$ parameters. For each $\msampsize$, we then simulate 5 datasets under these probabilities and fit the model to each, running each chain for 2000 iterations. In general, we see good mixing in the $\alpha$ parameter chains while for the $\eta$ chain, after some initial large fluctuations, fairly poor mixing. See Figures \ref{fig-achain_plot} and \ref{fig-eta_chain_plot} for examples. This low mixing is unsurprising given our proposal distribution, which could easily propose to double or halve the value of $\eta$. The initial fluctuation reflects the chain converging to the right order of magnitude for $\eta$ and after that, only small fluctuations are accepted. Decreasing $\sigma$ in the proposal distribution could improve the proportion of accepted proposals but with the usual trade-off of smaller movements. 

\begin{figure}[!t]
    \centering
    \includegraphics[width=\linewidth]{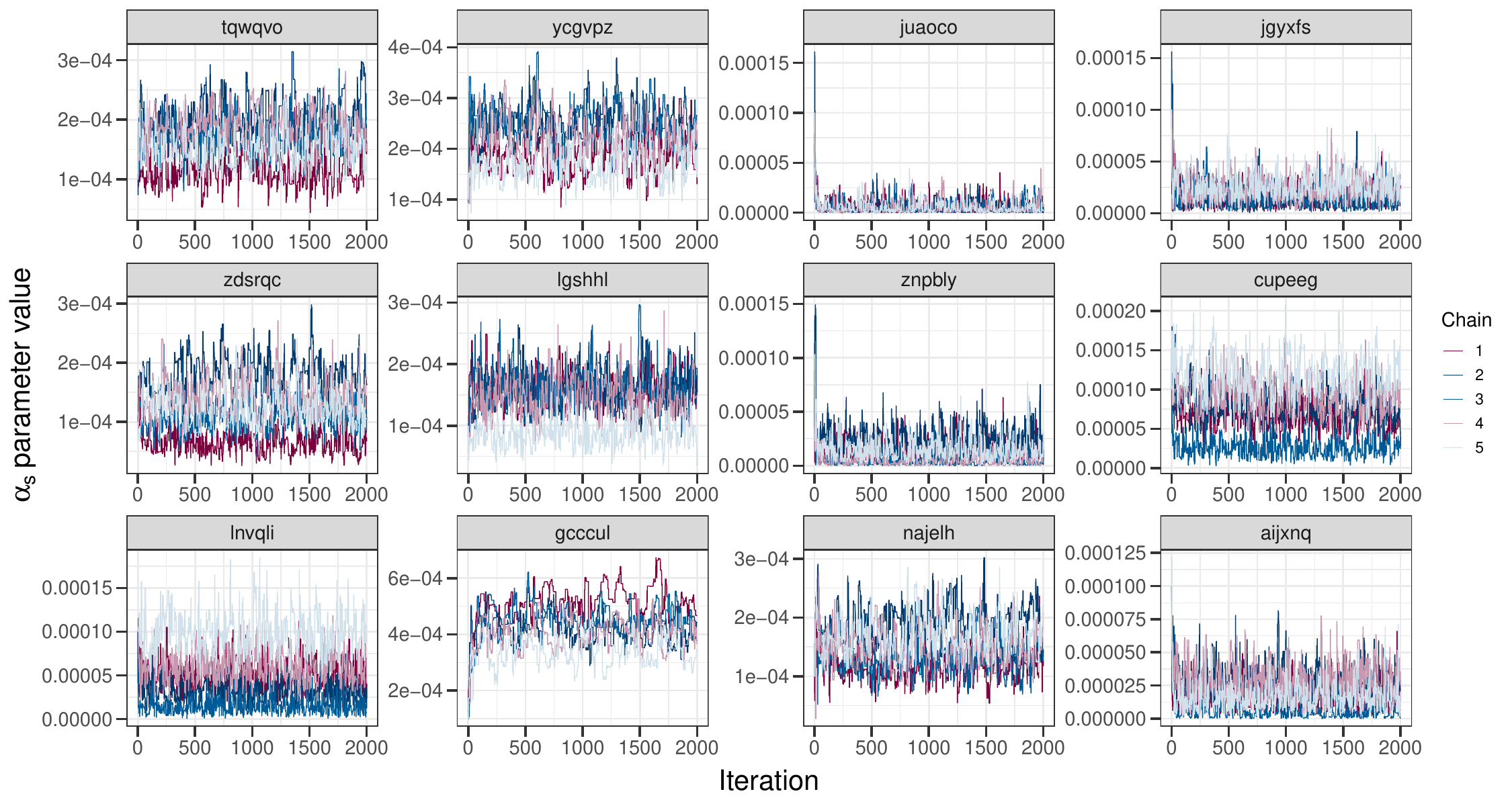}
    \caption{Trace plot for $\alpha_s$ for a random subset of 12 fake surnames, shown for 5 independent Markov chains for the $\msampsize=100,000$, $|\cS|=10,000$ case with $\bm{\alpha}$ initialized as uniform over $\cS$. In any single iteration, there are $5000$ accept-reject steps for 5000 pairs of $(\alpha_i,\alpha_j)$. Across runs, on average 34\% of these proposals were accepted. }  \label{fig-achain_plot}
\end{figure}

\begin{figure}[!t]
    \centering
    \includegraphics[width=.7\linewidth]{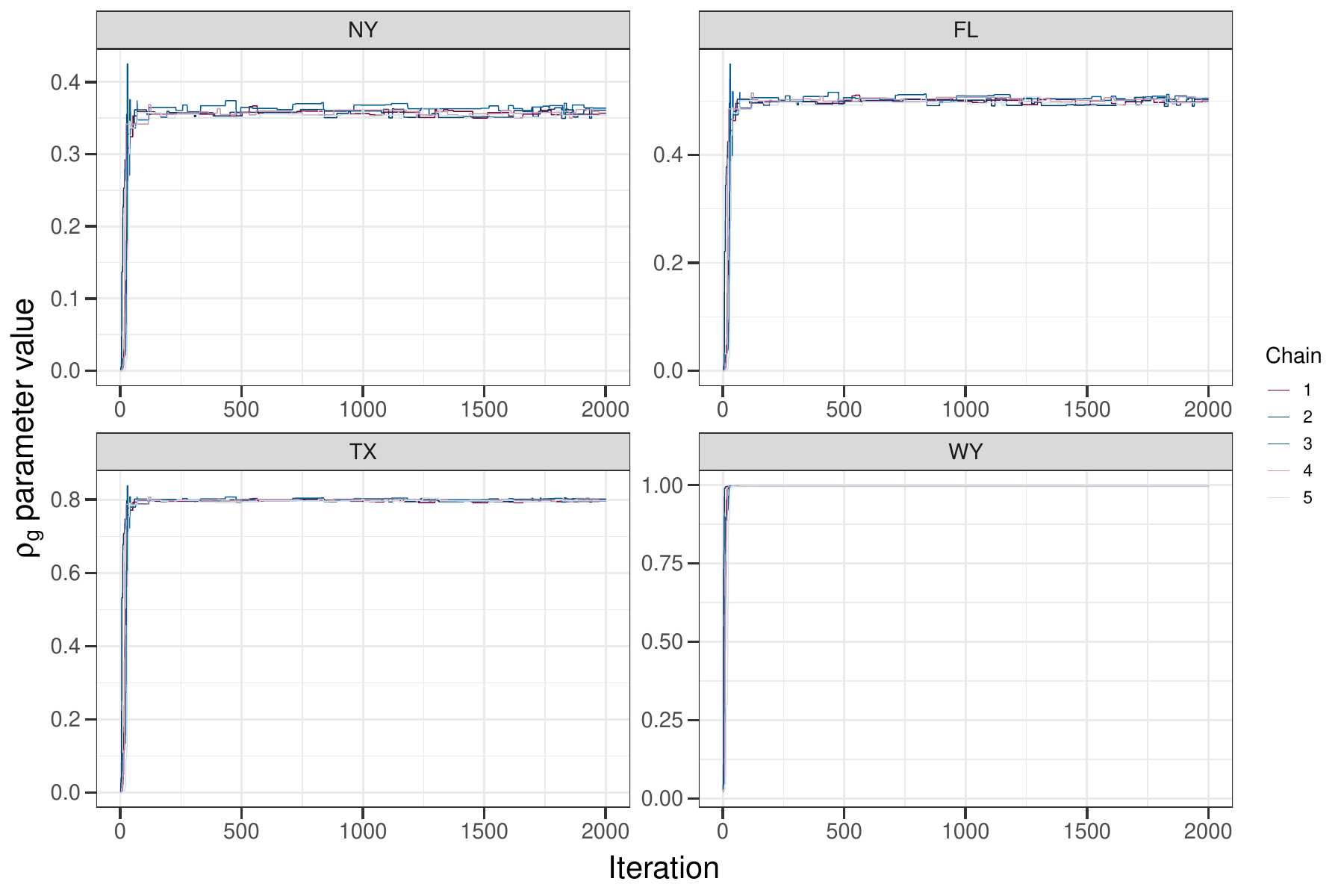}
    \caption{Trace plot for $\rho_g$ for four states selected to highlight different degrees of smoothing. Plots are for 5 independent Markov chains for the $\msampsize=100,000$, $|\cS|=10,000$ case with $\eta$ initialized to 1 and each $\rho_g$ calculated from $\eta$ and $\mg$. Note the different scales on the y axes. On average across the runs, 2.7\% of the proposals were accepted.}
    \label{fig-eta_chain_plot}
\end{figure}

For each model fit, we also estimate the posterior means of each $\alpha_s$ and each $\theta_{gs}=\Pr(S=s\mid G=g,R=1)$ parameter. Figures \ref{fig-sampler-sim-theta} and  \ref{fig-sampler-sim-alpha} compare these parameters to their true values in terms of the average total variation distance (TV), which is calculated as follows.
\begin{align}
\text{TV}(\bm{\alpha})&=\frac{1}{2}\sum_{s=1}^{|\cS|}|\hat{\alpha}_s-\alpha_s|\\
   \text{meanTV}(\bm{\theta}) &= \frac{1}{L}\sum_{g=1}^{L}TV(\bm{\theta}_g)=\frac{1}{L}\sum_{g=1}^{L}\frac{1}{2}\sum_{s=1}^{|\cS|}|\hat{\theta}_{gs}-\theta_{gs}|
\end{align}

We prefer TV to root mean squared error here because it is invariant to scale as the number of surnames increases (and hence magnitude of $\alpha_s$ and $\theta_{gs}$ decreases) and thereby captures the greater relative estimation error when using the same amount of data to estimate parameters for a larger number of surnames.
We also calculate this metric for the simple observed proportions, $\frac{\ms}{\msampsize}$ and $\frac{\mgs}{\mg}$, which are the MLEs for $\alpha_s$ and $\theta_{gs}$ respectively. 

As expected from shrinkage theory, we see that for $\theta$, the Bayesian model with partial pooling yields a smaller TV than the MLE does (for a fixed number of surnames, it also yields a smaller RMSE). It achieves this even after only a few iterations, suggesting that the benefit is largely related to the shrinkage relative to using the raw proportions. The $\alpha$ initialization method makes little difference, while the number of surnames does matter, with the TV growing with the dimension of the parameter space. Unsurprisingly, the TV also decreases as sample size increases.

\begin{figure}[H]
    \centering
\includegraphics[width=\linewidth]{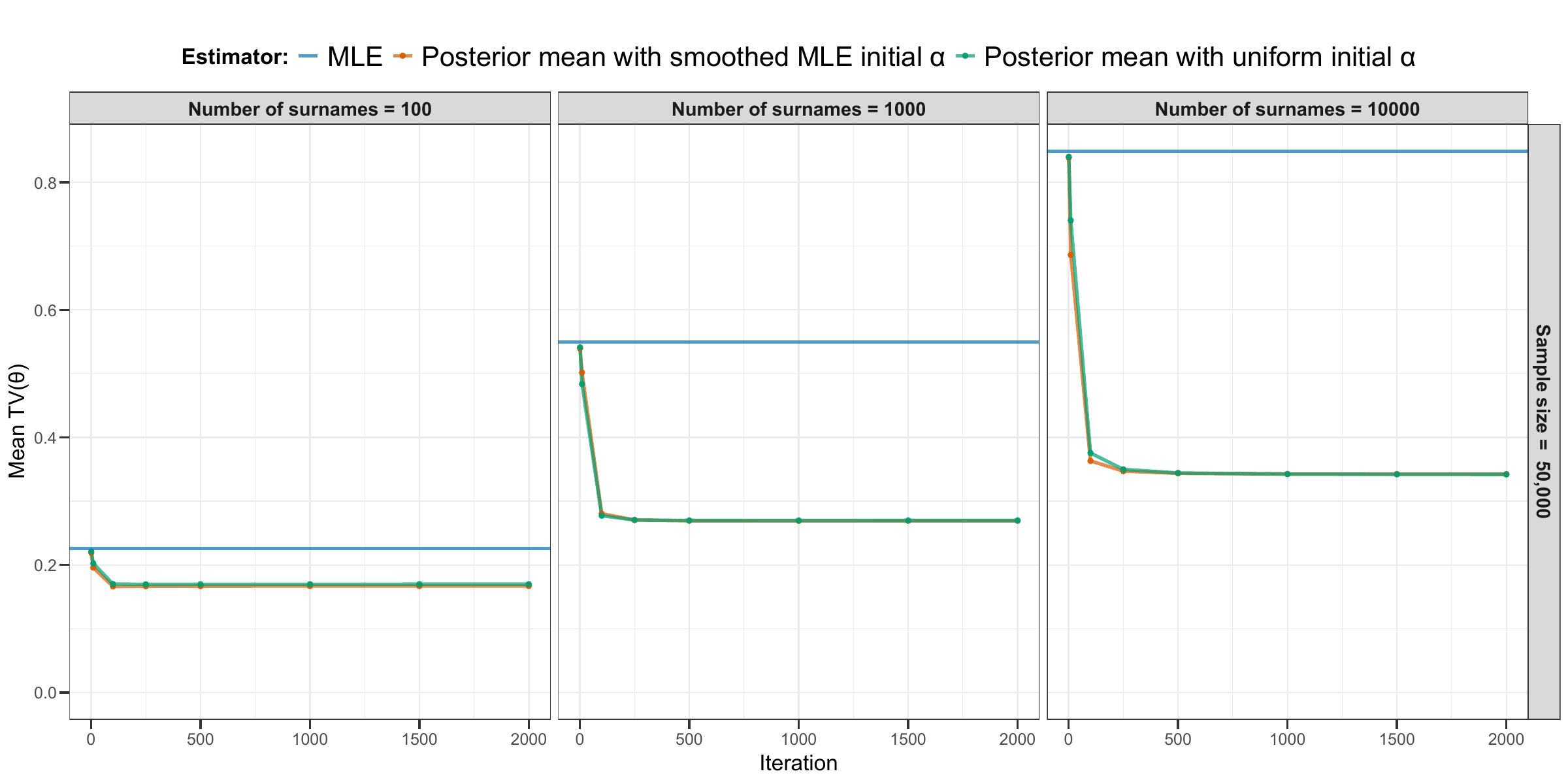}\newline 
\includegraphics[width=\linewidth]{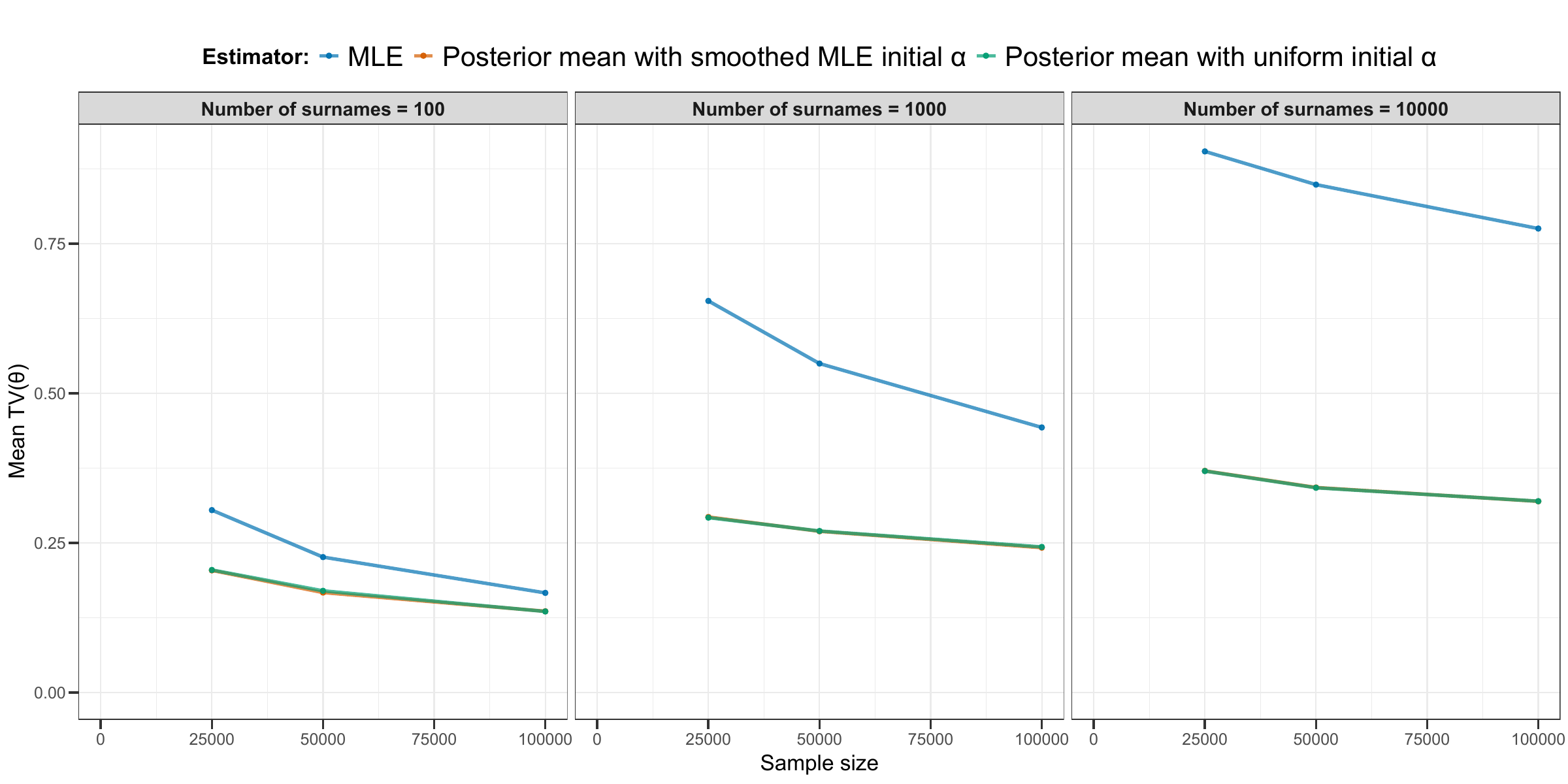}
    \caption{Total variation distance of  estimators of $\bm{\theta}$ versus true value. \textbf{Top}: Relationship to number of iterations and number of surnames, fixing $\msampsize=50,000$. \textbf{Bottom:} Relationship to sample size $\msampsize$ and number of surnames, fixing number of iterations to 2000.  Partial pooling estimators out-perform MLE. Each point is an average over 5 independent replicates.}\label{fig-sampler-sim-theta}
\end{figure}

\begin{figure}[H]
    \centering
\includegraphics[width=\linewidth]{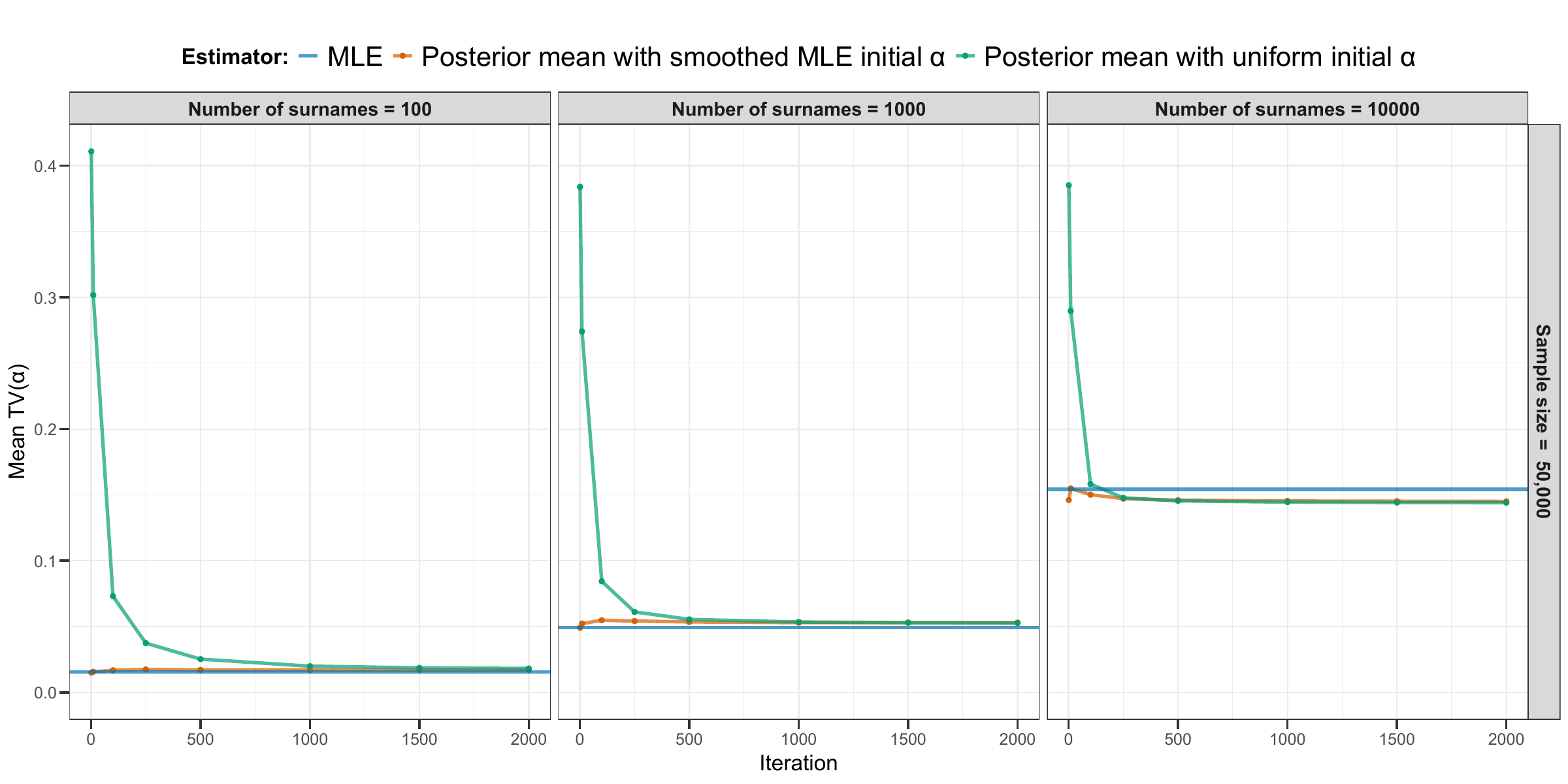}\newline 
\includegraphics[width=\linewidth]{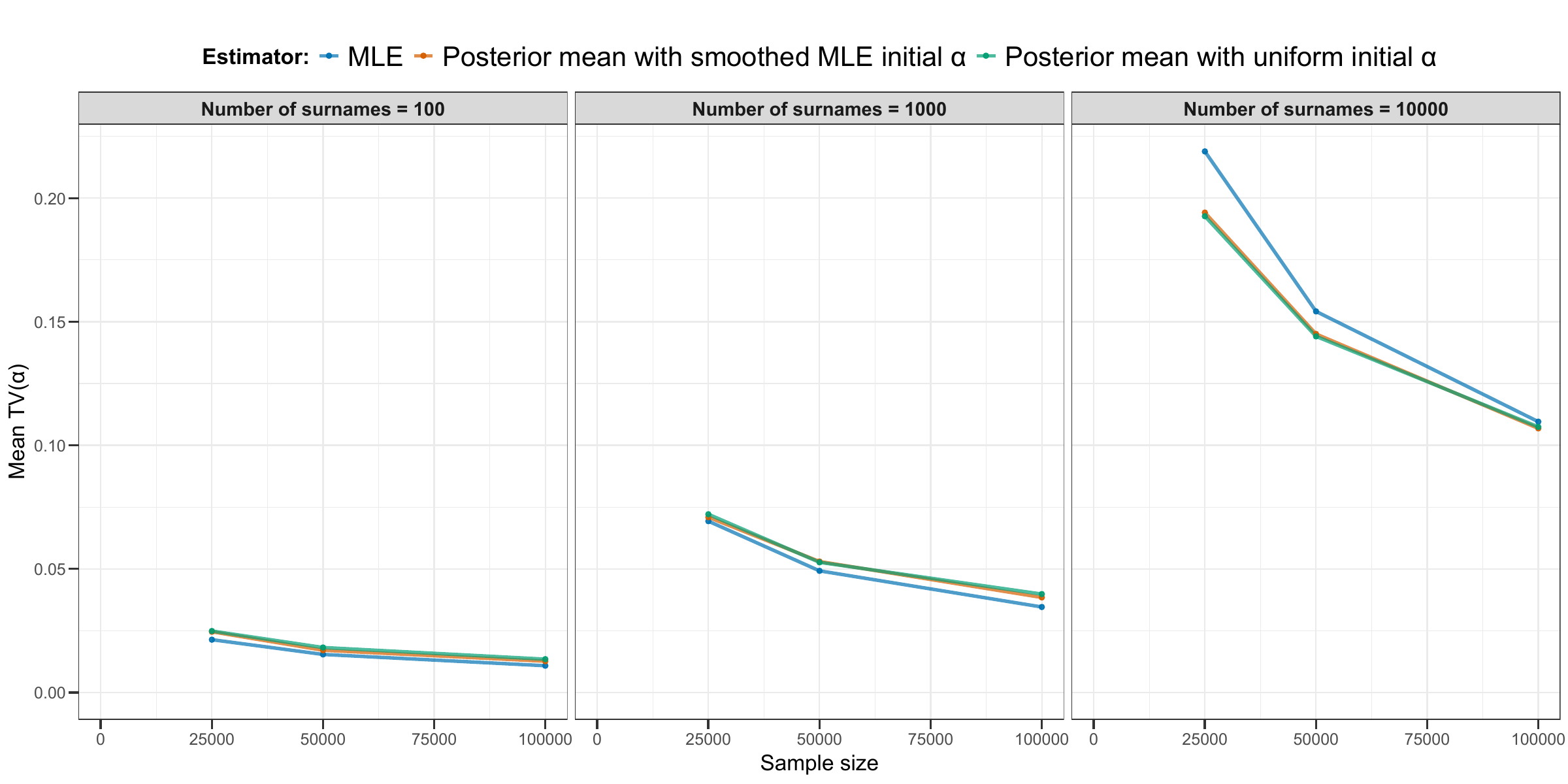}
    \caption{Total variation distance of  estimators of $\bm{\alpha}$ versus true value. \textbf{Top}: Relationship to number of iterations and number of surnames, fixing $\msampsize=50,000$. \textbf{Bottom:} Relationship to sample size $\msampsize$ and number of surnames, fixing number of iterations to 2000. MLE generally out-performs. This makes sense because there is no partial pooling happening for the $\alpha$ estimation and the MLE is the MVUE. Each point is an average over 5 independent replicates.}\label{fig-sampler-sim-alpha}
\end{figure}

For the $\eta$ parameter, there is not a true value to compare to. However, we can examine how its value changes with $|\cS|$ and the sample size $\msampsize$. Importantly, the second form of the posterior mean in equation~\eqref{eq-conditpostmean-main} indicates that the relevant quantity to consider is not $\eta$ but $\rho_g = \eta/(\eta+\mg)$, which indicates the degree to which $\alpha_s$ is favored over $\frac{\mg}{\msampsize}$ and is comparable across settings because it is always in $[0,1]$. At each iteration $b$ in the chain, we calculate
\begin{align}
    \bar{\rho}^{(b)}&=\frac{1}{L}\sum_{g=1}^{L}\rho_g^{(b)}\\
    {\rho}^{(b)}_{max}&=\max_{g}\rho_g^{(b)}\\
    {\rho}^{(b)}_{min}&=\min_{g}\rho_g^{(b)}
\end{align}
and then we average these over 2000 iterations.
In Figure \ref{fig-rhoplot} we plot these quantities against the number of surnames and by sample size.

The figure shows that, as expected, the smoothing is stronger when the sample size $\msampsize$ is smaller and when the number of surnames is greater -- both corresponding to greater sparsity. This happens even as the magnitude of $\eta$ increases with $\msampsize$ to match the greater magnitudes of the $\mg$. We see that the range of smoothing is large, which is in line with some states having much higher overall counts $\mg$ than others in the simulated (and real) data. Note that the average value of $\rho_{g}$ over individuals, which is $\frac{1}{\msampsize}\sum_{g=1}^{L}\mg \rho_g$ (not shown), is generally much lower ($\leq 0.1)$ because more observations come from states with less shrinkage.

\begin{figure}[H]
    \centering
    \includegraphics[width=\linewidth]{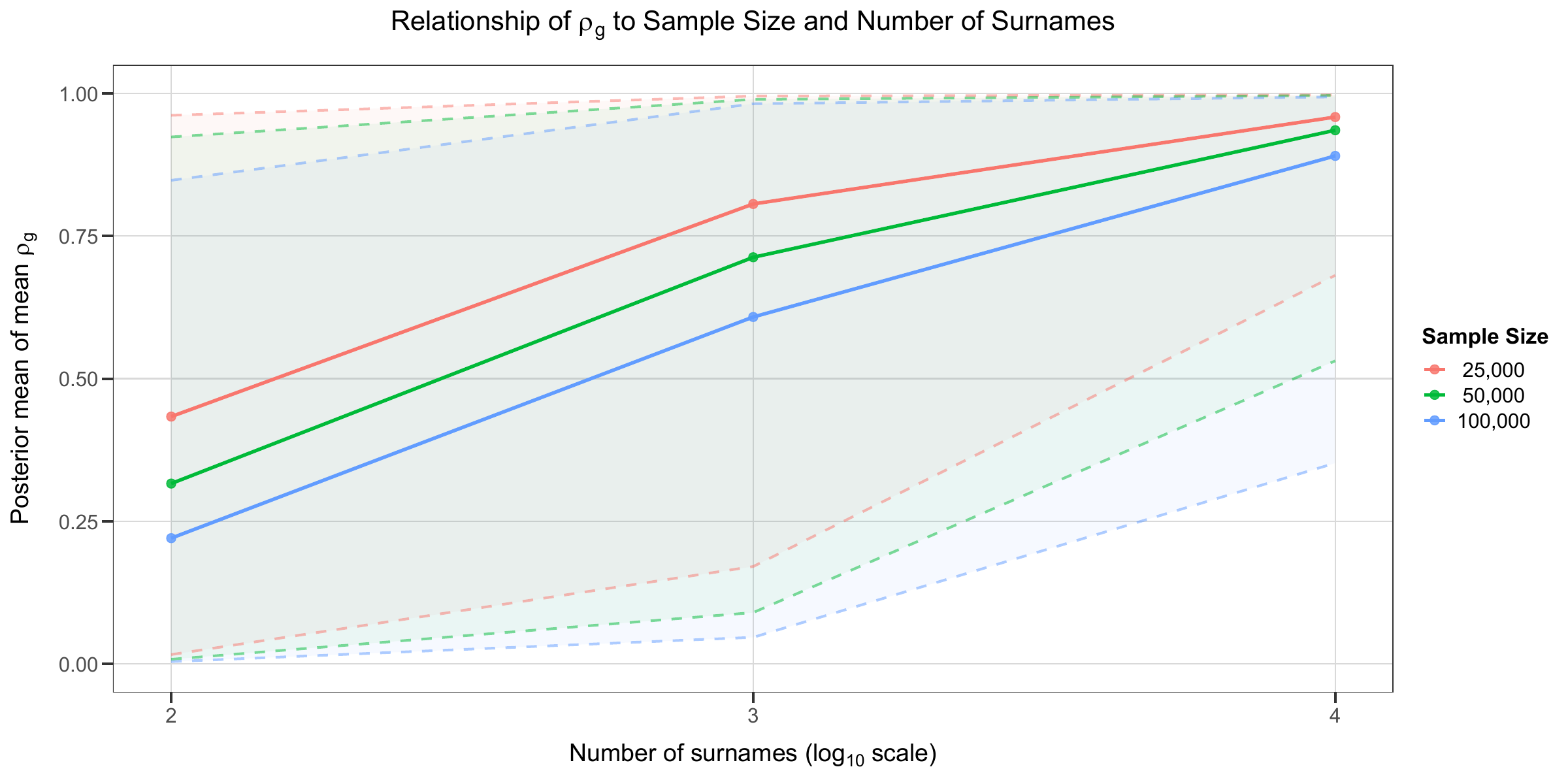}
    \caption{Summaries of $\hat{\rho}_g$ posterior quantities and their relationship to sample size $\msampsize$ and number of surnames $|\cS|$. Bold lines are the mean of the $\bar{\rho}^{(b)}$ over 2000 iterations and over the different $\alpha$ initialization options. Dashed lines are the mean values of ${\rho}^{(b)}_{min}$ and ${\rho}^{(b)}_{max}$.}
    \label{fig-rhoplot}
\end{figure}

\section{Materials}\label{appendix-data-detail}

This supplement provides more detail on the data processing and data validation we ran prior to running our survey and then on the final survey.
Section~\ref{app-presurveydatasource} gives details on each data source and how we processed it. In Section~\ref{sec-validation}, we describe our validation checks that are used to evaluate whether the obituary data contain the expected trends. In Section~\ref{app-samplingproc-summary}, we summarize the sampling procedure. Section \ref{sec-designimplementation} gives more detail on the design and results of our survey of Jewish Americans.

\setcounter{equation}{0}
\setcounter{figure}{0}
\setcounter{table}{0}
\renewcommand {\theequation} {B\arabic{equation}}
\renewcommand {\thefigure} {B\arabic{figure}}
\renewcommand {\thetable} {B\arabic{table}}

\subsection{Pre-Survey Data Sources}\label{app-presurveydatasource}

\subsubsection{Data from American Jewish Population Project}\label{app-AJPP}

Brandeis University's American Jewish Population Project (AJPP) provides regional estimates of the Jewish population. It has been used by Pew and others for identifying the geographic distribution of the American Jewish community. For its 2020 estimates, Brandeis aggregated 266 surveys of U.S. adults from the previous 5 years. In total, these surveys contained 1.3 million respondents and among them 32,300 Jewish identifying respondents. The AJPP uses this sample to model Jewish identity based on a variety of geographic (e.g., population density) and demographic (e.g., education, age) variables. As detailed in its reporting, the AJPP uses multilevel regression and post-stratification (MRP) methods to estimates Jewish people by geographic area (Tighe et al).

The AJPP data provide per-county estimates of the number of Jewish adults, but we aggregated these to the state level. This aggregation made learning the already high-dimensional surname by geography distribution more stable and reduced ambiguity around how to assign a funeral home to a geography. For example, if two neighboring counties contain only one Jewish funeral home between them, we did not want to act as if we had no observations of Jewish people from the county with no Jewish funeral home. 

We used the point estimates of the number of Jewish adults in each state to estimate $\Pr(G=g\mid R=1)$ as the proportion of total Jewish adults that are in each state $g$. To estimate  $\Pr(R=1\mid G=g)$, we used the AJPP data's ``all adults" point estimate to calculate the proportion of Jewish adults among all adults in state $g$. Although the AJPP data also include upper and lower bounds on these counts, we do not incorporate these into designing the sampling probabilities because it is not clear what role the interval should play in calculating the final sampling probability. We also expect that these estimates are not the main source of error in our probability estimates, and Figure \ref{fig-AJPPjewpropestimates} indicates that the bounds do not lead to dramatically different proportion estimates. 

\begin{figure}[!h]
    \centering
\includegraphics[width=.65\linewidth]{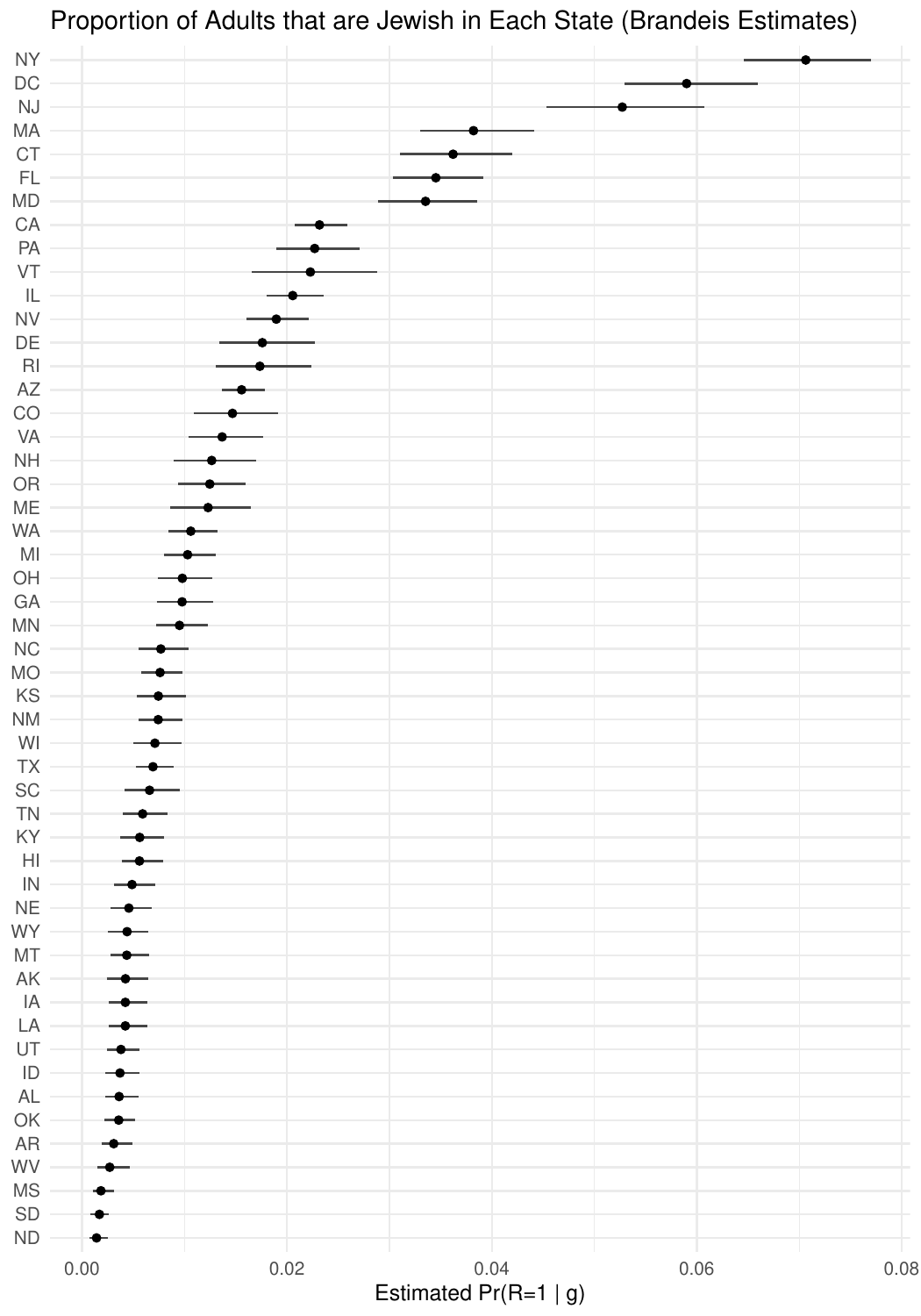}
    \caption{AJPP estimates of the proportion of Jewish adults in each state. Intervals are the upper and lower AJPP estimates  of the number of Jewish adults divided by the AJPP all adults estimate for that state.}
    \label{fig-AJPPjewpropestimates}
\end{figure}

\subsubsection{Obituary Data}\label{app-obit-data}

\subsubsection*{Data collection}

No official list of Jewish funeral homes exists. We thus compiled names of Jewish funeral homes from the following sources:  the Jewish Funeral Directors Association (\url{https://iccfa.com/jfda/}), the Independent Jewish Funeral Chapels (\url{https://www.nijfd.org/}), the National Association of Chevra Kadisha (\url{https://nasck.org/}), the Dignity Memorial Trusted Jewish Funeral Homes (\url{https://www.dignitymemorial.com/funeral-homes/jewish}), and Google Maps. We investigated each funeral home and excluded those that serve multiple faith traditions or make no mention of Jewish rites. This resulted in a list of 109 funeral homes. 

The geography of these funeral homes corresponds to the geography of the Jewish population, according to the AJPP statistics. For instance, the plurality of the funeral homes in our sample (29) are in New York. The next most common states are New Jersey (14), California (8), and Florida (8), which correspond to the four states with the largest Jewish populations. These four states contain approximately 54\% of the Jewish population in the United States and 48\% of 109 funeral homes in our sample. Other states with moderate-sized Jewish populations are also well represented in the funeral home database, such as Connecticut (6), Pennsylvania (5), Illinois (4), Massachusetts (4), Maryland (3), Michigan (3), Missouri (3), Ohio (3),  and Texas (3). The remaining states with a funeral home in our study are Arizona, Colorado, Washington DC, Delaware, Georgia (2), Maine (2), Minnesota, Nebraska, Nevada, Oregon, Rhode Island (2), and Washington State.

The obituary data were compiled from the PBI Research Services database of obituaries for individuals who died between January 1, 2000 and December 31, 2023 from one of the 109 Jewish funeral homes.  We assume that these individuals were Jewish. From this database, we were able to directly obtain (without any text processing) the date of death, date of birth, first name, surname, and city and state at time of death. We also obtained the text of the obituary. These variables form what we refer to below as \textbf{the dataset on deceased individuals}.

In addition, we used the ChatGPT API in October 2024 to process the text of the obituaries to extract additional information. The text was fed to ChatGPT along with the instructions below.  The returned JSON files were then processed and combined into one large file, resulting in a spreadsheet in which each row represents a name listed in an obituary that is associated with a relationship (e.g., grandchild) and the deceased person's information. For the purpose of this project, the main variable of interest was the first names of all people mentioned in the obituary text. We refer to this dataset as \textbf{the dataset of extracted first names.}

For ease of parsing and to only have ChatGPT analyze a single text block for each obituary, we prepended the above variables into the obituary text as five sentences: "The deceased is named \texttt{First name, Surname}. The deceased died on \texttt{Date of death}. The deceased was age \texttt{Date of death - Date of birth}. The deceased was born on \texttt{Date of birth}. The deceased's location is \texttt{City, State}."

We used the following prompt for ChatGPT to extract the first names:

\begin{Verbatim}[fontsize=\small,breaklines=true]
Extract the first names and relationships of the family members mentioned in the following obituary entries, along with the deceased's name, date of death, age at death, birth date, and location. The deceased's name, date of death, age, date of birth, and location can be found in the first five sentences. If the date of birth and age are missing from the first five sentences, check to see if either is in the obituary text, and extract it if so. Format the date of death and date of birth as YYYY-MM-DD if appropriate.

Remove (late) from all entries. Names in parentheses are spouses of family members. For example -- 'David (Andrea Bozoki) Annis' is two people -- David, son, and Andrea Bozoki, daughter-in-law. If there is only one name in parentheses, that is the first name of the spouse who shares the same last name as the listed family member. For example, 'Larisa (Vlad) Zaslavsky' should be Larisa - daughter and Vlad - son in law. Remove all nicknames for the deceased -- for instance, if someone is listed as Morton 'Mel' Smith, just list them as Morton Smith. Remove all titles like Dr, Mr, Mrs, Rabbi, etc from relatives -- those are not first names. Please include all family members mentioned, including in laws. If any of the information is missing, please put 'MISSING' in quotes.

If there is more than one relationship for an obituary, make sure the JSON is valid and it is nested. In other words, use [] to delineate. Validate and ensure the JSON is valid for each entry. Ensure each JSON output has the deceased name, location, and date listed. Make sure only single quotation marks (') are used around strings. Make sure there is one relationship listed per person -- it is very important to not list 'grandson, grandson' for example, and there should be no nested information under `relationship` -- each relative should be nested on the same level, as in the example given. There should only be one person listed per entry -- if, for example, an obituary has Michelle and Alison listed as daughters, put them in two rows. No person should have multiple relationships -- someone cannot be a wife and daughter of the same person, for example. Check this carefully. Make sure there is only a single JSON file for each entry, not multiple JSON files. Remove all apostrophes from names -- eg, Ma'ayan should be Maayan.

If there are no relatives in an obituary, put 'MISSING' for the person name and 'MISSING' for the relationship, again in quotes.

Output your results as JSON in the following format:{
     'deceased': ,
     'death_date': ,
     'age': ,
     'birth_date': ,
     'location': ,
     'person': [{
       'first_name':,
       'relationship':
     }]  
     }
\end{Verbatim}

\subsubsection*{Surname processing}

\begin{table}[!t]
    \centering
    \begin{tabular}{l|cc}
       &  \textbf{Rows} & \textbf{Unique Surnames} \\
     \hline 
    Before processing &  232,585    & 50,238 \\ 
    After processing & 233,365 & 49,198  \\ 
    \hline 
    \end{tabular}
    \caption{Summary of processing of surnames of deceased people.}
    \label{tab-obitdataprocessing}
\end{table}

Only the dataset on deceased individuals was used as a source of surname data. The following steps were then taken to clean the surname data. The results are summarized in Table \ref{tab-obitdataprocessing}.

\begin{enumerate}
    \item \textbf{Surname length}: We found 25 instances of the surname ``AH" which is short for ``alav hashalom," a Jewish honorific frequently found in obituaries that was sometimes accidentally scraped as the surname. Because of the small number of cases, we were able to look up and manually replace AH with the actual surname. There were no 1-letter names. Other 2-letter surnames and instances of very long surnames were real names.

    \item \textbf{Hyphenated names}:  The data contained $1,107$ unique hyphenated names representing $1,134$ observations. $1,129$ observations contained a single hyphen while 5 contained two hyphens. Splitting on hyphens results in $1,688$ unique name ``components'' of which 1,284 (76\%) were names that already occurred elsewhere in the dataset. To increase our counts for these surnames, we added `pseudo-rows' for each surname component to the dataset but only if the components already occurred in the dataset.  This was done to avoid adding names that might not be distinctively Jewish. Specifically, given a name ``AAA-BBB'', if ``AAA'' and ``BBB'' occurred elsewhere in the dataset, we replaced this row with a row with just ``AAA'' and just ``BBB.'' If the components did not exist in the data, we left just the name ``AAA-BBB.'' This is why the number of rows increases in Table \ref{tab-obitdataprocessing} 
\end{enumerate}

\subsubsection*{First name processing}

\begin{table}[H]
    \centering
    \begin{tabular}{l|cc}
       &  \textbf{Rows} & \textbf{ Unique First Names} \\
     \hline 
    Data on deceased, before processing & 233,365 & 9,657  \\ 
    Data on deceased, after processing & 233,320  & 9,108    \\ 
    \hline 
    Text extractions before processing &  2,315,403    &  36,056 \\ 
    Text extractions after processing &  2,216,036 &  35,965  \\ 
    \hline 
    \end{tabular}
    \caption{Processing of datasets of first names for deceased people and first names extracted from obituary text.}
    \label{tab-firstnameprocessing1}
\end{table}

For first names, the dataset on the deceased individuals and the dataset of extracted first names were pooled to create a single dataset on first names. Though this dataset does not really represent a random sample of Jewish first names, the hope is that it still gives a data-driven indication of more and less common names. We applied the following steps to both the first names of the deceased people and the text-extracted first names.

\begin{enumerate}
    \item \textbf{NA}: dropped 97,816 rows from the ChatGPT processed data where it retrieved no first name 
    \item \textbf{Initials:} replaced any first names with a single character and a space with just the name portion. For example, ``E Adele'' or ``Adele E'' would become Adele. This reduced the number of unique names by 533 among names of deceased people and did not affect the extracted first names. This was also applied to names from the voter file. 
    
    \item \textbf{Non-Names}: By exploring examples, we built a list of common accidental extractions in the ChatGPT extracted names and removed them. These included honorifics, words signaling family relationships, and obituary related words. The list of filtered out non-names was:
\small 
  \begin{verbatim}
c("MOTHER","FATHER","STEPSON", "DAUGHTER","^PARENT$", "^SONS$", "STEPFATHER", 
"STEPMOTHER", "STEP-", "GRANDFATHER","GRANDPA","GRANDMA", "^GREAT-", "^GREAT$", 
"^UNCLE$","^AUNT$","^COUSIN$", "^COUSINS$", "^CANTOR$", "^NEE[^a-zA-Z]$", 
"FIANCE","FIANCé", "SPOUSE", "^FRIEND", "^DOCTOR", "^PROFESSOR$", 
"^MR$", "^MS$", "^MISS$","^MRS$","^JR$", "^DR$", "^PROF$", "^SR$", "^JR$", "^MD$",  
"^PHD$", "^REV$", "^HON$", "^ADM$",   "^GEN$", "^LT$", "^COL$", "^SGT$",   "^CPT$",
"^ESQ$", "^ST$", "^FR$", "^PR$", "^CEO$", "^CFO$", "^COO$", "^HUSBAND$", 
"^DEVOTED$", "^CHERISHED$", "^LOVING$", "^DEAREST$", "^DEARLY$", "^GRIEF$", "^MOURN", 
"^BRIEF$", "^ONE$", "^TWO$", "^THREE$", "^FOUR$", "^FIVE$", "^SIX$", "^SEVEN$", 
"^EIGHT$", "^NINE$", "^TEN$")
\end{verbatim}
\normalsize

For the extracted first names, this  resulted in removing 99343 rows and 69 unique names. This had no effect on the first names of the deceased people.

\item \textbf{Nicknames}:  we used a manually created lookup table of 127 common nicknames to consolidate counts. For example, we grouped ``Ben" under ``Benjamin," ``Tommy," ``Tommie," and ``Tom" under ``Thomas" and ``Zach", ``Zack," ``Zak," and ``Zac" under ``Zachary". This processing was applied to all obituary and voter file first names.
\end{enumerate}

\noindent Tables \ref{tab-firstnameprocessing1} summarizes the result. After processing, the first names of the deceased and the first names extracted from the obituaries were combined into a single \textbf{pooled first name dataset} resulting in 38,814 unique first names. Of these, all but 3,064 occur as first names at least once in the voter file, indicating they are real first names. Based on manual review, we find the 3,064 are a mix of extraction mistakes of names likely to really be surnames, hyphenated names (336 cases), nicknames, and spelling errors of common names. 

\subsubsection{The Voter File}\label{app-voterfile}

We used the processed name information above to create sampling probabilities for individuals in the voter file. We used voter files from the research firm L2. Their state-level files combine publicly available voter registration data with matched commercial data and other modeled attributes. Only individuals who are registered to vote are in the voter file, which in the U.S. only includes citizens over the age of 18. Hence our target population for the survey excludes children, non-voters, and non-citizens. We used the latest version of each state's voter file that was available from L2 on May 2, 2025. This resulted in a sampling frame of approximately 214 million individuals spanning all 50 states and Washington DC.

\subsection{Pre-Survey Data Validation}\label{sec-validation}

\subsubsection{Geographic distributions}

The dataset on deceased individuals contains data on 233,365 individuals from 109 Jewish funeral homes in 30 states, 26 of which contain over 100 observations and 18 of which contain over 1000 observations with the highest number of observations (37,424) coming from Florida. Figure \ref{fig-comparison_of_prop} indicates that the correlation between the distribution over states in the Jewish population estimated by AJPP and the state distribution of the obituaries is high, with California and New York slightly under-represented and Florida over-represented. The distribution over states in the voter file filtered to only people with surnames present in the obituary data also tracks fairly well with AJPP estimates. This suggests that we are not drastically under or over representing any state where the Jewish population is high.

\begin{figure}[H]
    \centering
    \includegraphics[width=\linewidth]{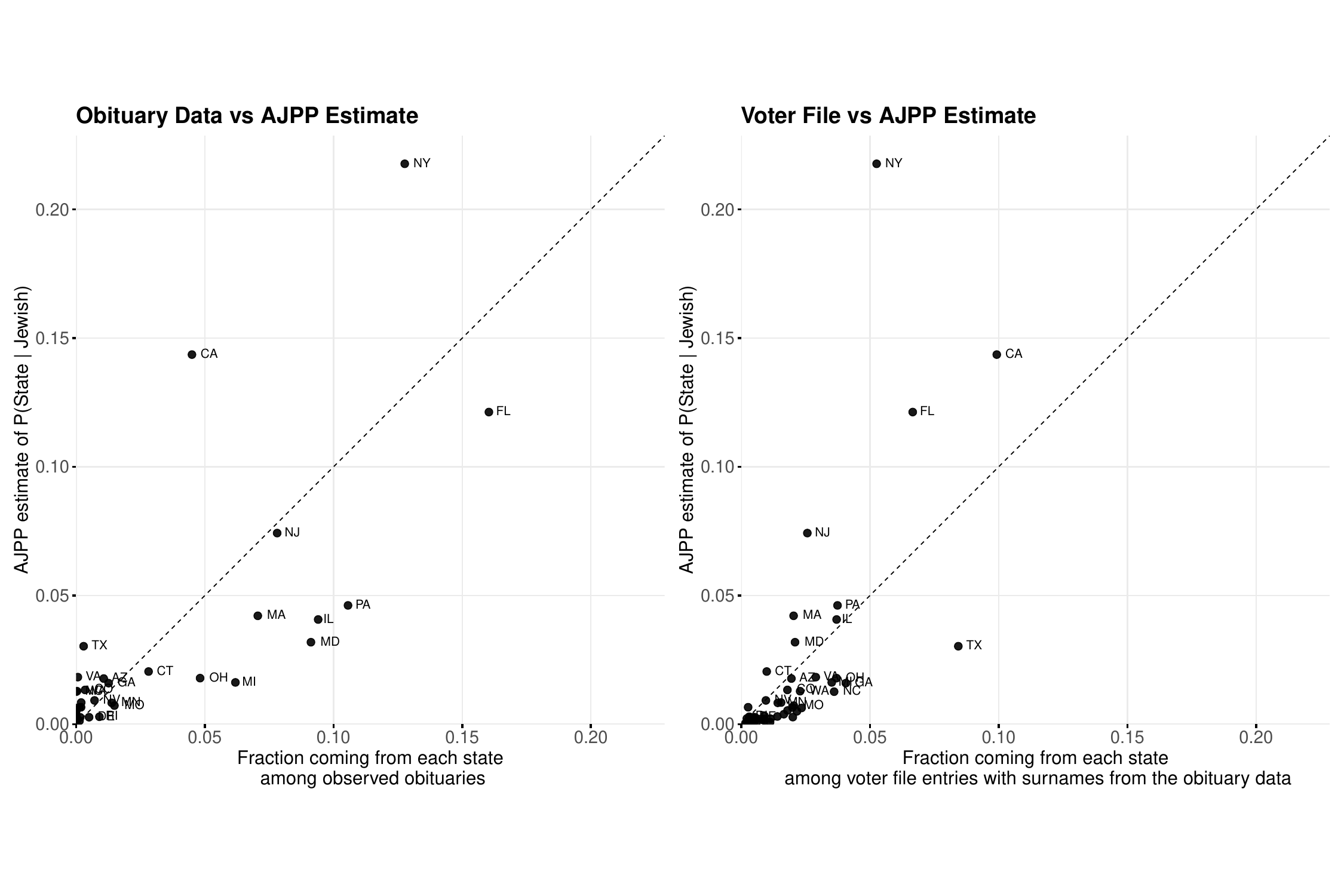}
    \caption{\textbf{Left:} Comparing distribution of Jewish population by state as estimated by AJPP to the allocation of Jewish obituaries by state. \textbf{Right:} the distribution over states among people in the voter file whose surname appears in the obituary data.} 
\label{fig-comparison_of_prop}
\end{figure}

\subsubsection{Age distribution}

The over-representation of Florida among obituaries likely reflects the fact that Florida is a popular retirement destination, and the people in the obituary data tend to be much older than the general population. Their mean age at death is 82 with a standard deviation of 13 and the median age at death is 86. The median year of birth is 1926, so they are also an older cohort compared to people in the voter file who we would like to sample today. We expect the shifts in age to be more relevant for first names than surnames, as surnames tend to persist across generations while first names reflect more variable generational trends. For example, among the 9,108 unique first names from the obituary data, 2,849 of them occur only in the obituary data and not in the dataset of extracted first names. This is a main motivation for supplementing the data on deceased individuals with the dataset of extracted first names for studying first names, as these are likely to reflect younger generations. This dataset adds 29,157 unique first names that were not in the obituary dataset.

\subsubsection{Name informativeness}

To partially validate the surnames from the obituary data, on November 5, 2025, we scraped the list of ``Surnames of Jewish origin'' from Wikipedia.
This resulted in 1,460 names. Of these, 1,009 (69\%) appear in the obituary data while 451 are not. This indicates that the obituary covers a sizable proportion of names widely recognized as of Jewish origin. We also found that as one might expect, these 1,009 names have higher prevalence than average in our obituary dataset. The lack of perfect overlap may reflect the fact that the Wikipedia page includes names from all over the world and over time, some of which may not be current in the U.S. The obituary data also contain 48,189 additional surnames not on Wikipedia, reflecting that it is a much more expansive list of names that at least sometimes belong to Jewish individuals. 

\begin{table}[H]
    \centering
\renewcommand{\arraystretch}{1.5}\begin{tabular}{p{7cm}|cc|cc}
  & Mean Count & Median Count  \\
  \hline 
    All 49,198 surnames & 4.74 & 1 \\ 
    The 1,009 surnames also on Wikipedia & 60.85 & 12  \end{tabular}
    \caption{Comparison of surname counts from obituary data.}
    \label{tab-lastvalidation}
\end{table}

We ran a similar check on the pooled dataset of first names. Many of the first names in this dataset represent family of the deceased. Though not all of these people will be Jewish, we assume that enough are so that names which are more prevalent in Jewish obituaries are also more prevalent among Jewish people in general. To help evaluate that assumption, on November 5, 2025, we scraped the names on the Wikipedia pages of Jewish given names.
This resulted in only 388 first names, but of those 388 names, 340 appear in the first name dataset (87\%). These names also tend to have higher than average counts compared to all first names in the dataset (Table \ref{tab-firstvalidation}). 
\begin{table}[H]
    \centering
\renewcommand{\arraystretch}{1.5}\begin{tabular}{p{6cm}|cc|cc}
  & Mean Count & Median Count & Mean Ratio & Median Ratio \\
  \hline 
    All 35,965 first names & 63.11 & 2 & 3.45 & 1.08\\ 
    The 340  first names also on Wikipedia & 1092.07 & 61 & 7.95 & 5.09 \end{tabular}
    \caption{First name count comparison, ratios are untruncated for this calculation.}
    \label{tab-firstvalidation}
\end{table}

One concern is that the names on Wikipedia and names with high counts in the dataset just tend to both be more common names. To account for this, we can calculate the ratio of each first name's prevalence in the dataset of extracted first names to its prevalence in the voter file.

\begin{equation}\label{ratio-formula}
    \hat{r}_f = \frac{\text{Proportion of entries with name $f$ in obituary data}}{\text{Proportion of entries with name $f$ in voter file}}
\end{equation}

\noindent where $\hat{r}_f$ is an estimate of $r_f := \Pr(F=f\mid R=1)/\Pr(F=f)$.  We find that the names found on Wikipedia also have larger ratios, indicating they are relatively more prevalent in the obituary data than in the voter file. That is, based on our data alone, the Wikipedia-scraped names would generally also have been identified as distinctively Jewish. A few Wikipedia names do have ratios $<1$, but this is not necessarily wrong. For example,  Alya ($\hat{r}_f=0.77$), is listed on the Jewish Wikipedia page but turns out to also have Slavic, Arabic, and Ancient Greek roots.

As described in Section~\ref{sec-first names}, the ratio $\hat{r}_f$ can also be used to up-weight sampling probabilities for people with distinctively Jewish first names. To reduce instability due to small sample sizes, we only calculated $\hat{r}_f$ over the set of 28,627 names which appear at least 10 times in the voter file or Jewish first name dataset (or both). 52\% of these yield a ratio $>1$ while 48\% have a ratio less than 1, and Figure \ref{fig-ratio-hstogram} shows substantial variation in the ratios.
Given that the data are not a perfect random sample of Jewish first names,  we expect some inaccuracy. However, an exploration of the ratios indicates face validity. For example, Table \ref{tab-top-name-table} gives the top and bottom 15 names by ratio among names that appear at least 10 times in the voter file, many of which also seem reasonable. Ratio estimators do carry the risk of extreme behavior. We see this in some of the rarer names, such as Chickie, which has a ratio of 168 based on only a few observations. This motivates the truncation described in Section~\ref{sec-assembly} when using the ratios in sampling probabilities. 

\begin{figure}[H]
    \centering
    \includegraphics[width=0.35\linewidth]{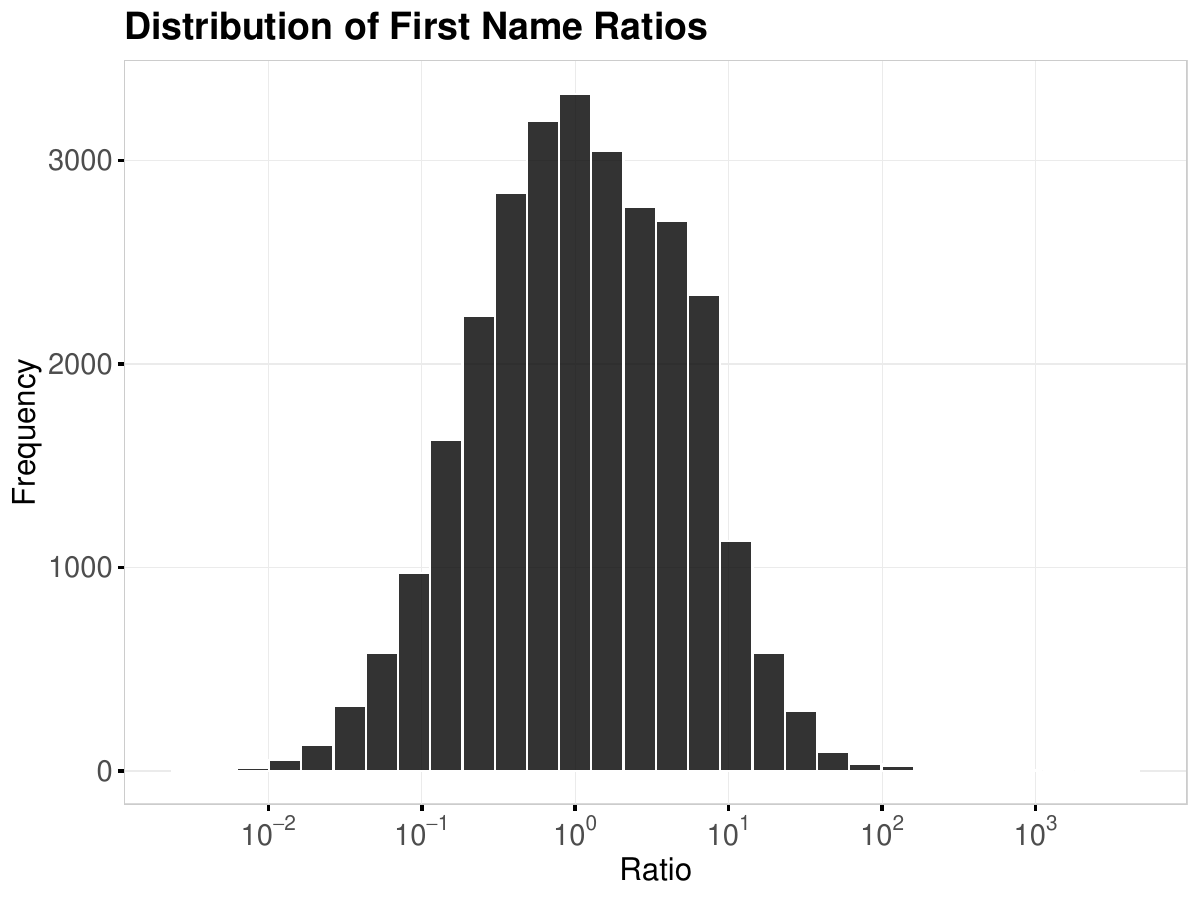}
    \caption{Histogram of $\log_{10}(\hat{r}_f)$.}
    \label{fig-ratio-hstogram}
\end{figure}

\vspace{1cm}

\begin{table}[H]
    \centering
    \begin{tabular}{cc|cc}
    Top 15 First Names    & Ratio  &  Bottom 15 First Names & Ratio \\
    \hline 
    Hyman     & 269 & Mohammed &.003\\
    Roz &  223  & Latasha & .003\\
    Chickie & 168   & Muhammad & .004\\
    Goldye   & 161 & Tamika & .004 \\
    Itzy  & 150   & Reynaldo  & .004 \\
    Myer     & 148 & Latonya & .005\\
    Yetta     & 147 & Esmerelda & .005\\
    Isadore & 141 & Abdul & .006\\     Tybie & 137 & Deandre & .006 \\
    Milt &  136 & Guadalupe & .007 \\
    Seymour &  124 & Gerardo & .007 \\
    Stu &  122 & Jesus & .007 \\
    Meyer &  120 & Lakeisha & .007 \\
    Gavi &  113 & Ebony & .008 \\
    Morrie &  112 & Fidel & .008 \\
\end{tabular}    \caption{First names ranked  by their $\hat{r}_f$ values: top and bottom 15 names among names that appear at least 10 times in the voter file. These are all names that appeared at least once among the deceased individuals or the first names extracted from obituary text.}
    \label{tab-top-name-table}
\end{table}

\subsection{Survey Sampling procedure}\label{app-samplingproc-summary}

\subsubsection{Summary of steps}\label{sec-assembly}

We assembled our final sampling probabilities using the following steps:

\begin{enumerate}

    \item \textbf{Filtering}: Filter sampling frame to only people with surnames appearing in the obituary data. This was about 57\% of the voter file, resulting in a sampling frame with about 120,000,000 individuals
    
    \item \textbf{Surname probabilities}: estimated as described in Section~\ref{sec-surname_dist_results}. 
    
    \item \textbf{First name ratios}: we calculated ratios $\hat{r}_f$ as in equation~\eqref{ratio-formula} for all first names that occurred 10 or more times in the voter file, Jewish first name file, or both. Ratios were also truncated at 10 to limit their positive impact on the sampling probabilities. This truncation impacted 1,906 names. Ratios for any first names that did not appear in our first name dataset were set to $\hat{r}_f=1$ to have no impact.
    
    \item \textbf{Unnormalized probability:} we calculated
    \begin{equation}
        \hat{\pi}_{f,s,g} = \frac{\hat{\Pr}_{\text{OBIT}}(s\mid G=g,R=1)\hat{\Pr}_{\text{AJPP}}(R=1\mid G=g)}{\hat{\Pr}_{\text{VF}}(s\mid G=g)}  \hat{r}_{f,\text{OBIT}}
    \end{equation}
    where the subscripts note the main data source for estimating each part. For any surnames not appearing in the obituary data, $\hat{\pi}_{f,s,g}=0$. Because $\hat{r}_{f,OBIT}$ was truncated to $10$, it was possible to have a $\hat{\pi}_{f,s,g}$ greater than 1. We truncated these to $1$. This truncation impacted only $.7\%$ of individuals in the filtered sampling frame.
    
    \item \textbf{Sampling probability:} We calculated $\probnormg=\sum_{i:G_i=g}\hat{\Pr}(R=1\mid F_i,S_i,G=g)$ for each state $g$ and final sampling probabilities $\pi_{f,s,g} = \targetg\frac{\hat{\Pr}(R=1\mid F=f,S=s,G=g)}{\probnormg}$. Here $\targetg$ was calculated as described in Section~\ref{sec-actual-allocations} with target sample size $\targetoverall=50,000$. The  distribution of the unnormalized and normalized sampling probabilities is summarized in Figure \ref{fig-summary_sample_probs}. As expected, the majority are small. No individual was sampled with a probability higher than $0.10$.

\item  \textbf{Sampling}: We sampled each person independently using their sampling probability. The resulting sample contained 49,546 people.
\end{enumerate}

\begin{figure}[H]
\centering
\includegraphics[width=\linewidth]{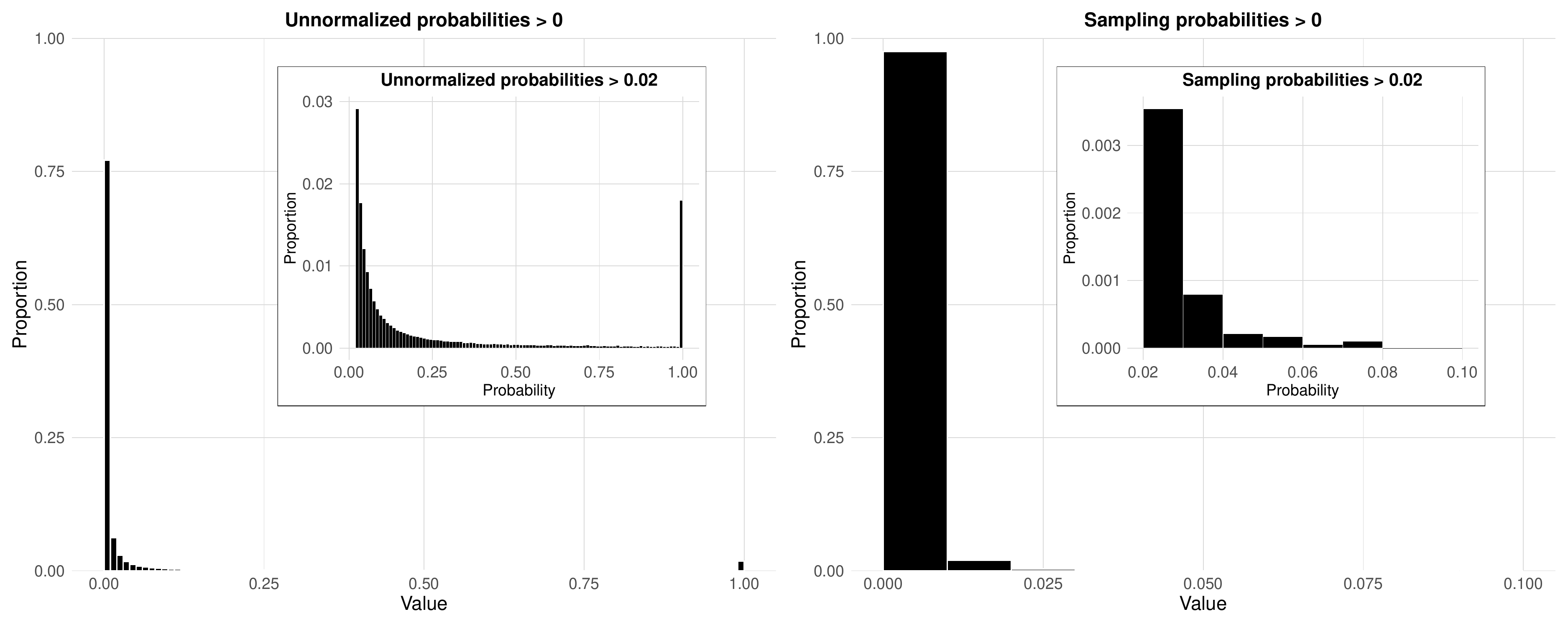}
    \caption{Histograms of unnormalized estimates of $\Pr(R=1\mid F=f,S=s,G=g)$ with those $>1$ from ratio $\hat{r}_f$ multiplication truncated at 1 (left) and normalized final sampling probabilities (right). Facets zoom in on tails.}
    \label{fig-summary_sample_probs}
\end{figure}

\subsubsection{Estimating surname distributions}\label{sec-surname_dist_results}

We applied the MCMC sampler described in Section~\ref{sec-sampler} to sample from the Bayesian hierarchical model described in Section~\ref{sec-learning-surname-dist} for the obituary data.  Table \ref{tab-mcmc-choices} summarizes our parameter choices. To determine an appropriate burn-in to discard for estimation and evaluate sensitivity to initialization, we first ran the sampler three times for 12,000 iterations. Convergence was slower than in our simulations, which is unsurprising given the larger number of surnames, but after a few thousand iterations, there was good mixing on a number of randomly selected trace plots. We also calculated the Gelman-Rubin diagnostic statistic for each $\alpha_s$ using the \texttt{coda} package in \texttt{R}, and found that on average over surnames, both the point estimates and upper confidence interval bounds calculated by that package converged towards 1 and were, within about 6000 iterations, under the common rule-of-thumb threshold of 1.1. The same was true for $\eta$, though the diagnostic values were closer to 1.1. 

\begin{table}[!t]
    \centering
    \renewcommand{\arraystretch}{1.3}  
    \begin{tabular}{c|l}
    \hline 
      \textbf{Setting} & \textbf{Description}\\ 
      \hline 
      $\gammavec$   & Each $\gammavec_s$ set to $\ms+1$  \\
      $\alpha$   & Initialized each $\alpha_s$ by calculating $\gammavec_s+\epsilon$ for $\epsilon \sim\text{Binomial}(100,.5)$ and then normalizing over $s$. \\
      $\eta$     & Initialized by drawing from $\eta^{(0)}\sim \text{Binomial}(100,.5)$\\
      $\eta$ prior & $\eta \sim \text{Gamma}(1,\frac{1}{100})$ (has mean $100$ and standard deviation $100$)\\
      Number of iterations & 45,000\\
      Burn-in & 15,000\\ 
      \hline 
    \end{tabular}
    \caption{Summary of choices made when running MCMC sampler}
    \label{tab-mcmc-choices}
\end{table}

Based on this initial analysis, we then ran the sampler for 45,000 iterations, with a 15,000 iteration burn-in. Figure \ref{fig-real-alpha-trace-plot} shows the $\alpha_s$ trace plots for a selection of surnames (selected to represent more and less frequent ones) and for $\eta$ including the burn-in iterations. On average, the mean fraction of accepted $(\alpha_i,\alpha_j)$ pair proposals per single iteration was $0.68$. As shown in Figure \ref{fig-eta-trace-plot}, the mean acceptance proportion for the $\eta$ proposals was, as in our simulations, quite low at $0.022$ after an initial period of larger movement in the chain.
\begin{figure}[!t]
    \centering
    \includegraphics[width=.75\linewidth]{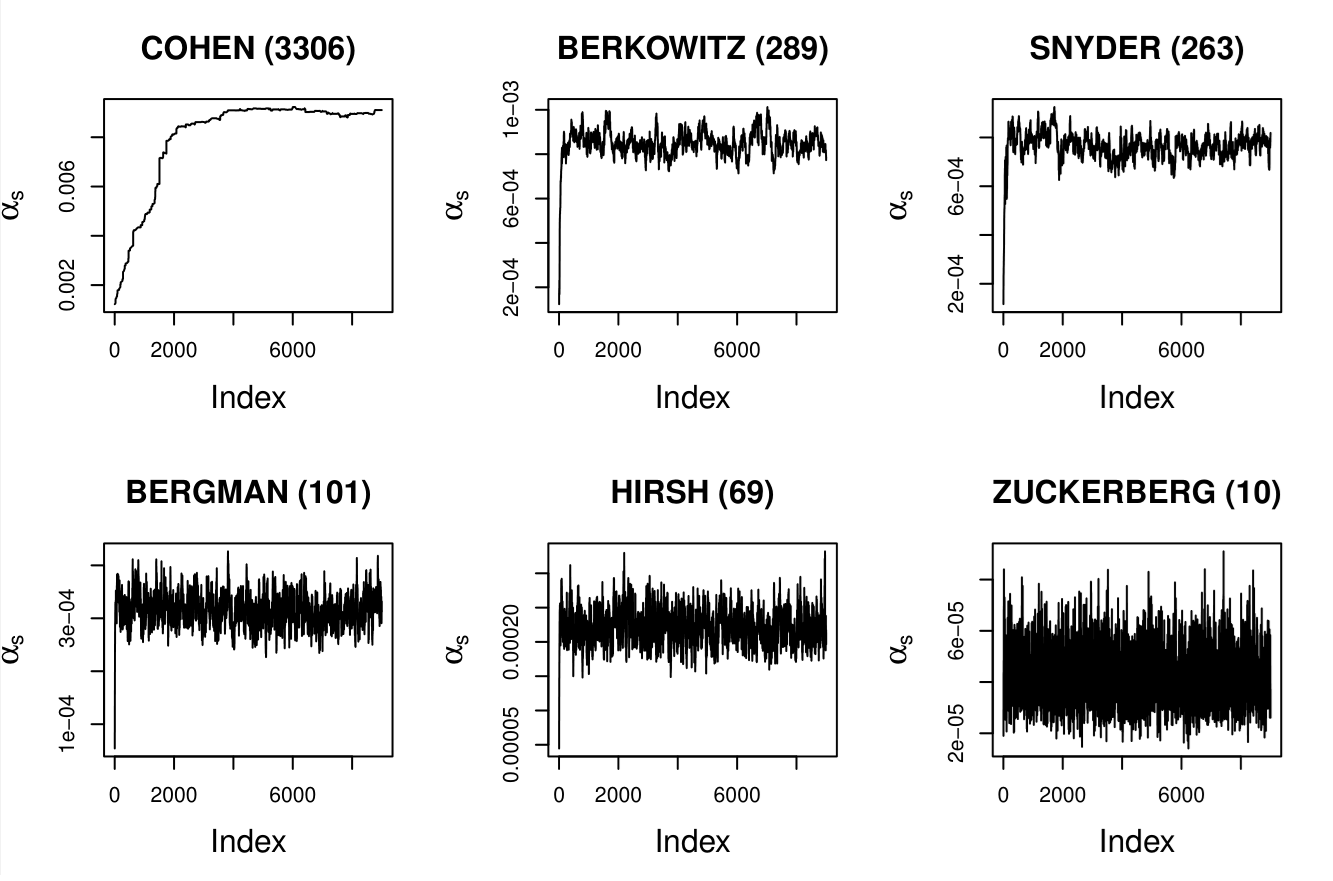}
    \caption{Trace plot for a selection of surnames from the 45,000-iteration run of MCMC sampler. Shows every $5^{th}$ iteration and includes the burn-in period except for the first 500 iterations to allow for a scale which makes fluctuations more visible. Surnames were selected to include both common and rarer examples. Obituary data counts are noted in parentheses next to each surname.}
    \label{fig-real-alpha-trace-plot}
\end{figure}
\begin{figure}[!t]
    \centering
 \includegraphics[width=.55\linewidth]{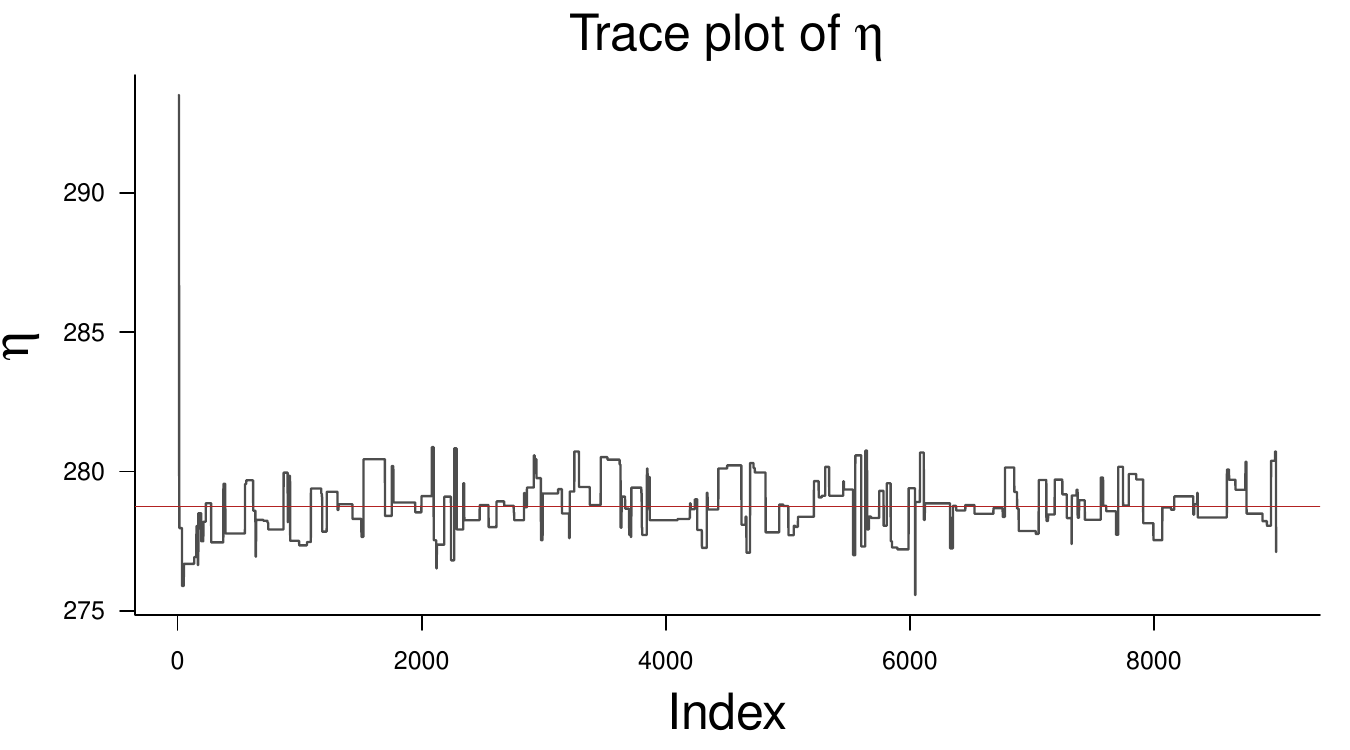}
    \caption{Trace plot for $\eta$ parameter from 45,000 iteration run of MCMC sampler. Shows every $5^{th}$ iteration and includes the burn-in period except for the first 50 iterations, which were very volatile and make the rest of the plot hard to read. The red line marks the posterior mean.}
    \label{fig-eta-trace-plot}
\end{figure}

We then calculated the posterior means using the empirical version of equation~\eqref{eq-postmean} for the chain of 30,000 samples. Main text Figure \ref{fig-rho-postmean} shows the degree to which these were based on the overall proportion or the state-specific one for each state.

Finally, as described in Section~\ref{sec-bounds}, after this estimation, we imposed an upper bound derived from the AJPP and voter file data on the $\hat{\theta}_{gs}$ estimates by setting any that exceeded the bound to the bound value. This correction was necessary for 47,767 of the state-surname probability estimates (about $2\%$ of state-surname combinations). Additionally, about 62\% of surname-state combinations did not occur in the voter file, meaning that we knew $\Pr(S\mid G,R=1)=0$ for our sampling frame. Estimates of $\Pr(S\mid G,R=1)$ for these combinations played no role in sampling.

\medskip

\paragraph{Error note.} Our original implementation of the MCMC sampling algorithm accidentally omitted from the acceptance probability for $\eta$ two terms: (1) the $\eta'/\eta$ from the Jacobian in equation~\eqref{eq-eta-proposal} and (2) the power $L$ on the ratio $\Gamma(\eta')/\Gamma(\eta)$ in  equation~\eqref{eq-peta-accept}. These errors only affected the $\eta$ update part of the sampling algorithm. Because we picked fairly small $\eta$ updates, the $\eta'/\eta$ ratio will often have been fairly close to $1$. 

After noticing this error, we re-ran all affected simulations in Section~\ref{sec-sim} and found that the simulation plots were indistinguishable from before the correction with one exception: the magnitudes of the scaling factor $\eta$ were a bit larger after the correction. Because this error was caught after already running the survey, we have not re-run the sampler or updated the plots in this section. That is, the current section still reflects the actual probabilities we estimated. 

Based on our simulations, we suspect that had we run the entire process with the correction, the only difference would have been slightly stronger smoothing parameters $\rho_g$ and hence slightly more tendency for our state-specific estimates to resemble the overall surname proportions. The downstream validity of our sampling using the estimated probabilities and analysis of the results is not affected by this error.

\subsubsection{Allocating stratum targets}\label{sec-actual-allocations}

Figure \ref{fig-allocation-comparison} compares the different allocation methods described in Section~\ref{app-sampling} for our application. Starting at the lower right, the last two simply allocate proportional to $\Pr(G=g\mid R=1)$ or $\Pr(R=1\mid G=g)$. The next two are the disproportionate stratified sampling formulas without filtering adjustment (using $\popoverallg$ and $\Pr(R=1\mid G=g)$) or with the filter update with $\Pr^{\mathrm{best.case}}(R=1\mid G=g,H=1)$ described in Section~\ref{app-filtering-alone}. The Poisson sampling option is equation~\eqref{eq-pois-filterupdate}.

However, due to a coding error from selecting the wrong column of a data frame, in our application, we accidentally calculated $\targetg$ proportional to $\sqrt{\Pr(R=1\mid G=g)\probnormgstar}$ and did not catch this until later because it still produced reasonable $\targetg$. As shown in Figure \ref{fig-allocation-comparison}, this bug-method was thankfully highly correlated with the other options and particularly with our intended $\targetg^{\mathrm{pois}}$ calculation. We do not recommend the ``actually used'' method for the future. 

\begin{figure}[!h]
    \centering
    \includegraphics[width=.9\linewidth]{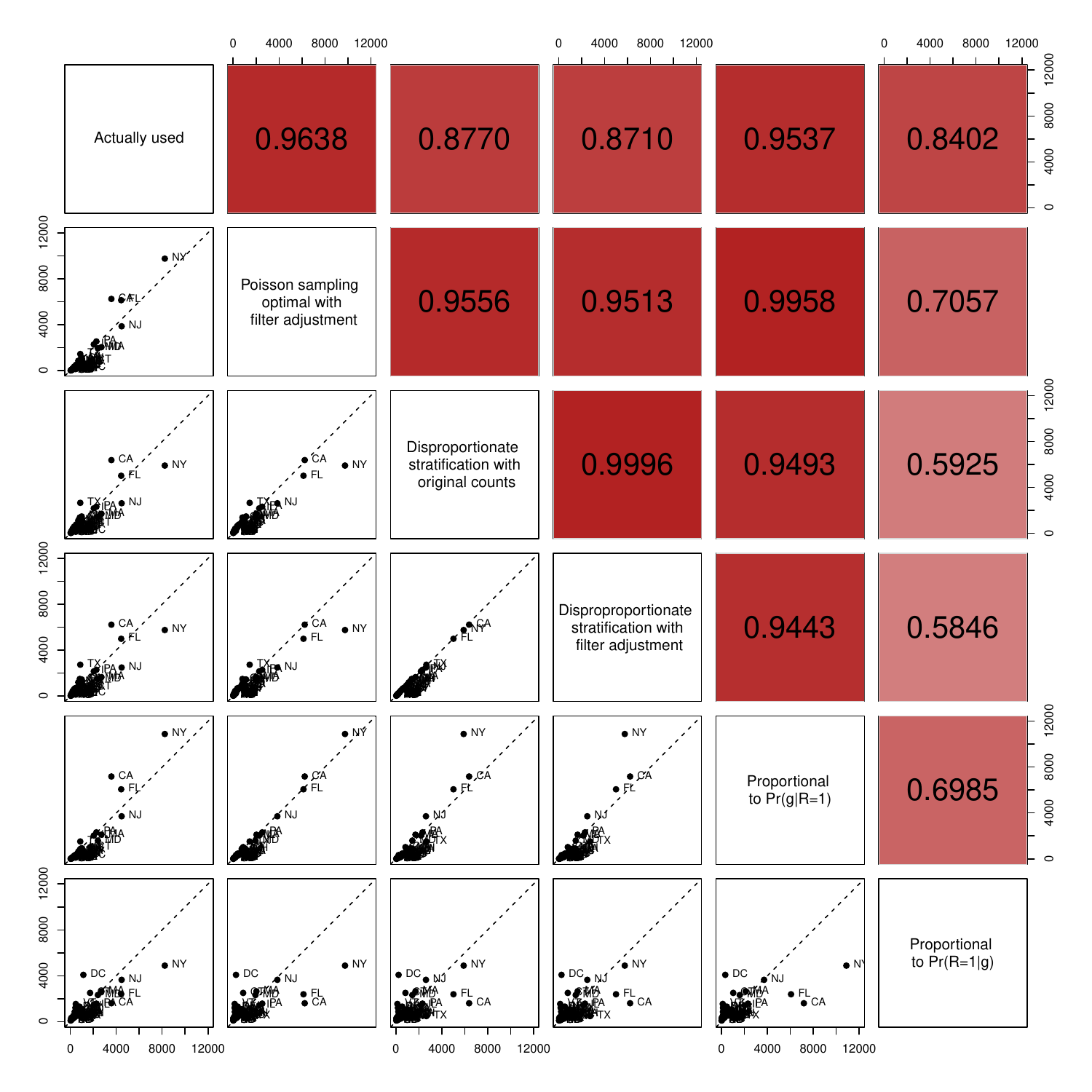}
    \caption{Comparison of allocation methods discussed in Section~\ref{sec-opt-pois} for Jewish sampling application. }
    \label{fig-allocation-comparison}
\end{figure}

\newpage 
\subsection{Survey Design and Implementation}\label{sec-designimplementation}

Section~\ref{app-postcard} describes the invitation and disclosures, as well as how we measured Jewish identity. Section~\ref{app-response-rate} gives details on the survey response. Lastly, Section~\ref{app-weights} checks the robustness of our results to different weighting schemes.

\subsubsection{Invitation and disclosure}\label{app-postcard}
Respondents were solicited based on a postcard mailed to their home address. One side of the postcard contained a Tufts University logo. The other side contained the following message: 

\begin{quote}
    Dear [firstname] [lastname]
I am a professor at Tufts University working in collaboration with researchers at Harvard University. As part of our research on public opinion, we are conducting a brief academic survey of U.S. adults. The survey only takes about 5 minutes to complete. The questions gauge attitudes on civic, religious, and political topics. You have been selected to participate because you are a registered voter. Please help us by taking this short survey. If you would like to learn more about the study and take the survey online now, you can do so by navigating to \url{https://as.tufts.edu/politicalscience/survey2025} and entering [survey id] as your ID number. 
With gratitude, 
Eitan Hersh, PhD
Tufts University
\end{quote}

 At the end of the survey, respondents saw a disclosure that informed them of the full purpose of the study. This study design was approved by the Tufts University Institutional Review Board (STUDY00005675).

\paragraph{Initial disclosure} When individuals began the survey, they read an informed consent page stating that the survey would gauge their civic, religious, and political attitudes, and that it had a specific further purpose that would not be disclosed until the end of the survey. Nothing yet indicated that the survey concerned Jewish identity. This was done to reduce response bias that could arise if Jews with specific demographics or views are more or less likely to contribute to a survey they know to be focused on Jews.

When the survey was first approved by the Tufts University Institutional Review Board in April of 2024, it included the following disclosures. Before agreeing or declining to take the survey, respondents saw this informed consent script:
\begin{quote}
You are being asked to volunteer for an academic research study conducted by Dr. Eitan Hersh of Tufts University in collaboration with researchers from Harvard University. The purpose of this study is to understand the civic, religious, and political attitudes. You are being asked to participate because you are a registered voter in the United States. Participation in the study entails filling out an online questionnaire. The questionnaire will take approximately five minutes to fill out.   The study has a more specific scholarly purpose that will be disclosed upon completion of the survey. That purpose is not disclosed in advance out of concern that it would bias results of the study.   

You may derive an intellectual benefit from thinking about your civic attitudes and behaviors, but the researcher offers no tangible benefits to you for participating. Participation is completely voluntary. The questions are not sensitive, are not expected to cause offense or embarrassment, and no foreseeable risks are anticipated. You are free to decline to participate, to end participation at any time for any reason, or to refuse to answer any individual question. Refusal to participate or discontinuing participation after consent will involve no penalty or loss of benefits to which you are otherwise entitled.   

I will publish articles and datasets from this questionnaire. I will take measures to protect your privacy and confidentiality. All of your responses will be held in confidence. As this study is a collaboration with researchers at Harvard University, your identifiable data may be shared with Harvard researchers for analysis in this research. Once we have gathered your responses, we will store and analyze your responses in a de-identified database. The information you provide will be kept for at least 3 years after the study is closed. You will not be named, and you will not be personally identifiable in any publication of results or data once the data are de-identified.  People responsible for monitoring this research may be able to access the data. This includes the Tufts University Institutional Review Board. Any identifying information linking your identity to your responses will be removed prior to analysis. Please note, however, that unlike information you provide to your doctor or lawyer, the investigator can be compelled by a court to disclose this information. There is no compensation for participation.   

We will merge information from your voter registration (age, history of voter turnout, party affiliation, if applicable) with your survey data, and all identifiers, including your survey participant ID code, will be removed before the data are analyzed. Once de-identified, the data could be used for future research studies without additional informed consent from you or your legally authorized representatives. Should you complete the survey and then seek to be withdrawn from the study, please contact the researcher immediately (contact information below). Soon after the data collection is complete, the researcher will produce a de-identified dataset to be used for analysis. At that point, it will not be possible to withdraw consent.   

For questions or concerns about the research study or procedures, or if you need to notify someone of a complaint, please contact the Tufts research team. The Principal Investigator is Prof. Eitan Hersh. Email: eitan.hersh@tufts.edu. Telephone: 617-627-2043. If you have questions or concerns about your rights as a research participant, or if you would like to discuss the study with someone outside of the research team, contact the Tufts University Social Behavioral and Educational Research Institutional Review Board (SBER IRB). 75 Kneeland Street, Boston MA, 02111. Telephone: 617-627-8804. Email: sber@tufts.edu. Website: https://viceprovost.tufts.edu/about-sber-irb  Do you consent to participate in this research? 
o	Yes  (1) 
o	No  (2) 

\end{quote}

For those who agreed to take the survey, at the end of the survey questions they saw the following disclosure: 
\begin{quote}
DISCLOSURE: Participants were told that the purpose of the research was to study the civic, religious, and political attitudes of U.S. adults, and that participants were recruited because they are registered voters. That information is incomplete. Participants were also not told the true purpose of the research. The true purpose of this survey is to validate a predictive model estimating which U.S. registered voters might identify as Jewish. Respondents were recruited based on estimating their probability of being Jewish based on their name and geography, given that they are registered voters. The true purpose of the study was withheld from participants to reduce bias. If the purpose was disclosed, then one's religious or ethnic identity might predict the decision to participate, which would bias the analysis. If you have questions, you may reach out to Dr. Eitan Hersh at eitan.hersh@tufts.edu.
\end{quote}

\paragraph{Updated disclosure}
In July 2025, once the survey was in the field, we received several emails and phone calls expressing concern about the survey. Several respondents either wanted to be removed from the study after they read the disclosure or expressed that our incomplete disclosure caused concern. Several Jewish respondents felt uncomfortable contributing to research that targets Jewish Americans, even if for research purposes.

We quickly worked with the IRB to adjust the language of the second disclosure. Thus, part way through the fielding of the survey, the post-survey disclosure was changed to the following
\begin{quote}

DISCLOSURE: You were told that the purpose of the research was to study the civic, religious, and political attitudes of U.S. adults, and that you were recruited because you are a registered voter. That information is incomplete. You were also not told the true purpose of the research. The purpose of the research is to validate a predictive model that has been created to better identify Jewish Americans.  The purpose of model is to allow researchers to improve the ability to conduct surveys of Jewish Americans. Jewish Americans are a distinct subpopulation in the United States, but they are hard to study due to their relatively small numbers. Scholars, as well as foundations and non-profit organizations, seek to better understand the attitudes of this subpopulation, and this model aims to make it easier for them to do so. The present research is motivated by the desire to be able to improve research that could benefit this population.

You were recruited based on estimating your probability of being Jewish based on you name and geography, given that you are registered voter. The true purpose of the study was withheld from you to reduce bias. If the purpose was disclosed, then your religious or ethnic identity might predict the decision to participate, which would bias the analysis. 

Please confirm your consent to participate in this study, or whether you would like to withdraw, by checking a box below. 
-I consent to the use of my data in this research.
-I do not consent to the use of my data in this research. Please remove and delete my data.

You may also withdraw from the study by contacting the Principal Investigator at eitan.hersh@tufts.edu Your data can be removed from the dataset up until it is de-identified, at which point it will not be possible to know which data are yours.

If you have questions, you may reach out to Dr. Eitan Hersh at eitan.hersh@tufts.edu.
\end{quote}

Hence the updated disclosure provided more details about the motivations of the study as well as an additional opportunity for respondents to opt out after having read the complete disclosure. 
Twenty-two individuals not included in our reported response count withdrew their records after reading the post-survey disclosure.

\paragraph{Measuring Jewish Identity}\label{app-identity}

 Jewish identity was measured based on a two-part question. The first part was a question that asked: ``What is your present religion, if any?'' Following Pew, we then asked a separate question, ``ASIDE from religion, do you consider yourself to be any of the following in any way (for example, ethnically, culturally, or because of your family's background)?'' On this item, we asked if the respondent identified in this way as Jewish, Catholic, Mormon, or Muslim. We count Jewish identifiers as those who are either Jewish by religion or who are atheist, agnostic or of no religion but who identify as ethnically or culturally Jewish.

\subsubsection{Other survey questions}

The survey questionnaire contained a number of questions, some of which are analyzed in a separate paper. The full questionnaire will be available in the replication archive. We note that for comparison, Pew’s 4-page screening survey and 20-page extended survey are available at \url{https://www.pewresearch.org/religion/wp-content/uploads/sites/7/2021/04/PF_05.11.21.Jewish_Survey_Mail_Screening-Questionnaire.pdf} and
\url{https://www.pewresearch.org/religion/wp-content/uploads/sites/7/2021/04/PF_05.11.21.Jewish_Survey_Mail_Extended_Questionnaire.pdf} respectively.

\subsubsection{Response}\label{app-response-rate}

After selecting records from the voter file, we used the names and addresses listed in the voter file to send each sampled individual a postcard from Tufts University, inviting them to take 5-minute Qualtrics survey on ``civic, religious, and political topics'' on a website. Of 49,546 people sent the postcard, 1,765 or 3.6\% completed the survey (not including 22 who completed but later opted out of the survey). Of these, 1,759 were linkable back to their voter file record while 6 respondents did not fill in a necessary unique identifier to allow this. Over 90\% of responses were submitted between June 27 and July 15, 2025. Additional respondents trickled in through August and September. 

The true response rate among those who actually received the postcard is likely slightly higher. In similar recent studies that mailed survey solicitations to U.S. voters, approximately 2\% of solicitations come back as undeliverable as addressed, due to factors such as residential moves and deaths \citep{hershshah25}. We did not track undeliverable mail, so we calculate the response rate based on mail sent. 

Table \ref{tab-sampleresponsecounts} and Figure \ref{fig-hist-response-rates} summarize the response rates by state. Excluding North and South Dakota, which have response rates of 0\% and 11\% respectively because we only sampled a very small number of people there, most response rates are between 1\% and 8\% with an average of 4.2\%.

\begin{figure}[H]
    \centering
    \includegraphics[width=.5\linewidth]{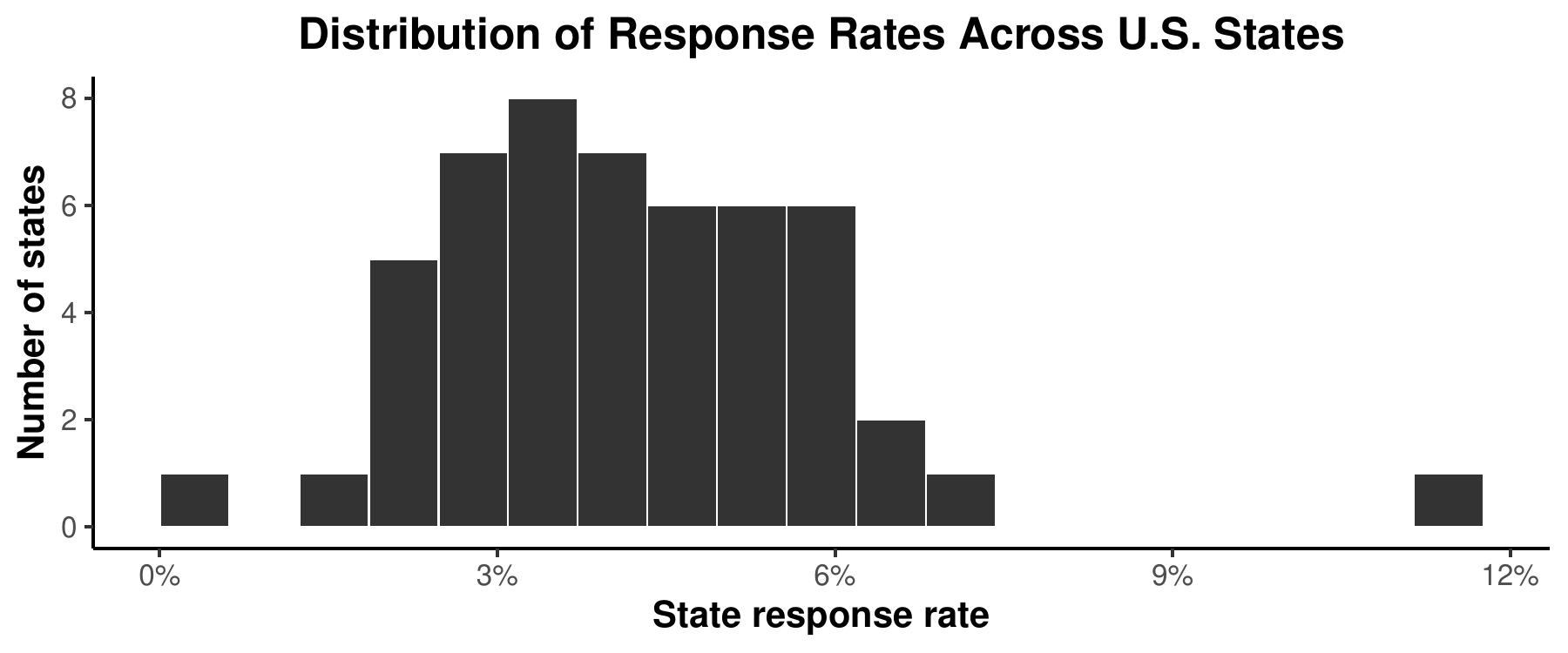}
    \caption{Histogram of response rates.}
    \label{fig-hist-response-rates}
\end{figure}

\noindent Table \ref{tab-sampleresponsecounts} and Figure \ref{fig-hist-Jewish-response-rates} summarize the Jewish response rates by state. Specifically, this is the fraction of respondents for each state who were Jewish. There were no Jewish respondents for Arkansas, Louisiana, or West Virginia and North Dakota.

\begin{figure}[H]
    \centering
\includegraphics[width=.5\linewidth]{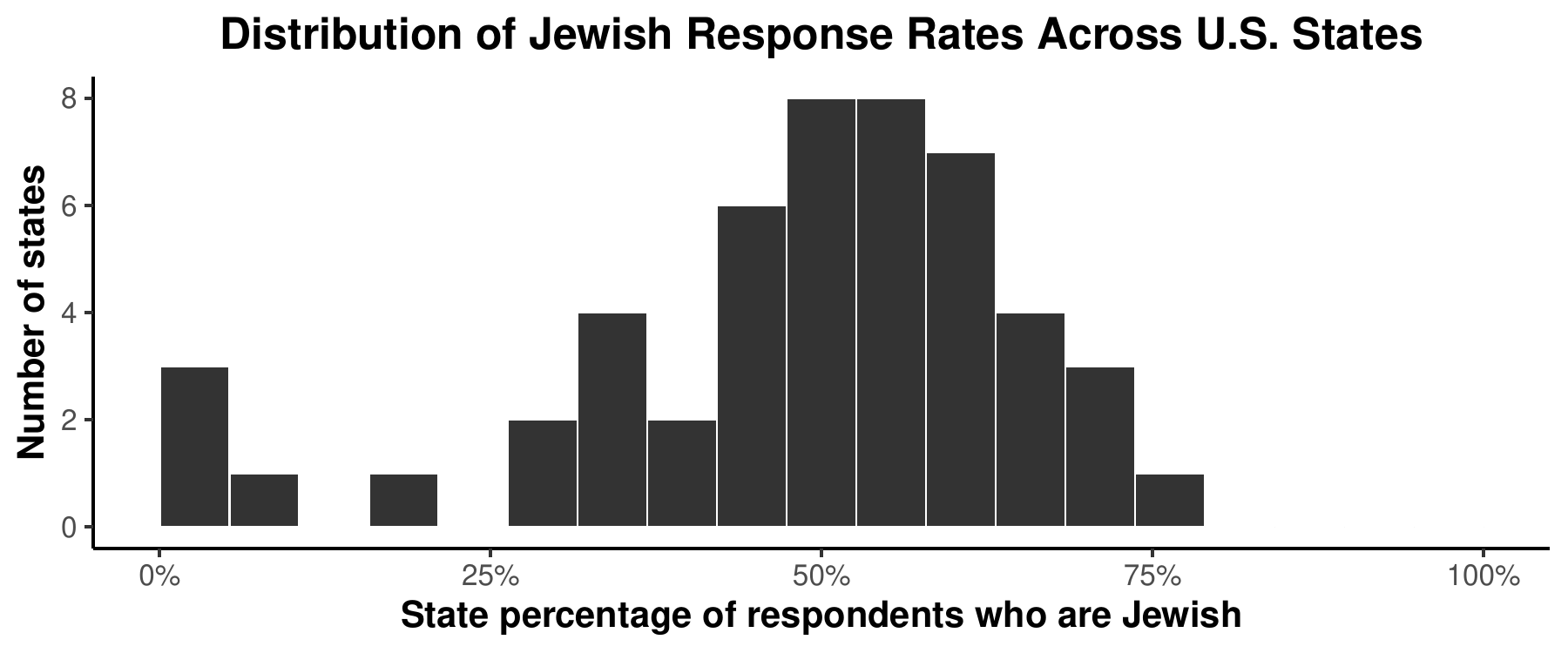}
    \caption{Histogram of Jewish response rate.}
    \label{fig-hist-Jewish-response-rates}
\end{figure}

\begin{table}[H]
\centering
\footnotesize
\caption{Distribution of sampled, respondents, and Jewish respondents by state. Note: Cell percentages represent the percentage of the group who reside in a given state.}
\label{statetab}
\begin{tabular}{lcccc}
\hline 				
State	&	Brandeis AJPP Est. (2020)	&	Sample	&	All respondents	&	Jewish respondents	\\
\hline	
NY	&	20.9	&	16.5	&	12.2	&	13.5	\\
FL	&	10.2	&	8.9	&	5.8	&	5.8	\\
NJ	&	7.5	&	8.9	&	7.3	&	8.9	\\
CA	&	15.4	&	7.1	&	6.3	&	6.3	\\
MA	&	4.4	&	5.3	&	8.2	&	10.2	\\
MD	&	3.0	&	4.9	&	5.9	&	7.3	\\
PA	&	4.6	&	4.6	&	4.8	&	5.1	\\
IL	&	4.4	&	4.1	&	3.5	&	3.2	\\
CT	&	2.0	&	3.4	&	4.3	&	4.1	\\
DC	&	0.7	&	2.2	&	3.4	&	3.8	\\
AZ	&	1.9	&	2.2	&	2.1	&	1.8	\\
VA	&	1.9	&	2.1	&	2.7	&	2.5	\\
MI	&	1.7	&	2.0	&	1.4	&	1.4	\\
OH	&	1.9	&	2.0	&	1.4	&	1.0	\\
CO	&	1.4	&	1.8	&	1.9	&	1.4	\\
TX	&	3.2	&	1.7	&	1.8	&	1.0	\\
GA	&	1.7	&	1.7	&	1.3	&	1.3	\\
WA	&	1.4	&	1.6	&	2.0	&	1.5	\\
NV	&	1.0	&	1.6	&	0.9	&	1.0	\\
NC	&	1.4	&	1.5	&	2.1	&	2.2	\\
OR	&	0.9	&	1.4	&	1.7	&	1.5	\\
MN	&	0.8	&	1.0	&	1.8	&	1.2	\\
WI	&	0.7	&	1.0	&	1.0	&	0.5	\\
MO	&	0.6	&	1.0	&	1.0	&	1.0	\\
RI	&	0.3	&	0.9	&	1.5	&	1.5	\\
All Others	&	6.0	&	10.9	&	14.0	&	11.0	\\
\hline 
\end{tabular}
\end{table}

\begin{table}[H]
\footnotesize 
\caption{Counts of All Sampled, All Respondents, Jewish Respondents by State. For a few respondents, we were not able to determine their state of residence, which is why the second-to-last column shows 1,759 respondents rather than 1,765.}
\label{tab-sampleresponsecounts}
\centering
\begin{tabular}[t]{lrrrrr}
\hline 
State & Sampled & Responded & Jewish & Response rate & Jewish rate\\
\hline 
AK & 87 & 5 & 1 & 0.057 & 0.200\\
AL & 184 & 6 & 2 & 0.033 & 0.333\\
AR & 123 & 5 & 0 & 0.041 & 0\\
AZ & 1067 & 36 & 18 & 0.034 & 0.500\\
CA & 3500 & 110 & 63 & 0.031 & 0.573\\
CO & 892 & 33 & 14 & 0.037 & 0.424\\
CT & 1675 & 75 & 41 & 0.045 & 0.547\\
DC & 1079 & 59 & 38 & 0.055 & 0.644\\
DE & 450 & 21 & 13 & 0.047 & 0.619\\
FL & 4427 & 102 & 58 & 0.023 & 0.569\\
GA & 822 & 23 & 13 & 0.028 & 0.565\\
HI & 145 & 8 & 5 & 0.055 & 0.625\\
IA & 168 & 10 & 1 & 0.060 & 0.100\\
ID & 118 & 8 & 4 & 0.068 & 0.500\\
IL & 2013 & 62 & 32 & 0.031 & 0.516\\
IN & 302 & 8 & 6 & 0.026 & 0.750\\
KS & 275 & 10 & 3 & 0.036 & 0.300\\
KY & 256 & 11 & 4 & 0.043 & 0.364\\
LA & 213 & 6 & 0 & 0.028 & 0\\
MA & 2635 & 144 & 102 & 0.055 & 0.708\\
MD & 2419 & 104 & 73 & 0.043 & 0.702\\
ME & 363 & 22 & 12 & 0.061 & 0.545\\
MI & 991 & 24 & 14 & 0.024 & 0.583\\
MN & 514 & 31 & 12 & 0.060 & 0.387\\
MO & 472 & 18 & 10 & 0.038 & 0.556\\
MS & 79 & 2 & 1 & 0.025 & 0.500\\
MT & 108 & 5 & 2 & 0.046 & 0.400\\
NC & 716 & 37 & 22 & 0.052 & 0.595\\
ND & 15 & 0 & 0 & 0 & NA\\
NE & 166 & 6 & 3 & 0.036 & 0.500\\
NH & 383 & 28 & 13 & 0.073 & 0.464\\
NJ & 4413 & 129 & 89 & 0.029 & 0.690\\
NM & 258 & 17 & 8 & 0.066 & 0.471\\
NV & 798 & 16 & 10 & 0.020 & 0.625\\
NY & 8194 & 215 & 136 & 0.026 & 0.633\\
OH & 976 & 24 & 11 & 0.025 & 0.458\\
OK & 123 & 3 & 2 & 0.024 & 0.667\\
OR & 680 & 29 & 15 & 0.043 & 0.517\\
PA & 2274 & 85 & 51 & 0.037 & 0.600\\
RI & 457 & 26 & 15 & 0.057 & 0.577\\
SC & 396 & 14 & 9 & 0.035 & 0.643\\
SD & 34 & 4 & 2 & 0.118 & 0.500\\
TN & 394 & 6 & 3 & 0.015 & 0.500\\
TX & 844 & 31 & 10 & 0.037 & 0.323\\
UT & 155 & 8 & 5 & 0.052 & 0.625\\
VA & 1025 & 47 & 25 & 0.046 & 0.532\\
VT & 440 & 26 & 12 & 0.059 & 0.462\\
WA & 804 & 35 & 15 & 0.044 & 0.429\\
WI & 480 & 18 & 5 & 0.038 & 0.278\\
WV & 80 & 4 & 0 & 0.050 & 0\\
WY & 64 & 3 & 1 & 0.047 & 0.333\\
\hline 
\textbf{Total} & \textbf{49546} & \textbf{1759} & \textbf{1004} &  &  \\
\textbf{Mean} & &  & & \textbf{0.042} & \textbf{0.509} \\
\hline 
\end{tabular}
\end{table}

\subsubsection{Robustness check: comparing survey weighting options}\label{app-weights}

In this section, we augment the plots in \ref{sec-pew-comparison} 
by considering the impact of different survey weighting options on the estimated mean response for a number of survey questions. As in the main paper, we compare these to the results of the 2020 Pew survey.

The figures below show the results for the following weighting options:
\begin{enumerate}
    \item \textbf{Unweighted} - simply the proportion of respondents that gave each answer to the survey question
    \item \textbf{Raked to Pew} - raking to the Pew sample marginal distributions of age, gender, race, party affiliation, and state of residence
    \item \textbf{IPW} - weight using untrimmed inverse propensity scores
    \item \textbf{IPW trimmed} - weight using inverse propensity scores that have been trimmed by the given numbers. For example 10/90 means that we calculate the .10 and .90 percentile of the weights and then set any weights above and below these numbers to the .90 and .10 percentile respectively. 
    \item \textbf{Rake to Pew after trimmed IPW} - raking as above but with the algorithm initialized to the 10/90 trimmed IPW estimates as described in \citep{mercer2018weighting}.
\end{enumerate}

Overall, we see similar results to the main paper, with estimates mostly very similar to those in Pew across weighting methods. For IPW weighting, the trimming is important as the IPW without trimming has very large confidence intervals, in indication of instability from extreme weights.

\begin{figure}[H]
    \centering   
            \caption{What denomination of Judaism do you consider yourself?}
\includegraphics[width=1\linewidth]{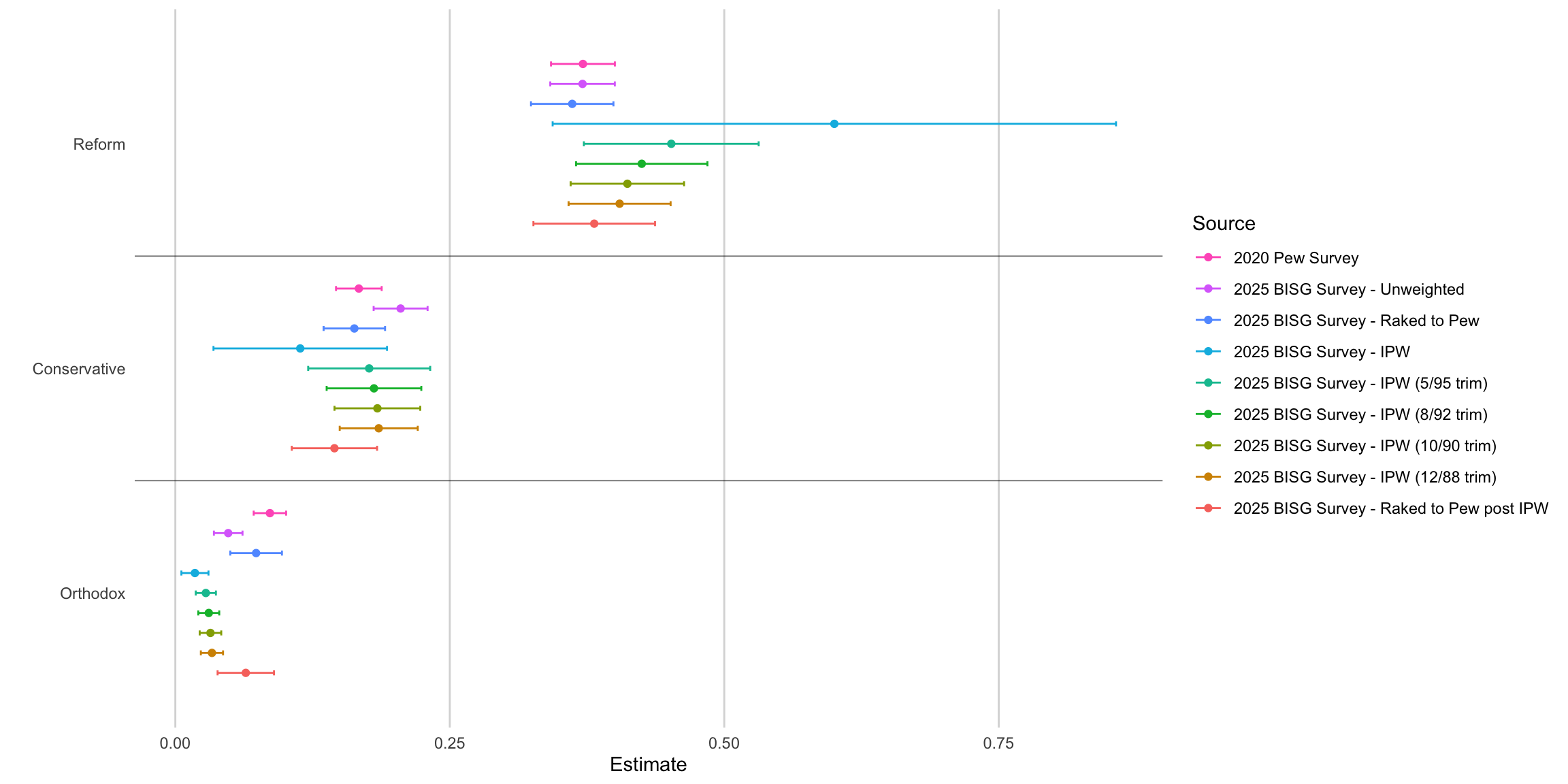}
    \label{fig-denomination}
\end{figure}

\begin{figure}[H]
    \centering
                \caption{How many of your friends are Jewish?}

    \includegraphics[width=1\linewidth]{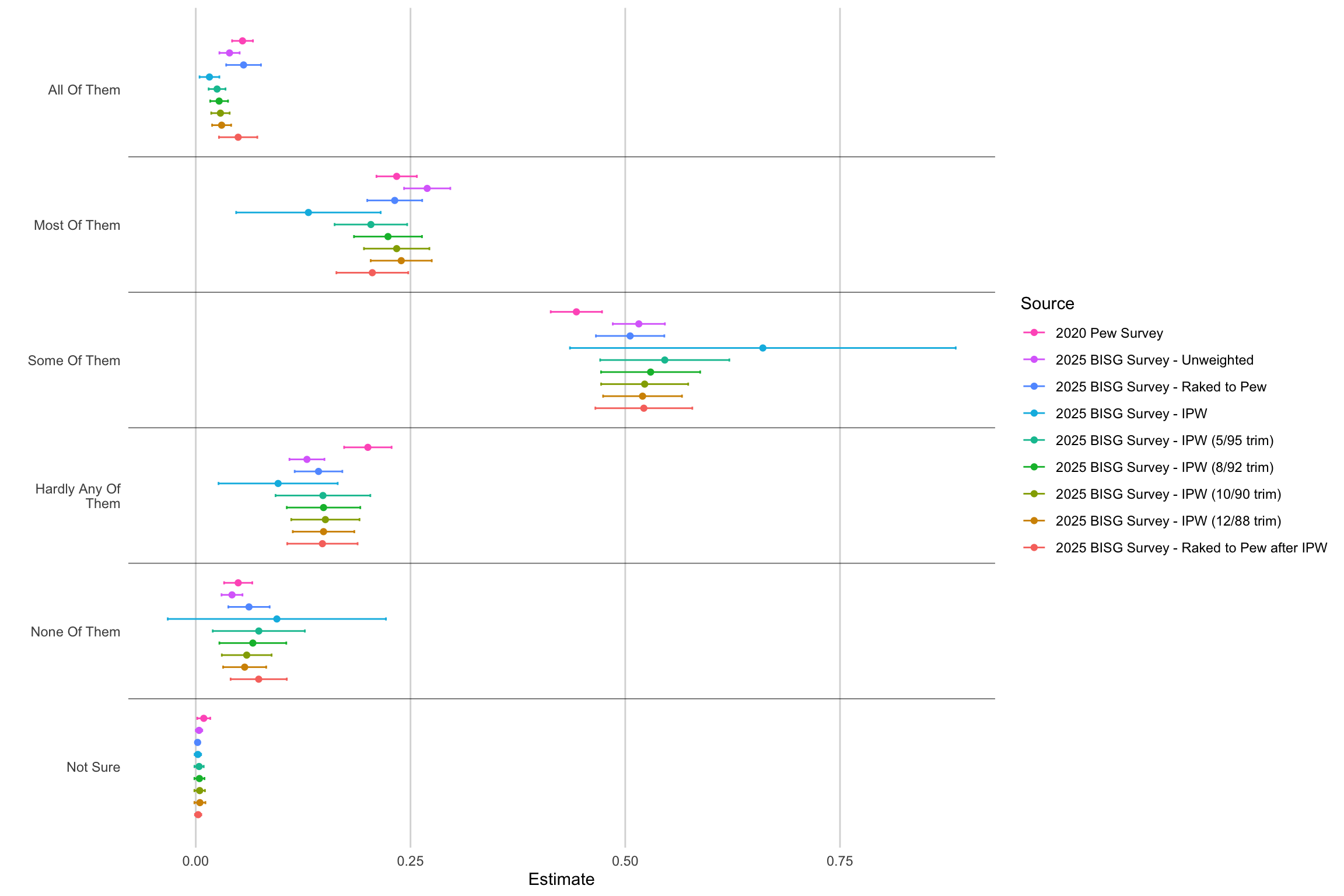}
    \label{fig-friends}

\end{figure}

\begin{figure}[H]
    \centering
                \caption{Did you have a bar/bat mitzvah?}

    \includegraphics[width=1\linewidth]{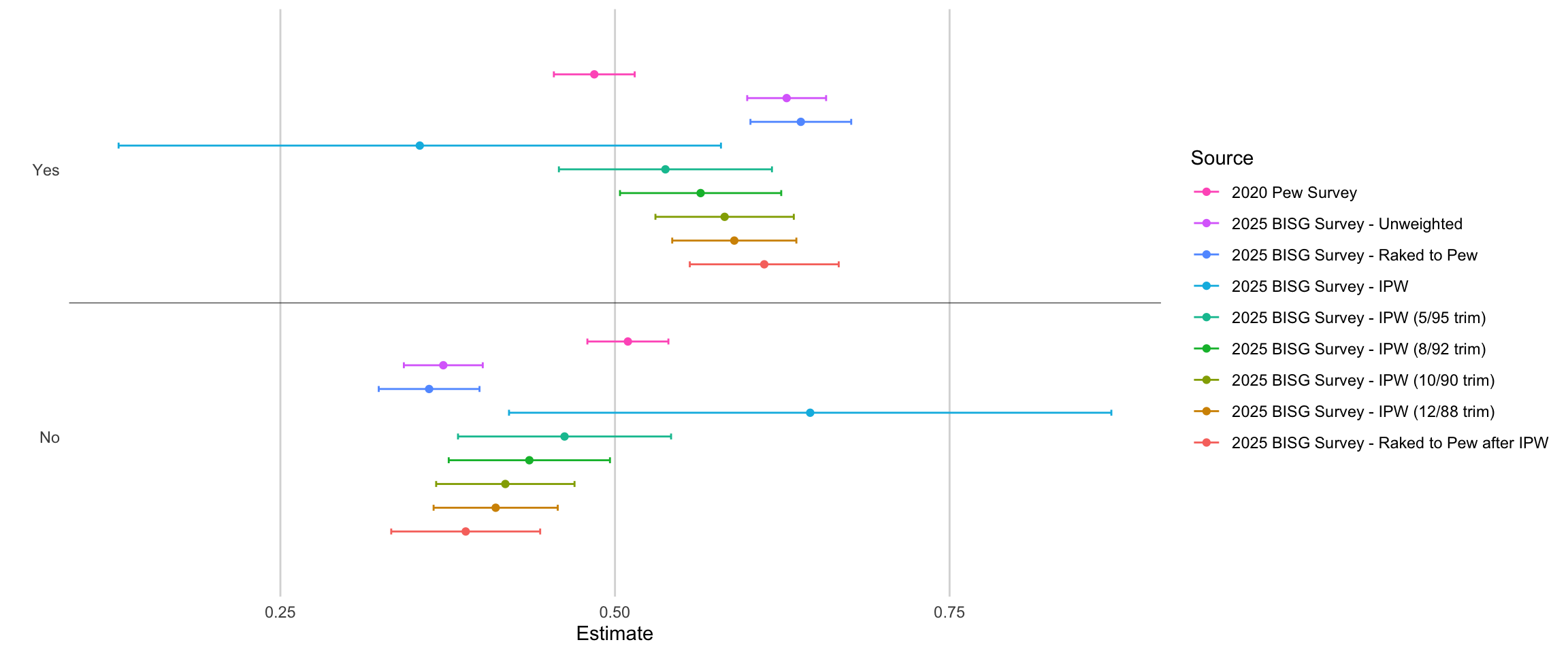}
    \label{fig-mitzvah}

\end{figure}

\begin{figure}[H]
    \centering
                \caption{Do you own a siddur or prayer book?}

    \includegraphics[width=1\linewidth]{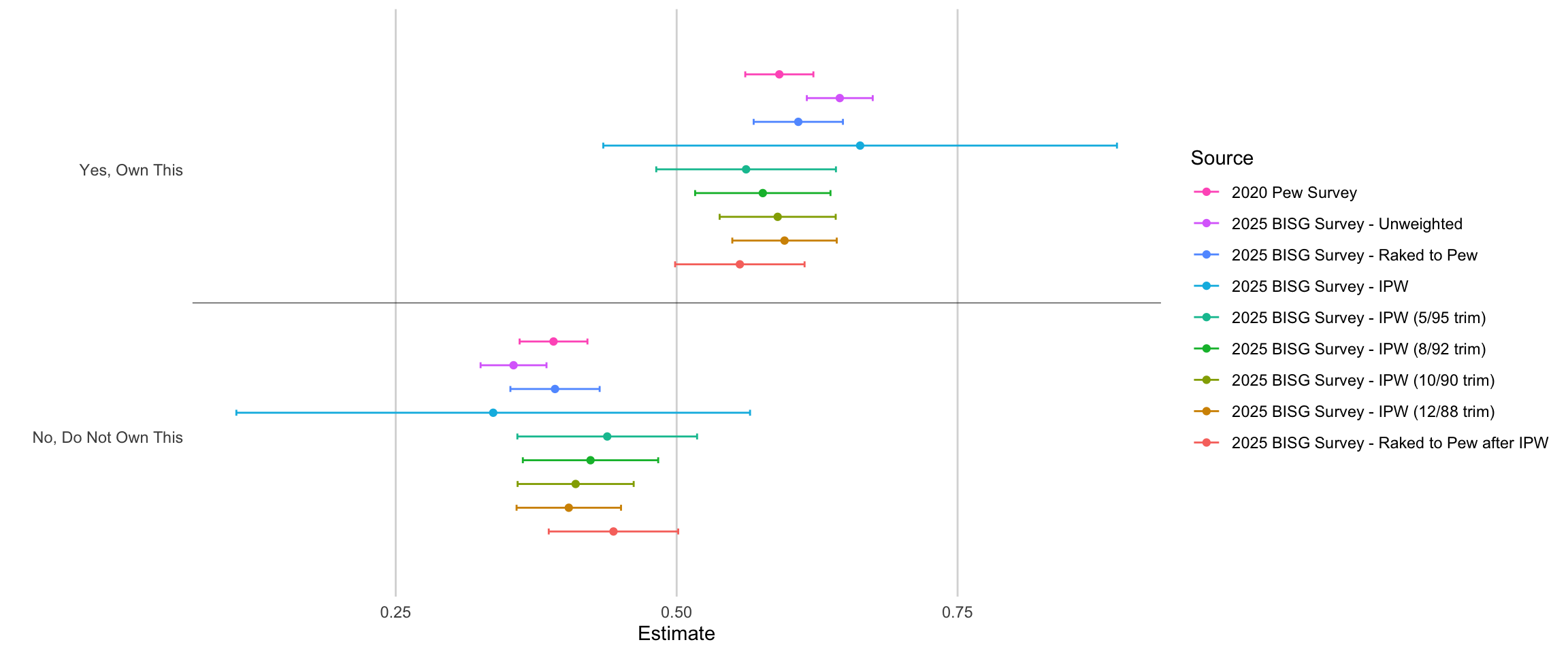}
    \label{fig-siddur}

\end{figure}

\begin{figure}[H]
    \centering
                \caption{Do you own a seder plate?}

    \includegraphics[width=1\linewidth]{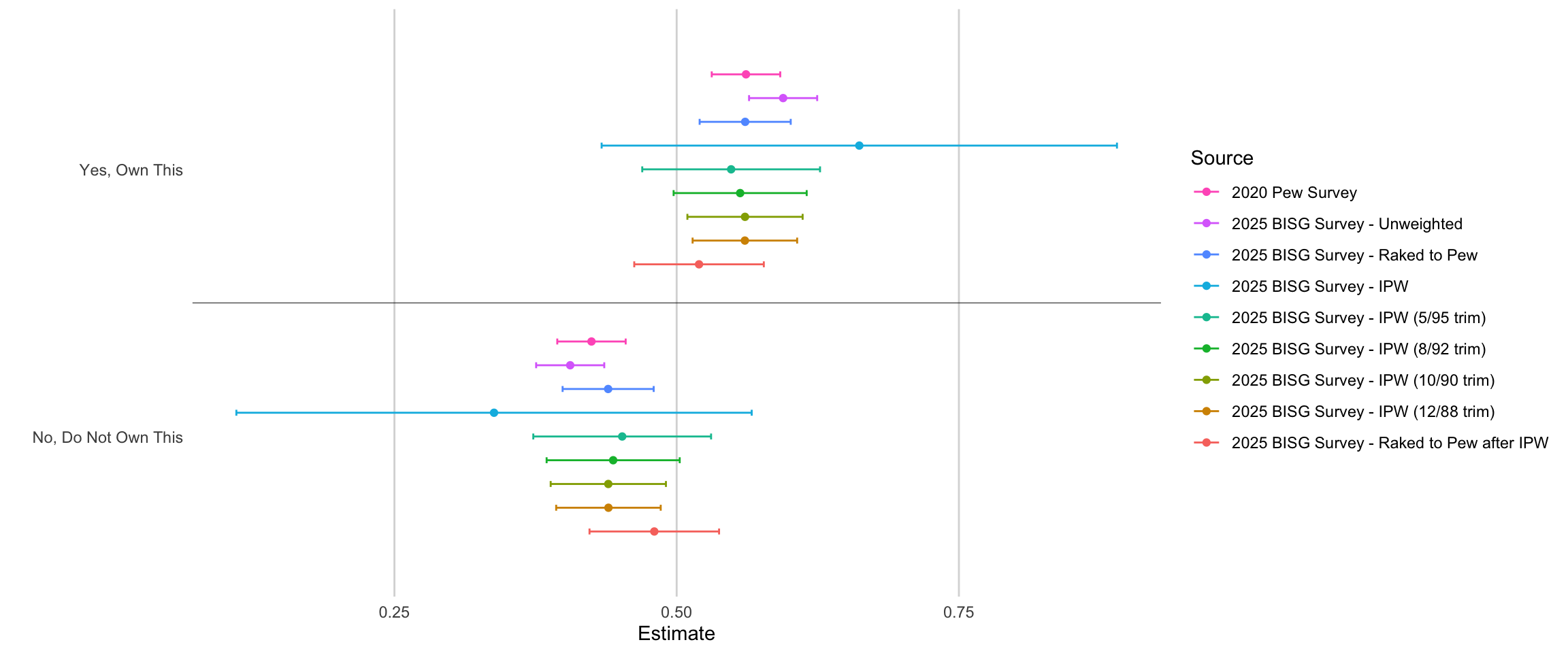}
    \label{fig-seder}

\end{figure}

\begin{figure}[H]
    \centering
                \caption{Do you own a mezuzah?}

    \includegraphics[width=1\linewidth]{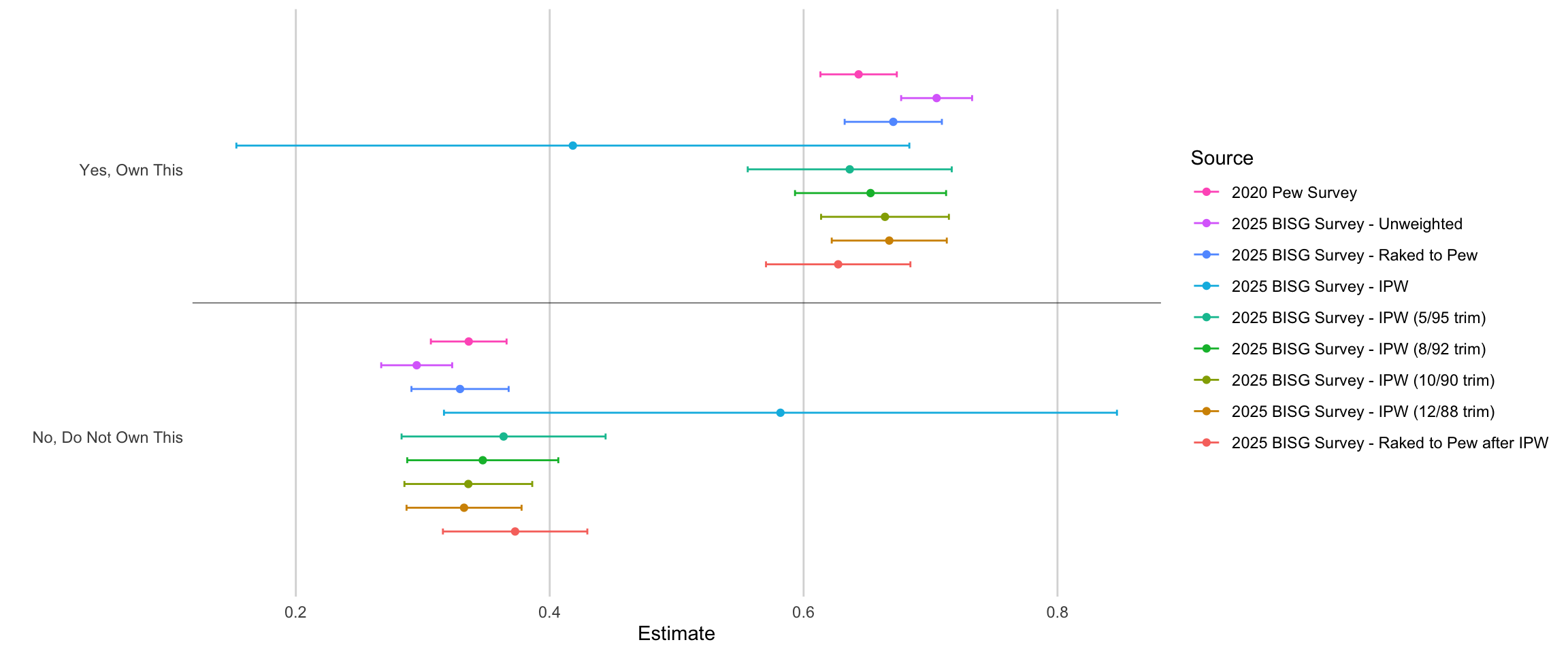}
    \label{fig-mezuzah}

\end{figure}

\begin{figure}[H]
    \centering
                \caption{Do you own a menorah?}

    \includegraphics[width=1\linewidth]{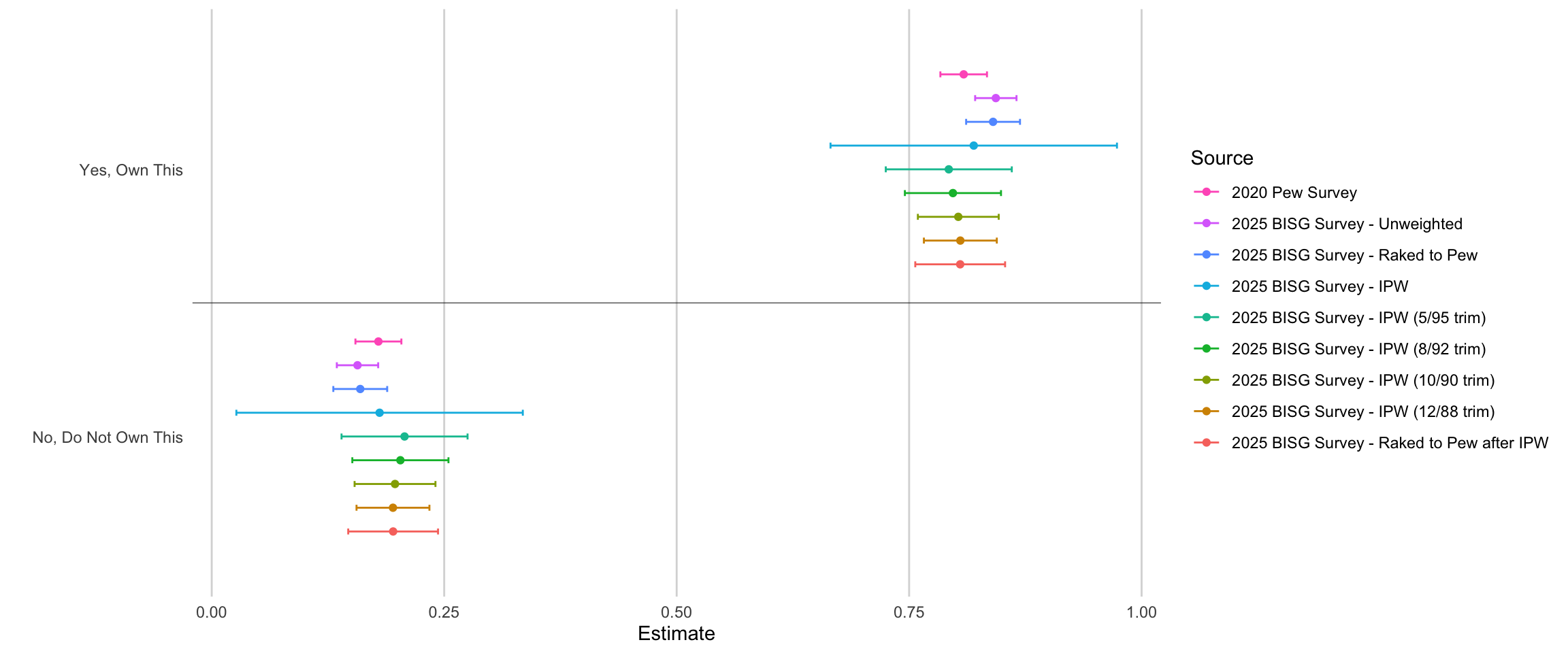}
    \label{fig-menorah}

\end{figure}

\begin{figure}[H]
    \centering
                \caption{Last Passover, did you hold or attend a seder?}

    \includegraphics[width=1\linewidth]{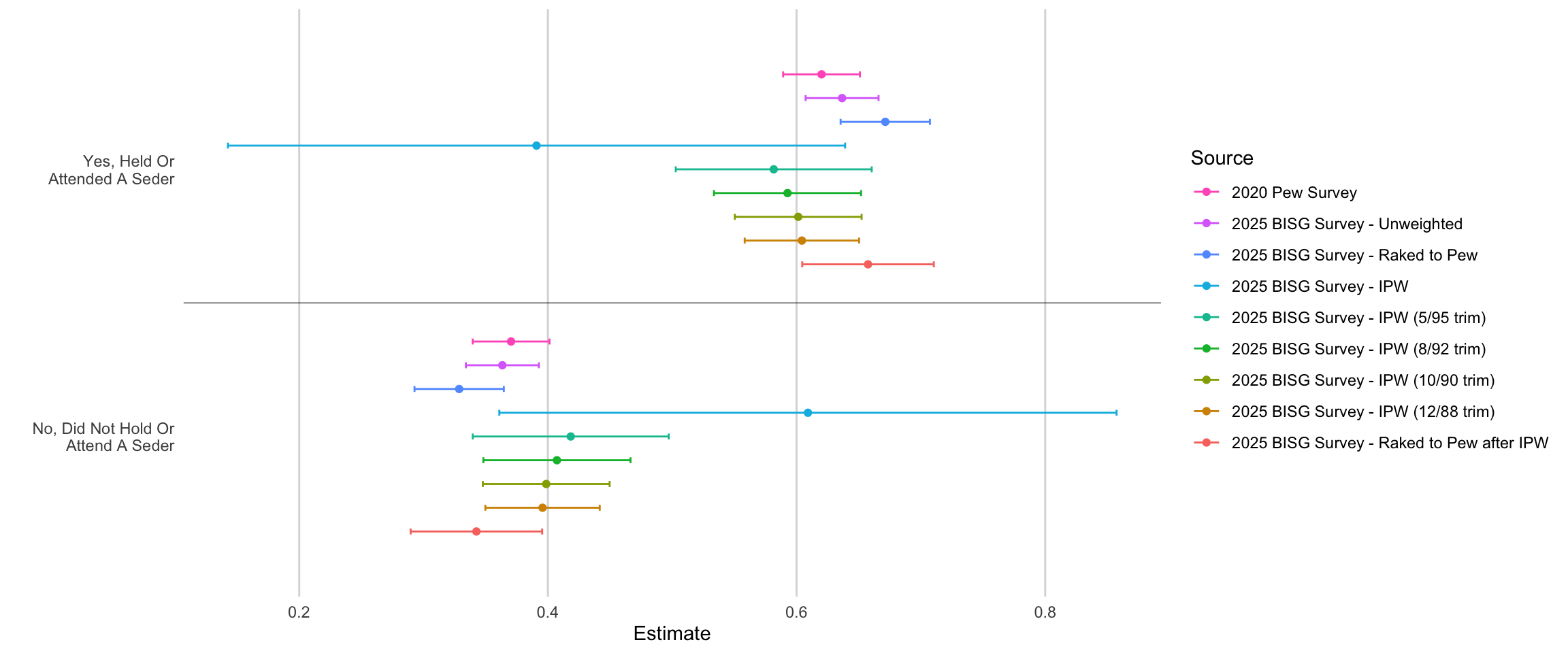}
    \label{fig-seder-1}

\end{figure}

\begin{figure}[H]
    \centering
                \caption{Do you keep kosher?}

    \includegraphics[width=1\linewidth]{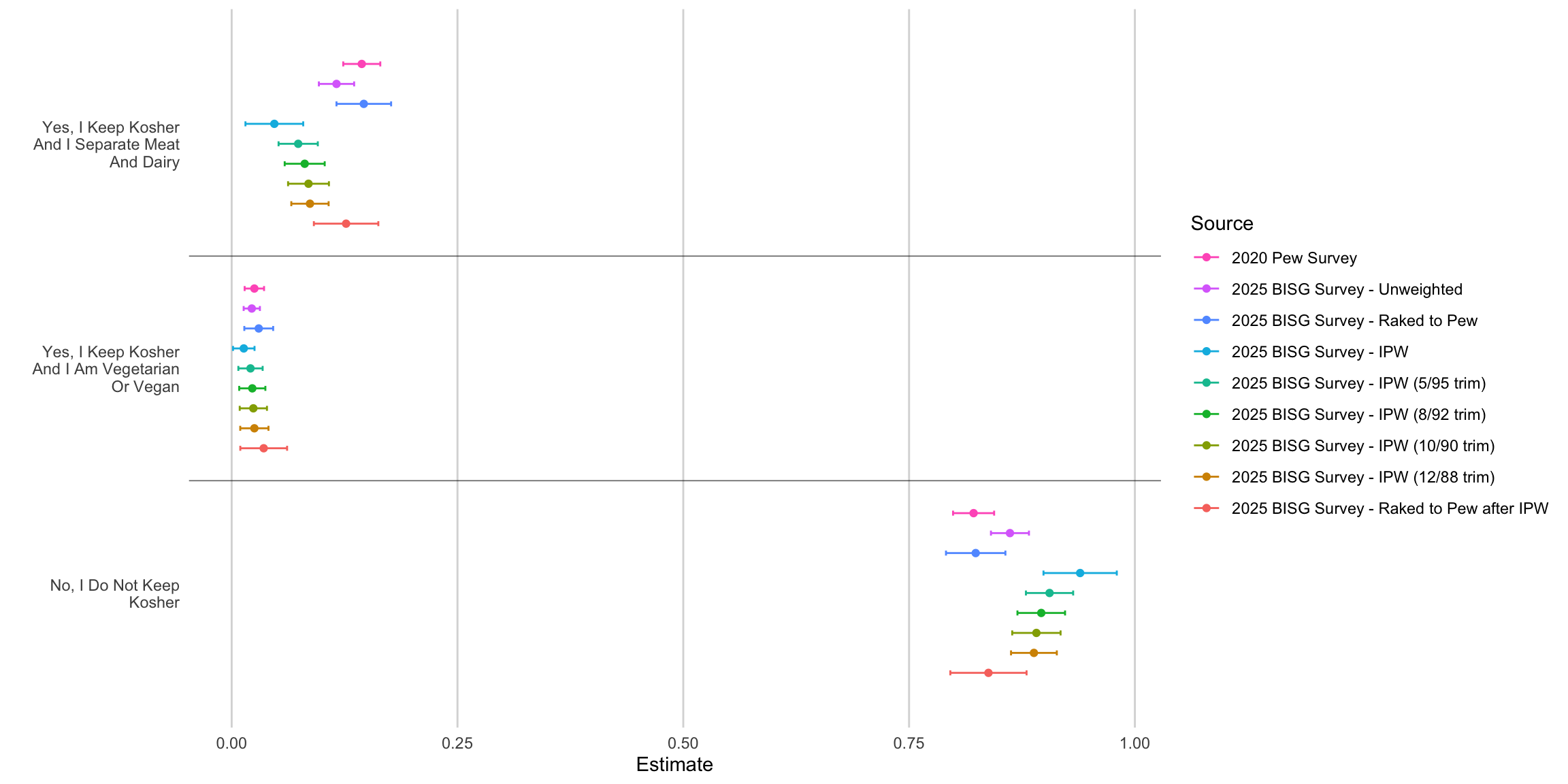}
    \label{fig-kosher}
\end{figure}

\begin{figure}[H]
    \centering
                \caption{Is anyone in your household a member of a synagogue?}

    \includegraphics[width=1\linewidth]{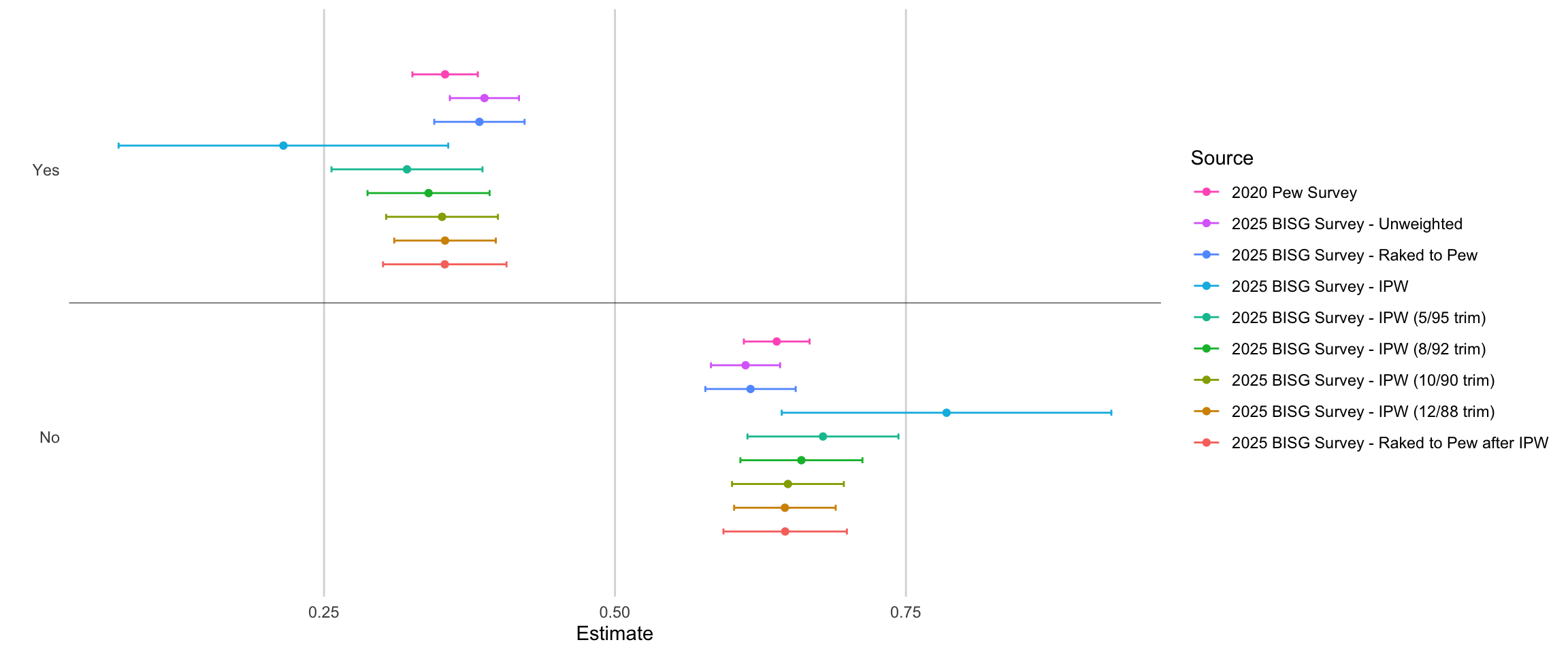}
    \label{fig-memsyn}

\end{figure}

\begin{figure}[H]
    \centering
                \caption{How often do you attend Jewish services?}

    \includegraphics[width=1\linewidth]{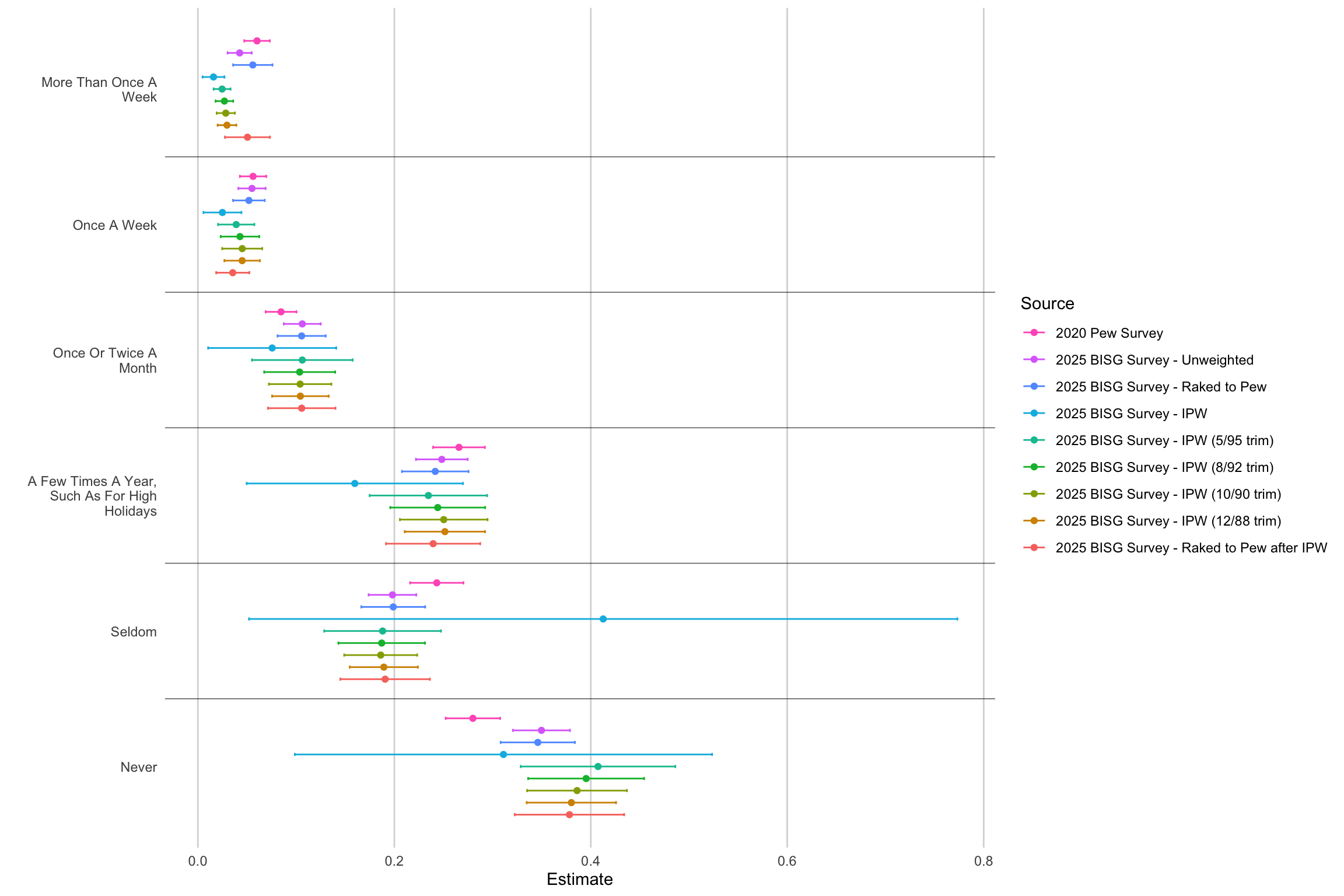}
    \label{fig-services}

\end{figure}

\subsubsection{Comparison to CES Study}

The Cooperative Election Study (CES) is a large collaborative academic survey administered by YouGov and is designed to produce a nationally representative sample of U.S. adults for public opinion research. We compare our results to the 2025 CES Common Content survey. The survey had 17,000 respondents in total, including 404 who identified as Jewish. Additional information on the survey design, sampling methodology, weighting procedures, and other details can be found on the \href{https://tischcollege.tufts.edu/research-faculty/research-centers/cooperative-election-study}{CES website}.

\begin{figure}[H]
    \centering
                \caption{Comparison with Jewish Respondents to Cooperative Election Study}

    \includegraphics[width=1\linewidth]{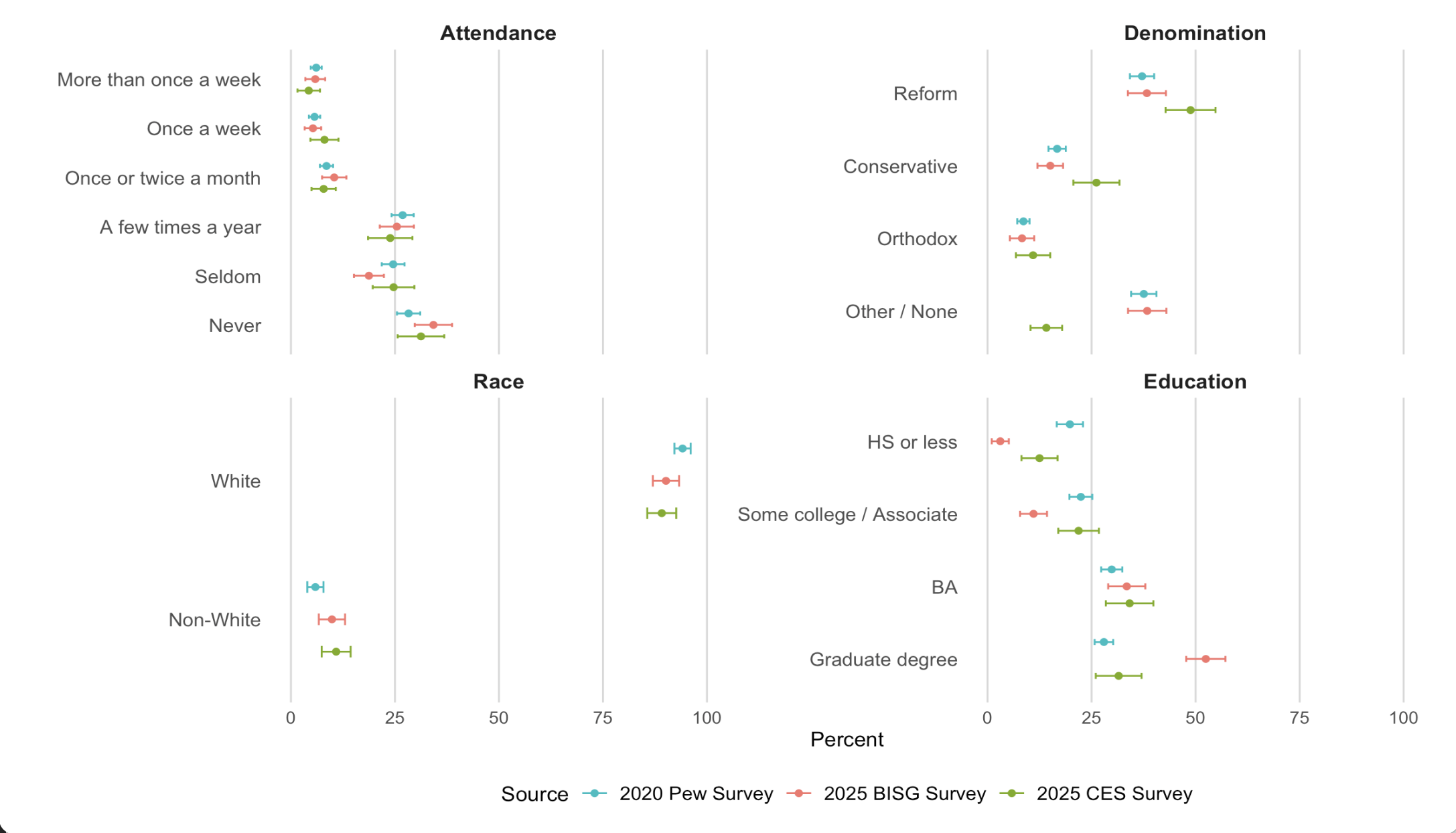}
    \label{fig-services}
\end{figure}

\end{document}